\documentclass[10pt]{article}
\usepackage{geometry}
\usepackage{graphicx}
\graphicspath{graphics/}
\usepackage{amsfonts}
\usepackage{mathrsfs}
\usepackage{float}
\usepackage{bm}
\usepackage{nicematrix}
\usepackage{mathdots}
\usepackage{mathrsfs}
\usepackage{array}
\usepackage{tikz}
\usepackage{stmaryrd}
\usepackage{verbatim}
\usepackage{tikz-cd}
\usepackage{booktabs}
\usepackage{enumitem}
\usepackage{amsmath}
\usepackage{amssymb}
\usepackage{amsthm}

\usepackage{hyperref}

\theoremstyle{plain}
\newtheorem{theorem}{Theorem}[section]
\newtheorem{proposition}[theorem]{Proposition}
\newtheorem{lemma}[theorem]{Lemma}
\newtheorem{corollary}[theorem]{Corollary}

\theoremstyle{definition}
\newtheorem{definition}[theorem]{Definition}

\newcommand{\half}{\frac{1}{2}}

\theoremstyle{remark}
\newtheorem{remark}[theorem]{Remark}
\newtheorem*{claim}{Claim}
\newtheorem{example}[theorem]{Example}
\allowdisplaybreaks

\title{Quasi-polynomiality and $N$-point functions of single connected leaky completed Hurwitz numbers}
\author{Chongyu Wang, Chenglang Yang}
\date{}

\begin{document}

\maketitle

\begin{abstract}
    In this paper, we study the structures of single connected $k$-leaky $(r+1)$-completed Hurwitz numbers. We first prove that,
    for fixed genus, after extracting an explicit product of Pochhammer type factors, the stable connected leaky Hurwitz numbers for partition $\mu$ are polynomials in the quotients $[\mu_i]$ modulo $k+r$. We then give a closed formula for the generating function of single connected leaky Hurwitz numbers,
    which can be used to study the genus dependence of leaky Hurwitz numbers.
\end{abstract}

\section{Introduction}

Hurwitz numbers of a nonsingular curve $X$ enumerate covers of $X$ with specified ramification data, which have a long history dating back to Hurwitz \cite{Hurw91,Hurw01}.
They have expressions in terms of the class algebra of the symmetric group, hence are related to representation theory. 
In the past thirty years,
much important progress has been made by mathematicians, which establishes deep relations between various Hurwitz numbers and combinatorics, algebraic geometry, tropical geometry, integrable systems and mathematical physics.
See \cite{CJM,DYZ,ELSV,GJV,KL07,O00,OP06} and references therein.

The famous ELSV formula \cite{ELSV} relates the single connected Hurwitz numbers to the intersection theory of moduli spaces of curves,
which proves the quasi-polynomiality of these Hurwitz numbers (see also \cite{DKOSS,DLPS}). 
The integrability of generating functions of Hurwitz numbers is used to study their relations to Gromov--Witten invariants \cite{O00,OP06} and asymptotic behaviours \cite{DHR,DYZ,Y25}.
The tropical method has also played an important role in studying combinatorial properties of Hurwitz numbers and mirror symmetry \cite{BBBM17,BGM22,CJM,CJMR,CMR,HL}.

Recently, Cavalieri, Markwig and Ranganathan \cite{CMR} made significant progress in studying Hurwitz numbers in terms of logarithmic intersection theory on pluricanonical double ramification cycles.
This motivated them to define the $k$-pluricanonical double Hurwitz numbers and $k$-leaky double Hurwitz numbers.
They studied the tropical interpretation, piecewise polynomiality and Fock space formalism of these numbers.
Later, the authors of \cite{CMSa} generalized the $k$-leaky double Hurwitz numbers to $k$-leaky double Hurwitz descendants and studied several properties of these numbers (see also \cite{CMSb}).
The completed-cycle version of leaky double Hurwitz numbers was introduced in \cite{AKL} using the semi-infinite wedge formalism, and the authors also studied the piecewise polynomiality and the chamber structure.
More recently, Hahn and Kramer \cite{HK} proved the topological recursion for fixed-leakiness Hurwitz numbers associated with a broad class of cut-and-join operators.
In this paper, we focus on the single connected $k$-leaky $(r+1)$-completed Hurwitz numbers,
proving their quasi-polynomiality and deriving a compact $N$-point formula for their generating series.

The first main result of this paper is the following quasi-polynomiality theorem for the stable single connected Hurwitz numbers, proved by direct bosonic computations.
We also refer the reader to the recent work \cite{HK}, whose topological recursion framework may provide a broader structural perspective on the quasi-polynomiality established here.

\begin{theorem}\label{thm:main thm 1}
    The single connected leaky Hurwitz numbers $h_{g,\mu}^{k,r,\circ}$ are quasi-polynomials in $\mu=(\mu_1,\cdots,\mu_N)$ in the stable $2g-2+N>0$ case. More specifically, let $s=\frac{2g-2+N+|\mu|}{k+r}$, when $N>1$,
    \[h_{g,\mu}^{k,r,\circ}=\frac{s!}{|\mathrm{Aut}(\mu)|}\prod_{i=1}^N\frac{((r+1)\mu_i-k\sigma)_{[\mu_i]-\sigma}}{[\mu_i]!((r+1)!)^{[\mu_i]}}P_{\langle \mu_1\rangle,\cdots,\langle \mu_N\rangle}([\mu_1],\cdots,[\mu_N]),\]
    where $|\mathrm{Aut}(\mu)|=\prod_{i=1}^\infty m_i!$ for $\mu=(1^{m_1}2^{m_2}3^{m_3}\cdots)$, $\sigma=2g-2+2N$, $\big((r+1)\mu_i-k\sigma\big)_{[\mu_i]-\sigma}$ is the $k$-Pochhammer symbol given in Definition \ref{def: Pochhammer}, $[n]$ and $\langle n\rangle$ are the quotient and remainder modulo $k+r$,
and $P_{\langle \mu_1\rangle,\cdots,\langle \mu_N\rangle}$ is a polynomial depending on parameters $\langle \mu_1\rangle,\cdots,\langle \mu_N\rangle$.

When $N=1$, $h^{k,r,\circ}_{g,(n)}$ is nonzero only if $n\equiv1-2g\;(\mathrm{mod}\;(k+r))$ and $r[n]_k\geq2g-1+\langle n\rangle_k$. If $r[n]_k>2g-1+\langle n\rangle_k$, $h^{k,r,\circ}_{g,(n)}$ is of the form
        \[\frac{\big((r+1)n-2kg\big)_{s-2g}}{(n-ks)((r+1)!)^s}P(n),\]
        with $P(n)$ a polynomial in $n$.
    
\end{theorem}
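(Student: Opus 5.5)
The plan is to work entirely in the semi-infinite wedge (bosonic) formalism, in which the single leaky completed Hurwitz numbers are vacuum expectations. Up to normalisation, $h^{k,r}_{g,\mu}$ is the coefficient extracted from
\[
\langle e^{\alpha_1}\,\mathcal{F}^{s}\,\prod_i \alpha_{-\mu_i}\rangle ,
\]
where $\mathcal{F}$ is the leaky $(r+1)$-completed cut-and-join operator. I would conjugate by $e^{\alpha_1}$, as in the standard Okounkov--Pandharipande and Johnson-style proofs of ELSV quasi-polynomiality. This turns $\mathcal{F}$ into an operator whose action on $\alpha_{-\mu_i}$ is given by an explicit series in the bosons.

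The key structural fact to establish first is that each application of the conjugated $\mathcal{F}$ to a mode $\alpha_{-m}$ lowers the energy by $k+r$ in the leading, disconnected-type term. This is what makes $[\mu_i]$ and $\langle\mu_i\rangle$ modulo $k+r$ the natural variables. The $s$ applications of $\mathcal{F}$ are then distributed among the $N$ legs, and leg $i$ absorbs roughly $[\mu_i]$ of them. Each such application contributes a linear factor of the form $(r+1)\mu_i - k\cdot(\text{something}) - (k+r)j$. Multiplied together, these give the $k$-Pochhammer symbol $((r+1)\mu_i-k\sigma)_{[\mu_i]-\sigma}$, and the ordering multinomials give $s!/\prod[\mu_i]!$ together with the $((r+1)!)^{[\mu_i]}$ from the completed-cycle leading coefficient.

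The connected correlator is obtained by inclusion--exclusion (a logarithm) over set partitions of the legs. The genus is fixed, so only boundedly many "interaction" steps, at most about $\sigma=2g-2+2N$, can connect legs or contribute subleading terms of $\mathcal{F}$. Consequently, after the universal Pochhammer prefactor is extracted, each leg carries only a residual finite product of at most $\sigma$ linear factors. The remaining sum over positions of the interaction steps is a sum of polynomials in $[\mu_i]$ over ranges with polynomial endpoints. By Faulhaber, this yields a polynomial $P_{\langle\mu\rangle}([\mu])$, which depends on the residues because the boundary terms do.

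I expect the main obstacle to be showing that the connected part really factors as (common Pochhammer prefactor)$\times$(polynomial), with the shift exactly $k\sigma$ for all $i$ and with no leftover rational or non-polynomial dependence. The disconnected contributions a priori carry products of different, shorter Pochhammers. My first attempt would be to write every term with the common shortest Pochhammer $((r+1)\mu_i-k\sigma)_{[\mu_i]-\sigma}$ factored out. This is possible because $\sigma$ bounds how many factors any genus-$g$ connected graph can remove from a leg. I would then check that the quotients are finite products of linear forms, hence polynomial.

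The case $N=1$ is handled separately, since $\sigma=2g$ and the above bookkeeping degenerates. First, energy conservation modulo $k+r$ forces $n\equiv 1-2g$. Second, the nonvanishing range $r[n]_k\ge 2g-1+\langle n\rangle_k$ comes from requiring enough steps to absorb $n$. Third, the single-leg correlator is computed directly as a one-variable sum, and the factor $1/(n-ks)$ arises from the standard $\frac{1}{n}$-type normalisation of a single cycle, shifted by leakiness. In the strict inequality case the remaining sum is again polynomial in $n$.
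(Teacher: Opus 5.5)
Your framework (vacuum expectations, conjugating $\alpha_{-\mu_i}$ by $e^{\alpha_1}e^{\hbar\mathcal{H}_{k,r}}$, inclusion--exclusion for the connected part) is the right one. The step that is supposed to produce polynomiality, however, fails. After extracting $\frac{s!}{\prod_i[\mu_i]!}\prod_i((r+1)\mu_i-k\sigma)_{[\mu_i]-\sigma}$, the individual terms are \emph{not} finite products of linear forms in the $[\mu_i]$.

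Suppose leg $i$ absorbs $[\mu_i]+a_i$ of the $\mathcal{H}_{k,r}$-steps. Then the ordering multinomial is $s!/\prod_i([\mu_i]+a_i)!$. Matching the power of $\hbar$ forces $\sum_i a_i=(\tau+\sum_i\langle\mu_i\rangle)/(k+r)\geq 1$ in the stable range. Hence in \emph{every} term some leg has $a_i>0$. That leg leaves a factor $[\mu_i]!/([\mu_i]+a_i)!=1/\big(([\mu_i]+1)\cdots([\mu_i]+a_i)\big)$, with genuine poles at negative integers; this is the factor $1/([n]+1)_{\overline{a}}$ in the paper's $\mathcal{A}$-operator.

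There is a second non-polynomial source. Normalize $\mathrm{ad}^i_{\mathcal{H}_{k,r}}(\alpha_{-n})$ by its leading coefficient $((r+1)n)_i/((r+1)!)^i$, which is a $k$-Pochhammer of step $k$, not the step-$(k+r)$ factors you describe. The lower-order coefficients are then the $g_j(i,n)$. These are polynomial in $i$ but rational in $n$, with simple poles at $n=kt/(r+1)$.

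So polynomiality of the connected number is a cancellation that only occurs after inclusion--exclusion. ``Shortest Pochhammer'' bookkeeping plus Faulhaber cannot detect it. Also, the growth in $[\mu_1]$ does not enter through a sum over positions of steps. It enters through $\frac{\nabla^t}{t!}$ with $t$ linear in $[\mu_1]$, and is controlled by Lemma \ref{lem: difference polynomial}.

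The missing idea is the paper's residue argument, which proceeds in steps.
\begin{enumerate}
\item Theorem \ref{thm: polynomiality of g} shows that $g_j(i,n)$ has at most simple poles at $n=kt/(r+1)$, $0<t<j$. Only coefficients with $j\le\tau<\sigma$ occur, so the factor $((r+1)\mu_1)_\sigma$ kept inside $\mathcal{A}$ kills these poles and the $1/\mu_1$.
\item For fixed $[\mu_2],\dots,[\mu_N]$, the coefficient is then rational in $[\mu_1]$. Its degree is bounded independently of the other variables, and its poles lie only at negative integers (Proposition \ref{prop: disc corre of A operator}).
\item The paper computes $\mathrm{Res}_{[\mu_1]=-m}$ of the normalized operator. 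It is a multiple of $e^{\alpha_1}e^{\hbar\mathcal{H}_{k,r}}\alpha_{-n_-}e^{-\hbar\mathcal{H}_{k,r}}e^{-\alpha_1}$ with $n_-=\langle\mu_1\rangle-(k+r)m<0$.
\item Commuting the annihilator $\alpha_{-n_-}$ to the right shows the residues cancel across set partitions. Partitions with the block $\{1\}$ contribute nothing. Partitions where $1$ lies in a block of size at least $3$ cancel those containing a block $\{1,j\}$. For $N=2$ the surviving term has the wrong $\hbar$-degree because $\tau>0$.
\item Add the vanishing of the scalar terms in connected correlators for $N>1$ and the symmetry of the connected correlator. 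This yields joint polynomiality.
\end{enumerate}

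Your $N=1$ sketch has the same defect. The $1/(n-ks)$ comes from $\frac{1}{(n-ks)!}\nabla^{n-ks-1}$. The remaining $\frac{\nabla^{n-ks-1}}{(n-ks-1)!}G_n^{(s)}(-\frac12)$ is only rational in $n$, with poles at $n=kt/(r+1)$, $0<t<2g$. One needs the split $((r+1)n)_s=((r+1)n-2kg)_{s-2g}((r+1)n)_{2g}$ with $g>0$ to cancel these poles and the $1/n$. This is exactly why the Pochhammer in the statement starts at $(r+1)n-2kg$.
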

\noindent See Remark \ref{rem:criticalcase} for the cases of $N=1$ and $r[n]_k=2g-1+\langle n\rangle_k$ .

The second main result is the following closed formula for the connected leaky Hurwitz numbers.
This formula is based on the KP integrability \cite{DJM,Z15} and provides a sufficient tool to compute the connected leaky Hurwitz numbers and study their properties.
\begin{theorem}\label{thm: main thm 2}
For a given partition $\mu=(\mu_1,\cdots,\mu_N)$, when $N=1$, we have
\begin{align*}
    h^{k,r,\circ}_{g,(n)}
    =\frac{s!}{n}
     \cdot \sum_{ik+t\geq a+\frac{1}{2}, i+p=s \atop t+u=(rn-k(2g-1))/(k+r)}
      \frac{(-1)^{p+u}}{i!p!t!u!((r+1)!)^{i+p}}(-a-\frac{k}{2}+t+ik)_i^{r+1}(-a-\frac{k}{2}-u)_p^{r+1},
\end{align*}
where the sum is over $a\in\mathbb{N}+\half,\,i,p,t,u\in \mathbb{Z}_{\geq0}$.
When $N>1$, we have
\begin{align}
    h^{k,r,\circ}_{g,\mu}
    =\frac{(-1)^{N-1} s!}{\mathfrak{z}(\mu)}[\hbar^{s}\prod_{i=1}^Nz_i^{-\mu_i-1}]\sum_{\text{$N$-cycles } \sigma}
	\prod_{i=1}^N \widehat{A}^{leaky}_{k;r} (z_{\sigma^{i}(1)},z_{\sigma^{i+1}(1)}),
\end{align}
where $[z^a]f(z)$ means taking the coefficient of $z^a$ in $f(z)$,
$\mathfrak{z}(\mu)=|\mathrm{Aut}(\mu)|\prod\mu_i$ and
\begin{align*}
    \widehat{A}_{k;r}^{leaky}(z_i,z_j)=
    \begin{cases}
        \sum_{a\in\mathbb{N}+\half}\mathcal{P}_{-a}(z_j)\,\mathcal{Q}_{-a}(z_i), &\text{\ if\ }i<j,\\
        -\sum_{a\in\mathbb{N}+\half}\mathcal{P}_{a}(z_j)\,\mathcal{Q}_{a}(z_i), &\text{\ if\ }i>j.
    \end{cases}
\end{align*}
The functions $\mathcal{P}_{a}(w)$ and $\mathcal{Q}_{a}(z)$ are explicitly given by
\begin{align}
    \mathcal{P}_a(w)&=w^{-a-\frac12}\!\!\sum_{i,t\geq0}
        \frac{\hbar^{i}\,w^{-ik-t}}{i!\,t!\,((r+1)!)^{i}}
        (a-\frac{k}{2}+t+ik)_i^{r+1},\\
    \mathcal{Q}_b(z)&=z^{b-\frac12}\!\!\sum_{p,t\geq0}
        \frac{(-\hbar)^{p}(-1)^{t}\,z^{-pk-t}}{p!\,t!\,((r+1)!)^{p}}
        (b-\frac{k}{2}-t)_p^{r+1}.
\end{align}
\end{theorem}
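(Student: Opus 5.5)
The plan is to work in the semi-infinite wedge (fermionic Fock space) formalism and apply the standard machinery for connected correlators of KP tau functions: the disconnected generating series of $N$-point leaky Hurwitz numbers is a matrix element of a product of fermion bilinears, conjugated by the leaky completed cut-and-join operator. We begin with the vacuum expectation value expressing $h^{k,r,\bullet}_{g,\mu}$ (defined earlier in the paper via the Fock space formula) in the form $\frac{1}{\mathfrak z(\mu)}\langle \prod_i \alpha_{\mu_i} e^{\hbar \mathcal{F}^{k}_{r+1}} \alpha_{-1}^{\cdots}\rangle$. Since we only need single numbers (one side trivial), we write the right vacuum side as $e^{\alpha_{-1}}$-type dilaton shift or, more conveniently, use the commutation of $e^{\hbar\mathcal F}$ past $e^{\alpha_{-1}}$. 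The leaky operator $\mathcal F^{k}_{r+1}=\sum_a \frac{1}{(r+1)!}(a-\frac k2)^{r+1}$-type terms shifted by $k$ acts by $\psi_a \mapsto$ combinations of $\psi_{a+ik}$; conjugating the free fermion $\psi(w)$ by $e^{\hbar\mathcal F}e^{\alpha_{1}}$ produces exactly the dressed fields whose modes are the series $\mathcal P_a(w)$ and $\mathcal Q_b(z)$, with the $t$-sum coming from $e^{\alpha_{\pm1}}$ (shift by one, weight $1/t!$) and the $i$-sum from expanding $e^{\hbar\mathcal F}$, the product $(a-\frac k2+t+ik)^{r+1}_i$ being the ordered product of eigenvalues picked up along $i$ successive $k$-shifts.

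The key step is then the standard determinantal formula: since $\alpha_{\mu_i}=\operatorname{Res}_{z_i}z_i^{\mu_i}\!:\!\psi(z_i)\psi^*(z_i)\!:$, the $N$-point function of the dressed bilinears is given by Wick's theorem as a determinant of the two-point kernel $\widehat A^{leaky}_{k;r}(z_i,z_j)$, where the kernel is the dressed propagator $\langle \tilde\psi(z_j)\tilde\psi^*(z_i)\rangle$ split according to normal ordering, giving the sums over $a<0$ or $a>0$ with the sign in the two cases. Taking the logarithm (connected part) of a determinantal correlator yields the sum over cyclic permutations with sign $(-1)^{N-1}$, which gives the $N>1$ formula after extracting the coefficient of $\hbar^s\prod z_i^{-\mu_i-1}$; the factor $s!$ arises because the genus/degree constraint fixes the power of $\hbar$ to $s$ and the numbers are normalized by $1/s!$ from the exponential. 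Diagonal terms ($i=j$) are excluded from the cycle sum for $N\ge 2$ since a cycle has no fixed points.

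For $N=1$ the kernel $\widehat A(z,z)$ is singular (it is the one-point function with the regularized diagonal), so we instead compute $\langle \alpha_n\, \tilde{\ }\rangle$ directly: $\alpha_n=\sum_a \psi_{a-n}\psi^*_{a}$ normal ordered, and only finitely many modes contribute after pairing with the dressed vacuum. Extracting coefficients of $\mathcal P_{-a}\mathcal Q_{-a}$-type products and imposing the degree constraint $t+u=(rn-k(2g-1))/(k+r)$ and $i+p=s$ gives the stated sum, with the condition $ik+t\ge a+\frac12$ encoding that the intermediate fermion mode crosses the Dirac sea (otherwise the term is annihilated by the vacuum). I expect the main obstacle to be the careful bookkeeping of the conjugation $e^{\hbar\mathcal F}\psi_a e^{-\hbar\mathcal F}$ for the leaky operator, since its eigenvalues depend on the shifted index, together with getting signs and normal-ordering constants right in the $N=1$ case; I would first verify the dressing formula on the $k=0$ case where it must reduce to the known completed-cycles kernel.
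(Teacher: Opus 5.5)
Your proposal is correct, but it gets to the cycle formula by a different mechanism than the paper.

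\textbf{How the paper argues.} The paper never runs Wick's theorem itself. It expands $\tau^{leaky}_{k,r}$ in Schur functions and packages the hook coefficients $c_{(m|n)}=\langle 0|e^{\alpha_1}e^{\hbar\mathcal{H}_{k,r}}|(m|n)\rangle$ into the fermionic two-point function $A(z,w)$. It evaluates $A$ through the same fermion conjugation you describe (Lemma~\ref{lem:conj on psipsi*}). It then quotes Zhou's general formulas \eqref{eqn:conn n=1} and \eqref{eqn:conn n} for connected $N$-point functions of KP tau-functions. The two cases of $\widehat{A}^{leaky}_{k;r}$ come from combining $A$ with the vacuum terms in \eqref{eq-AandAhat}. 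For $N=1$, the condition $ik+t\ge a+\frac12$ appears because only the negative $w$-degree part $\tilde{\mathcal P}_{-a}(w)$ survives in $A$ before setting $w=z$.

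\textbf{How your route differs.} You use $e^{\alpha_1}e^{\hbar\mathcal{H}_{k,r}}|0\rangle=|0\rangle$ to pass to dressed fields at the bare vacuum, then apply Wick's theorem to $\langle e^{\alpha_1}e^{\hbar\mathcal{H}_{k,r}}\prod_i\alpha_{-\mu_i}\rangle$. This amounts to reproving Zhou's theorem in this special case.
\begin{itemize}
\item It is self-contained and bypasses the Hirota/Schur machinery. It also avoids the $\delta_{N,2}$ correction term, which is harmless anyway once all $\mu_i\ge1$.
\item It explains the $N=1$ constraint directly. In $\alpha_{-n}=\sum_l\psi_{l+n}\psi^*_l$, only $-n<l<0$ survives on $|0\rangle$, and $l+n=-a+ik+t>0$ is exactly $ik+t\ge a+\frac12$.
\item The paper's route is shorter and keeps the KP data $A(z,w)$ and $c_{(m|n)}$ in view, which it reuses for the genus analysis. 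Yours puts all the sign and ordering bookkeeping on you.
\end{itemize}

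\textbf{Three points to get right when you write it out.}
\begin{itemize}
\item Impose the $N=1$ restriction on the undressed mode $l$ before expanding. Dressed terms with $-a+ik+t<0$ are individually nonzero; only their sum for fixed $l$ vanishes. Without the restriction, the sum over $a$ does not terminate.
\item The dressing must be $e^{\alpha_1}e^{\hbar\mathcal{H}_{k,r}}(\cdot)e^{-\hbar\mathcal{H}_{k,r}}e^{-\alpha_1}$, with $\mathcal{H}_{k,r}$ acting first. That is why the Pochhammer argument $a-\frac k2+t+ik$ involves $t$. The other order, which is how your text literally reads, would give $(a-\frac k2+ik)_i^{r+1}$.
\item Which of the dressed propagators $\langle\psi(z_j)\psi^*(z_i)\rangle$ and $-\langle\psi^*(z_i)\psi(z_j)\rangle$ is attached to $i<j$ is dictated by the order of the bilinears, not by normal ordering. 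The stated assignment corresponds to Wick-expanding $\alpha_{-\mu_N}\cdots\alpha_{-\mu_1}$. The mirror assignment gives the same coefficients of $\prod_i z_i^{-\mu_i-1}$, because the $\alpha_{-\mu_i}$ commute.
\end{itemize}
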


As an application of the formula in Theorem \ref{thm: main thm 2}, we study the genus dependence of $k=r=1$ case of the single connected leaky Hurwitz numbers as follows
\begin{theorem}\label{thm:main thm 3}
    For $k=r=1$ and $\mu=(\mu_1,\dots,\mu_N)$, the connected Hurwitz number $h^{1,1,\circ}_{g,\mu}$ is non-zero only if $|\mu|\equiv N\pmod 2$.
    In this case,
    $g_{\max}:=\frac{|\mu|-N}{2}$ is the largest genus for which $h^{1,1,\circ}_{g,\mu}\neq0$ and more precisely,
\begin{align}
    h^{1,1,\circ}_{g_{\max},\mu}
    =\frac{(|\mu|-1)!}{\mathfrak{z}(\mu)}
    \cdot \frac{\sum_{S\subseteq[N]}(-1)^{|S|+\sum_{j\in S}\mu_j} \big(\sum_{j\in S}\mu_j\big)! \big(\sum_{j\in S^c}\mu_j\big)!}{2^{|\mu|-1}(|\mu|+1)}.
\end{align}
\end{theorem}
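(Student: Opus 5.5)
The plan is to specialize the $N$-point formula of Theorem \ref{thm: main thm 2} to $k=r=1$ and to do a careful degree count in $\hbar$ and in the variables $z_i$.

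\textbf{Parity.} For $k=r=1$ we have $s=\frac{2g-2+N+|\mu|}{2}$. Since $s$ must be a nonnegative integer, $h^{1,1,\circ}_{g,\mu}=0$ unless $|\mu|\equiv N \pmod 2$. This gives the first claim.

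\textbf{Degree count along the cycle.} In each factor $\widehat{A}^{leaky}_{1;1}(z_{\sigma^i(1)},z_{\sigma^{i+1}(1)})$, take the term of $\mathcal{P}$ indexed by $(i_e,t_e)$ and the term of $\mathcal{Q}$ indexed by $(p_e,t'_e)$.
\begin{itemize}
\item The exponents $\pm a\mp\frac12$ cancel within each factor. This holds both for the edges with $i<j$ (where $\mathcal{P}_{-a},\mathcal{Q}_{-a}$ appear) and for the edges with $i>j$ (where $\mathcal{P}_{a},\mathcal{Q}_{a}$ appear).
\item Each factor therefore contributes total $z$-degree $-1-i_e-p_e-t_e-t'_e$.
\item Its $\hbar$-degree is $i_e+p_e$.
\end{itemize}
Summing over the $N$ edges of the cycle, the extraction $[\hbar^s\prod z_i^{-\mu_i-1}]$ forces
\[ s+\sum_e (t_e+t'_e)=|\mu| . \]
Since $s=|\mu|-1+(g-g_{\max})$, the genus $g$ is bounded by $g_{\max}+1$, and we must handle two cases.
\begin{itemize}
\item $g=g_{\max}+1$: all $t_e=t'_e=0$, and we must show the resulting contribution vanishes.
\item $g=g_{\max}$: exactly one of the $t_e,t'_e$ equals $1$ and all the others vanish.
\end{itemize}

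\textbf{Vanishing at $g_{\max}+1$.} With all $t$'s zero, each factor becomes a sum over $a$ of products of two Pochhammer symbols $(\pm a-\frac12+i_e)^{2}_{i_e}$ and $(\pm a-\frac12)^{2}_{p_e}$, times monomials. I would write the whole product as a single Laurent polynomial and show that the sum over $N$-cycles, combined with the sign $-$ on the wrap-around edges, cancels identically. The natural route is to recognize $\sum_a \mathcal{P}_{-a}(w)\mathcal{Q}_{-a}(z)$ at $t=0$ as a truncated expansion of a rational kernel $\frac{1}{z-w}$ type conjugated by an operator. The cycle sum of such Cauchy-type kernels then has vanishing coefficient in this top degree. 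This parallels the classical fact that $g_{\max}$ is attained for ordinary Hurwitz numbers. Alternatively, the vanishing may follow from Theorem \ref{thm:main thm 1}: the nonvanishing range $r[n]_k\ge 2g-1+\langle n\rangle_k$, or the Pochhammer prefactor becoming zero.

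\textbf{Evaluation at $g_{\max}$.} Place the single $t=1$ on one edge $e_0$; the extra factor is $(a\mp\frac12+1+i)$ or $-1$. The sum over $a\in\mathbb{N}+\frac12$ then truncates, because the Pochhammers vanish once $a$ exceeds the relevant degrees. The remaining $i_e,p_e$ sums on the other edges are binomial convolutions. I expect that, after summing over the position of $e_0$ and over cycles, the contribution organizes according to which variables lie "before" and "after" the cut. This should produce $\sum_{S\subseteq[N]}(-1)^{|S|+\sum_{j\in S}\mu_j}(\sum_S\mu_j)!(\sum_{S^c}\mu_j)!$, with the prefactor $\frac{(|\mu|-1)!}{\mathfrak{z}(\mu)}$ coming from $s!/\mathfrak{z}(\mu)$. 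The factors $2^{-(|\mu|-1)}$ and $\frac{1}{|\mu|+1}$ should come respectively from $((r+1)!)^{-s}$ and from a Beta-integral-type evaluation $\sum_m(-1)^m\binom{n}{m}\frac{1}{m+1}$.

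\textbf{Main obstacle.} I expect the hardest part to be this last combinatorial resummation over cycles and the cut position, together with the vanishing at $g_{\max}+1$. My first attempt would be the case $N=1$ using the explicit first formula of Theorem \ref{thm: main thm 2}, where the target reads $h=\frac{(n-1)!}{n}\cdot\frac{(-1)^{n+1}n!+n!}{2^{n-1}(n+1)}$. I would then guess the general structure by generating-function manipulation in the $z_i$.
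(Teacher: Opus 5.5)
Your parity argument and the count $s+\sum_e(t_e+t'_e)=|\mu|$ are correct. However, the two substantive steps are only announced, and one of your suggested routes fails. Theorem \ref{thm:main thm 1} cannot give the vanishing at $g=g_{\max}+1$ when $N>1$. It only provides a prefactor times an unspecified polynomial $P$, and at that genus $[\mu_i]-\sigma<0$, so the Pochhammer factor is a reciprocal product rather than a source of zeros. It helps only for $N=1$, through Remark \ref{rem:criticalcase}. Nor is any cancellation over cycles needed, because the $t=0$ part of \emph{each} edge factor is exactly the $\hbar$-free Cauchy kernel. Take an ascending edge $u<v$ and put $A=a+\frac12$. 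The Pochhammer symbol in $\mathcal{P}_{-a}$ kills $i\ge A$, and the coefficient of $z_v^{h}z_u^{-q}$ in the $t=t'=0$ part is $\frac{((q-1)!)^2\hbar^{q-1-h}}{(h!)^2\,2^{q-1-h}}\sum_{A=h+1}^{q}\frac{(-1)^{q-A}}{(A-1-h)!\,(q-A)!}=\delta_{q,h+1}$. Descending edges work the same way and give $-\sum_h z_v^{-1-h}z_u^h$. So all terms with every $t=0$ have $\hbar$-degree $0$. The paper reaches the same bound differently: it writes $\widehat{A}^{leaky}_{1;1}=A^{leaky}_{1;1}\pm(\text{Cauchy kernel})$ and uses $\deg_\hbar c_{(m|n)}\le m+n$ from Lemma \ref{lem:kr1 mach c}. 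A term with $|I|$ factors of $A^{leaky}_{1;1}$ then has $\hbar$-degree at most $|\mu|-|I|$, and $I=\emptyset$ is ruled out by the $z$-degree.

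The real gap is the evaluation at $g_{\max}$. There, ``binomial convolutions organized by the cut'' is a hope, not an argument, and four ingredients are missing.

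First, the edge carrying the single $t=1$ contributes the top-degree part of $A^{leaky}_{1;1}$. Since $[\hbar^{n+m}]c_{(m|n)}=\frac{n!m!}{2^{n+m}}$, this factorizes as $U(z)V(w)$ with $U(z)=\int_0^\infty\frac{2e^{-2u}du}{z+\hbar u}$ and $V(w)=\int_0^\infty\frac{2e^{-2v}dv}{w-\hbar v}$.

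Second, a pair (cycle, marked edge) is the same as a linear order $w\in S_N$. The sum over these is collapsed by the partial-fraction identity of Lemma \ref{lem:sum w}: $\sum_{w\in S_N}\frac{1}{(\hbar v-z_{w(1)})\prod_{i=1}^{N-1}(z_{w(i)}-z_{w(i+1)})(z_{w(N)}+\hbar u)}=\frac{(\hbar(u+v))^{N-1}}{\prod_{j}(\hbar v-z_j)(z_j+\hbar u)}$.

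Third, extracting $\prod_j z_j^{-\mu_j-1}$ then leaves, up to sign, $\frac{1}{\hbar(u+v)}\prod_j\big((\hbar v)^{\mu_j}-(-\hbar u)^{\mu_j}\big)$. Expanding over $S\subseteq[N]$ produces the alternating sum. Writing $\frac{1}{u+v}=\int_0^\infty e^{-(u+v)t}dt$ and using $\int_0^\infty 4(2+t)^{-|\mu|-2}dt=\frac{1}{2^{|\mu|-1}(|\mu|+1)}$ produces the denominator.

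Fourth, the claim that $g_{\max}$ is the \emph{largest} genus with $h^{1,1,\circ}_{g,\mu}\neq0$ requires the alternating sum $\Sigma(\mu)$ to be nonzero, and your plan never addresses this. The paper writes $\Sigma(\mu)=(|\mu|+1)!\int_0^1\prod_j\big((1-x)^{\mu_j}-(-x)^{\mu_j}\big)dx$. It then uses $|\mu|\equiv N \pmod 2$, which means there is an even number of even $\mu_j$, to see that the integrand is positive on $(0,1)\setminus\{\frac12\}$.
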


The structure of this paper is as follows: In Section \ref{sec: prelim}, we review the definition of leaky Hurwitz numbers via tropical geometry, the basics of semi-infinite wedge formalism and connected $N$-point function of a tau-function of the KP hierarchy. In Section \ref{sec: bosons}, we express the generating function of leaky Hurwitz numbers via a recursively defined function $F$, and study its property. Using this, we prove Theorem \ref{thm:main thm 1} in Section \ref{sec:quasipolynomiality}. In Section \ref{sec: conn n point function}, we prove Theorems \ref{thm: main thm 2} and \ref{thm:main thm 3}.

\section*{Acknowledgements}
Chongyu Wang was partially supported by the National Natural Science Foundation of China (No. 12288201).
Chenglang Yang was partially supported by the National Natural Science Foundation of China (No. 12401079) and Natural Science Foundation of Hubei (No. 2026AFA008).

\section{Preliminaries}\label{sec: prelim}

\subsection{Tropical leaky completed Hurwitz numbers}

The main objects we study in this paper count tropical covers of $\mathbf{P}^1$ with prescribed ramification data over 0 and $\infty$ and with prescribed leaky completed-cycle conditions at other ramification points. We review in this subsection basics of tropical geometry and review the definition of tropical leaky completed Hurwitz numbers. We fix throughout this paper two positive integers $k,r>0$.

\begin{definition}\label{def: tropical curve}
    A tropical curve $\Gamma$ (possibly disconnected) is a bounded graph endowed with the following data:
\begin{itemize}
    \item It has a set of vertices $V(\Gamma)$ and a set of half edges $H(\Gamma)$.
    \item There is an incidence map $v$ associating any half edge $h$ to a vertex $v(h)$. The cardinality of $v^{-1}(w)$ is called the valency of a vertex $w\in V(\Gamma)$, and is required to be positive for any $w$. Vertices with valency 1 are called leaves. Vertices which are not leaves are called inner vertices. The set of inner vertices and leaves are denoted by $V^0(\Gamma)$ and $V^{\infty}(\Gamma)$ respectively. 
    \item There is an involution $\iota$ on $H(\Gamma)$ which has no fixed point. Any orbit of $\iota$ is called an edge, we denote the set of edges by $E(\Gamma)$. We require that $v(h)\neq v(\iota(h))$ for any $h\in H(\Gamma)$. We let $n_{\infty}(e)$ be the number of leaves incident to $e\in E(\Gamma)$.
    \item There is a function $g:V(\Gamma)\to \mathbb{N}$ giving the genus of vertices.
\end{itemize}
\end{definition}
\noindent The genus of a tropical curve $\Gamma$ is defined to be 
\[g(\Gamma)=b^1(\Gamma)+\sum_{v\in V^0(\Gamma)}g(v)+1-C(\Gamma),\]
where $b^1(\Gamma)$ is the first Betti number of the geometric realization of $\Gamma$ and $C(\Gamma)$ is the number of its connected components.

Define the tropical projective line $\mathbf{P}^1_{\text{trop},s}$ with $V^0(\mathbf{P}^1_{\text{trop},s})=\{p_1,\cdots,p_s\}$ to be 

\vspace{5pt}

\begin{center}
    \begin{tikzpicture}
    \fill (-1,0) circle (1pt);
    \draw (-1,0) -- (0,0);
    \fill (0,0) circle (1pt);
    \draw (0,0) -- (1,0);
    \fill (1,0) circle (1pt);
    \draw (1,0) -- (1.5,0);
    \fill (1.8,0) circle (0.5pt);
    \fill (1.9,0) circle (0.5pt);
    \fill (2,0) circle (0.5pt);
    \draw (2.3,0) -- (2.8,0);
    \fill (2.8,0) circle (1pt);
    \draw (2.8,0) -- (3.8,0);
    \fill (3.8,0) circle (1pt);
    \node[above left] at (-1,0) {$-\infty$};
    \node[above right] at (3.8,0) {$\infty$};
    \node[above] at (0,0) {$p_1$};
    \node[above] at (1,0) {$p_2$};
    \node[above] at (2.8,0) {$p_s$};
\end{tikzpicture}
\end{center}
with two leaves being $-\infty$ and $\infty$. The right half edge incident to any vertex as in the above picture is defined to be positively directed, and the others are defined to be negatively directed.

\begin{definition}\label{def: tropical cover}
Leaky tropical covers of $\mathbf{P}^1_{\text{trop},s}$ are defined to be maps $\pi$ from the geometric realizations of a tropical curve $\Gamma$ to $\mathbf{P}^1_{\text{trop},s}$, such that
\begin{itemize}
    \item We have 
    
    \[\pi(V^0(\Gamma))=V^0(\mathbf{P}^1_{\text{trop},s}),\;\pi(V^\infty(\Gamma))\subset V^\infty(\mathbf{P}^1_{\text{trop},s}).\]
    And $\pi^{-1}(p_i)$ contains exactly one vertex $v_i\in V^0(\Gamma)$. We denote its valency by $w_i$.
    \item Any edge is associated with a positive integer weight $\omega(e)\in \mathbb{Z}_{>0}$. The weight $\omega(h)$ of a half edge $h$ is defined to be the weight of its corresponding edge.
    \item Any edge $e\in E(\Gamma)$ is mapped injectively into $\mathbf{P}^1_{\text{trop},s}$, hence any half edge has a direction from its image. We let $H^+(\Gamma)$ and $H^-(\Gamma)$ be the set of half edges with positive and negative direction respectively. The number

    \[k(w)=\sum_{\substack{h\in H^+(\Gamma)\\v(h)=w}}\omega(h)-\sum_{\substack{h\in H^-(\Gamma)\\v(h)=w}}\omega(h)\]
    is called the leakiness of a point $w\in V(\Gamma)$.
\end{itemize}
\end{definition}

\noindent An isomorphism between two leaky tropical covers with the same number of inner vertices $\pi_1:\Gamma_1\to\mathbf{P}^1_{\text{trop},p_1,\cdots,p_s}$ and $\pi_2:\Gamma_2\to\mathbf{P}^1_{\text{trop},q_1,\cdots,q_s}$ is the combination of isomorphisms of the source and target curves, denoted $i_{\mathrm{sour}}$ and $i_{\mathrm{tar}}$ respectively, such that they commute with $\pi$ in the following sense
\begin{center}
    \begin{tikzcd}
\Gamma_1 \arrow[r, "i_{\mathrm{sour}}"] \arrow[d, "\pi_1"'] & \Gamma_2 \arrow[d, "\pi_2"] \\
\mathbf{P}^1_{\text{trop},p_1,\cdots,p_s} \arrow[r, "i_{\mathrm{tar}}"']                & \mathbf{P}^1_{\text{trop},q_1,\cdots,q_s}.
\end{tikzcd}
\end{center}
The isomorphisms $i_{\mathrm{sour}}$ and $i_{\mathrm{tar}}$ are required to preserve all structures of $\Gamma$ and $\mathbf{P}^1_{\text{trop},s}$ defined above.

For $s\in \mathbb{N}$ and let $\mu,\nu$ be two partitions. One requires that
\begin{equation}\label{eq: condition on k and r}
    |\nu|+ks=|\mu|.
\end{equation}
The genus is determined by
\begin{equation}\label{eq: RH}
    g=1+\frac{rs-l(\mu)-l(\nu)}{2}.
\end{equation}

\begin{definition}\label{def: tropical cover with parameter}
    We define $C(\nu,\mu,k,r,s)$ to be the set of isomorphism classes of tropical covers such that
    \begin{itemize}
        \item $w_i=r+2-2g(v_i)$ and $k(v_i)=k$ for $1\leq i\leq s$.
        \item The source curve $\Gamma$ has $l(\mu)+l(\nu)$ leaves mapped to $\pm\infty$, whose associated edges are with weights $\nu$ for $\pi^{-1}(-\infty)$ and $\mu$ for $\pi^{-1}(\infty)$.
        \item The genus of $\Gamma$ is $g$ which is defined by equation \eqref{eq: RH}.
    \end{itemize}

\end{definition}

\begin{example}
    The following is a tropical cover in $C(\nu,\mu,1,1,3)$ with $\nu=(1,1)$ and $\mu=(3,1,1)$.

    \begin{center}
    \begin{tikzpicture}[xscale=1.5]
    \fill (-1,0) circle (1pt);
    \draw (-1,0) -- (0,0);
    \fill (0,0) circle (1pt);
    \draw (0,0) -- (1,0);
    \fill (1,0) circle (1pt);
    \draw (1,0) -- (2,0);
    \fill (2,0) circle (1pt);
    \draw (2,0) -- (3,0);
    \fill (3,0) circle (1pt);
    \node[above left] at (-1,0) {$-\infty$};
    \node[above right] at (3,0) {$\infty$};
    \node[above] at (0,0) {$p_1$};
    \node[above] at (1,0) {$p_2$};
    \node[above] at (2,0) {$p_3$};
    \fill (-1,1.5) circle (1pt);
    \fill (-1,3) circle (1pt);
    \fill (0,1.5) circle (1pt);
    \fill (1,3) circle (1pt);
    \fill (3,1) circle (1pt);
    \fill (3,2) circle (1pt);
    \fill (3,3) circle (1pt);
    \fill (2,1.5) circle (1pt);
    \draw (-1,1.5) -- (0,1.5);
    \draw (0,1.5) -- (1,3);
    \draw (0,1.5) -- (2,1.5);
    \draw (2,1.5) -- (3,1);
    \draw (2,1.5) -- (3,2);
    \draw (-1,3) -- (1,3);
    \draw (1,3) -- (3,3);
    \node[above] at (0,3) {1};
    \node[above] at (2,3) {3};
    \node[above left] at (0.5,2.25) {1};
    \node[above] at (-0.5,1.5) {1};
    \node[above] at (1,1.5) {1};
    \node[above left] at (2.5,1.75) {1};
    \node[above right] at (2.5,1.25) {1};
\end{tikzpicture}
\end{center}
The inner vertices of the source curve are all with genus 0. The automorphism group of this tropical cover corresponds to commuting 1's in $\mu$, hence has cardinality 2.
\end{example}

Define for any two partitions $\mu$ and $\nu$
\[m_g(\nu,\mu):=[z^{2g}]\frac{\prod_{i=1}^{l(\mu)}\mathcal{S}(\mu_iz)\prod_{j=1}^{l(\nu)}\mathcal{S}(\nu_jz)}{\mathcal{S}(z)},\]
where $\mathcal{S}(z):=\frac{2\mathrm{sinh}(z/2)}{z}$, see \cite{HK}.

\begin{definition}
    We define the tropical $k$-leaky $(r+1)$-completed possibly disconnected double Hurwitz numbers $h^{k,r,s}_{\nu,\mu}$ to be 
    \[\sum_{\pi\in C(\nu,\mu,k,r,s)}\frac{1}{|\mathrm{Aut}(\pi)|}\prod_{i=1}^sm_{g_i}(\textbf{x}_i^+,\textbf{x}^-_i)\prod_{e\in E(\Gamma)}\omega(e)^{1-n_{\infty}(e)},\]
    where $n_\infty(e)$ is defined in Definition \ref{def: tropical curve}, and $\textbf{x}_i^+$ and $\textbf{x}_i^-$ are the set of weights of $h\in H^+(\Gamma)$ and $h\in H^-(\Gamma)$ respectively such that $v(h)=v_i$. We will also denote it by $h^{k,r}_{g,\nu,\mu}$. The connected tropical leaky Hurwitz numbers $h^{k,r,s,\circ}_{\nu,\mu}$ are defined by the same formula with the condition that the geometric realization of the domain curve of $\pi$ is connected.
\end{definition}

\subsection{Semi-infinite wedge formalism}

The main tool we use to study leaky completed Hurwitz numbers is its semi-infinite wedge formalism, so we review in this subsection some knowledge about semi-infinite wedge space, for more details see for example \cite{J15,OP06}. Let $V:=\oplus_{i\in \mathbb{Z}+\frac{1}{2}}\mathbb{C}\underline{i}$ be the infinite dimensional complex vector space spanned by vectors indexed by half integers. The semi-infinite wedge space $\mathcal{V}=\wedge^{\frac{\infty}{2}}V$ is by definition the complex vector space spanned by vectors
\begin{equation}\label{eq: form of element of semi-infinite space}
    \underline{k_1}\wedge\underline{k_2}\wedge\underline{k_3}\wedge\cdots,
\end{equation}
where $k_i\in\mathbb{Z}+\frac{1}{2}$ and $k_i+i-\frac{1}{2}$ is a constant for large $i$. This constant is called the charge of the vector. Let $\mathcal{V}_0$ be the subspace spanned by charge 0 vectors, which are in 1-1 correspondence with the set $\mathcal{P}$ of all partitions: a partition $\mu$ corresponds to $|\mu\rangle:=\underline{\mu_1-\frac{1}{2}}\wedge\underline{\mu_2-\frac{3}{2}}\wedge\cdots$. 
Space $\mathcal{V}$ is assigned with an inner product $(\cdot,\cdot)$ such that the elements \eqref{eq: form of element of semi-infinite space} are orthonormal. We call $|0\rangle:=|\emptyset\rangle$ the vacuum. The covacuum $\langle0|$ is defined to be $(|0\rangle,\cdot)$ in $\mathcal{V}_0^*$.

For any operator $\mathcal{O}$ on $\mathcal{V}_0$, let $\langle \mathcal{O}\rangle:=\langle0|\mathcal{O}|0\rangle$, and call it the vacuum expectation value of $\mathcal{O}$. For any $k\in \mathbb{Z}+\frac{1}{2}$, the fermionic operator $\psi_k$ is defined to be wedge product with $\underline{k}$
\[\psi_k(v)=\underline{k}\wedge v.\]
The operator $\psi^*_k$ is the adjoint of $\psi_k$ with respect to $(\cdot,\cdot)$. Define operators $E_{ij}$ for $i,j\in\mathbb{Z}+\frac{1}{2}$ to be the normal ordering of $\psi_i$ and $\psi_j^*$
\[E_{ij}:=\begin{cases}
    \psi_i\psi_j^*,&j>0\\
    -\psi_j^*\psi_i,&j<0
\end{cases}.\]

We collect some commutation relations:
\begin{equation}\label{eq: comm E psi}
    [E_{ab},\psi_c]=\delta_{bc}\psi_a,\qquad[E_{ab},\psi^*_c]=-\delta_{ac}\psi_b^*,
\end{equation}
and
\begin{align}
        [\sum_{l\in\mathbb{Z}+\frac{1}{2}}g_lE_{l-a,l},\sum_{k\in\mathbb{Z}+\frac{1}{2}}f_k&E_{k-b,k}]=\sum_{l\in\mathbb{Z}+\frac{1}{2}}(g_{l-b}f_l-g_lf_{l-a})E_{l-(a+b),l}\nonumber\\
        &+\delta_{a+b}\delta_{a>0}\sum_{\substack{l\in\mathbb{Z}+\frac{1}{2}\\
        0<l<a}}g_lf_{l-a}+\delta_{a+b}\delta_{b>0}\sum_{\substack{l\in\mathbb{Z}+\frac{1}{2}\\
        0<l<b}}g_{l-b}f_{l}.\label{eq: compute commutator of general operator}
    \end{align}

Let $\varsigma(z):=e^{z/2}-e^{-z/2}=2 \,\text{sinh}(z/2)$. Define 
\[\mathcal{E}_n(z):=\sum_{l\in \mathbb{Z}+\frac{1}{2}}e^{z(l-\frac{n}{2})}E_{l-n,l}+\frac{\delta_{n0}}{\varsigma(z)},\]
for any integer $n$ and formal variable $z$. Let
\[\alpha_n:=\mathcal{E}_n(0)=\sum_{l\in\mathbb{Z}+\frac{1}{2}}E_{l-n,l},\]
for non-zero integer $n$. The commutation relations of these operators are
\[[\mathcal{E}_a(z),\mathcal{E}_b(w)]=\varsigma(aw-bz)\mathcal{E}_{a+b}(z+w),\;\;\;[\alpha_n,\alpha_m]=n\delta_{n+m},\]
see \cite{OP06}. For positive integer $r$, define 
\[\mathcal{F}^{(r)}:=\sum_{l\in\mathbb{Z}+\frac{1}{2}}\frac{l^r}{r!}E_{l,l}.\]
We call $E:=\mathcal{F}^{(1)}$ the energy operator. If an operator $\mathcal{O}$ satisfies $[\mathcal{O},E]=e\mathcal{O}$, we say it has energy $e$. The operators $\mathcal{E}_n(z)$ and hence $\alpha_n$ have energy $n$. An operator with positive energy annihilates the vacuum $|0\rangle$. Similarly, an operator with negative energy annihilates the covacuum $\langle0|$.

For a partition $\mu$, we let 
\begin{equation}\label{eq: def zmu}
    \mathfrak{z}(\mu)=|\mathrm{Aut}(\mu)|\prod_{i=1}^{l(\mu)}\mu_i,
\end{equation}
where $\mathrm{Aut}(\mu)$ is the symmetry group permuting equal parts of $\mu$. Leaky Hurwitz numbers defined in the last subsection can be computed by \cite{AKL,CMR,HK}
\begin{equation}\label{eq: semi infinite wedge for hur}
    h^{k,r,s}_{\nu,\mu}=h^{k,r}_{g,\nu,\mu}=[z_1^{r+1}z_2^{r+1}\cdots z_s^{r+1}]\frac{1}{\mathfrak{z}(\mu)\mathfrak{z}(\nu)}\langle \prod_{j=1}^{l(\nu)}\alpha_{\nu_j}\prod_{p=1}^s\mathcal{E}_k(z_p)\prod_{i=1}^{l(\mu)}\alpha_{-\mu_i}\rangle,
\end{equation}

\noindent where $[z_1^{r+1}\cdots z_s^{r+1}]$ means selecting the coefficient of the monomial $z_1^{r+1}\cdots z_s^{r+1}$, and $g,s$ are related via equation \eqref{eq: RH}. Note that it differs from Definition 3.1 in \cite{AKL} by a factor $1/|\mathrm{Aut}(\mu)||\mathrm{Aut}(\nu)|$, and the convention of leakiness is reversed. 

By letting $\nu=(1^{l(\nu)})$, single $k$-leaky $(r+1)$-completed cycles Hurwitz numbers are then
\begin{equation}\label{eq: def single hur}
    h^{k,r}_{g,\mu}=h^{k,r,s}_\mu=[z_1^{r+1}\cdots z_s^{r+1}]\frac{1}{l(\nu)!\mathfrak{z}(\mu)}\langle \alpha_1^{l(\nu)}\prod_{p=1}^s\mathcal{E}_{k}(z_p)\prod_{i=1}^{l(\mu)}\alpha_{-\mu_i}\rangle.
\end{equation}
Then equation \eqref{eq: condition on k and r} now becomes
\begin{equation}\label{eq: total energy 0}
    l(\nu)+ks=|\mu|,
\end{equation}
which together with equation \eqref{eq: RH} implies
\begin{equation}\label{eq: relation of g and s}
    (k+r)s=2g-2+l(\mu)+|\mu|.
\end{equation}

Define the operator 
\[\mathcal{H}_{k,r}:=[z^{r+1}]\mathcal{E}_k(z)=\sum_{l\in \mathbb{Z}+\frac{1}{2}}\frac{\hat{l}^{r'}}{r'!}E_{l-k,l},\]
where
\begin{equation}\label{eq:definelhatrprime}
    \hat{l}:=l-\frac{k}{2},\qquad r':=r+1.
\end{equation}
The generating series of single leaky completed Hurwitz numbers is defined to be
\begin{equation}\label{eq: sym in nu}
    h^{k,r}_\mu(\hbar):=\sum_{s\geq 0}\frac{1}{s!}\hbar^sh^{k,r,s}_\mu=\frac{1}{\mathfrak{z}(\mu)}\langle e^{\alpha_1}e^{\hbar\mathcal{H}_{k,r}}\prod_{i=1}^{l(\mu)}\alpha_{-\mu_i}\rangle.
\end{equation}
Let $\mathbf{t}:=(t_1,t_2,\cdots)$. We further package formula \eqref{eq: sym in nu} to
\begin{equation}\label{eq: partition function}
    \tau^{leaky}_{k,r}(\mathbf{t},\hbar):=\sum_{\mu\in\mathcal{P}}h^{k,r}_\mu(\hbar)\prod_{i=1}^{l(\mu)}\mu_it_{\mu_i}=\langle  e^{\alpha_1} e^{\hbar \mathcal{H}_{k;r}} \Gamma_-(\mathbf{t}) \rangle,
\end{equation}
which is called the partition function of $k$-leaky $(r+1)$-completed Hurwitz numbers. Here $\Gamma_-(\mathbf{t}):=\exp\big(\sum_{j=1}^\infty t_j\alpha_{-j}\big)$.
Note that since the operator $e^{\alpha_1} e^{\hbar \mathcal{H}_{k;r}}\in\widehat{GL}(\infty)$ preserves the KP integrability (see, for example, \cite{DJM}),
the generating function of leaky Hurwitz numbers given by formula \eqref{eq: partition function} satisfies the KP hierarchy.

\subsection{Connected $N$-point function for tau-functions of the KP hierarchy}
There is a general method deriving the connected $N$-point function from a tau-function of the KP hierarchy. We introduce the related notions in this subsection, and these are used in Section \ref{sec: conn n point function}.

A formal power series $\tau(\mathbf{t})\in\mathbb{C}[\![\mathbf{t}]\!]$ is a tau-function of the KP hierarchy if it satisfies the following Hirota bilinear equation (see \cite{DJM} for more details)
\begin{align}\label{eqn:hirota}
	\oint\ e^{\xi(\mathbf{t}-\mathbf{t}',z)}
	\tau(\mathbf{t}-[z^{-1}]) \tau(\mathbf{t}'+[z^{-1}])\ dz=0,
\end{align}
where $\xi(\mathbf{t},z):=\sum_{k=1}^\infty t_k z^k$, and
\begin{align*}
	\mathbf{t}\pm[z^{-1}]
	=(t_1\pm z^{-1}, t_2\pm\frac{1}{2}z^{-2},
	...,t_n\pm\frac{1}{n}z^{-n},...).
\end{align*}
Since $\{s_\mu(\mathbf{t})\,|\,\mu\in\mathcal{P}\}$ forms a topological basis of $\mathbb{C}[\![\mathbf{t}]\!]$, for any tau-function of the KP hierarchy $\tau_{KP}(\mathbf{t})$,
we can expand it in terms of Schur polynomials as
\begin{align}\label{eqn:tauKP as slambda}
    \tau_{KP}(\mathbf{t})
    =\sum_{\mu\in\mathcal{P}} c_{\mu} \cdot s_{\mu}(\mathbf{t}),
\end{align}
where $c_{\mu}\in\mathbb{C}$ and $s_{\mu}(\mathbf{t})$ is the Schur polynomial.
Without loss of generality,
we can assume $c_{\emptyset}=\tau_{KP}(0)=1$.
Then the Hirota bilinear relation \eqref{eqn:hirota} satisfied by the KP tau-function $\tau_{KP}(\mathbf{t})$ is equivalent to the following relations between the coefficients $\{c_{\lambda}\}_{\lambda\in\mathcal{P}}$:
\begin{align}\label{eqn:clambda by c}
    c_{\mu} = \det\big(c_{(m_i|n_j)}\big)_{1\leq i,j\leq l},
\end{align}
where $\mu=(m_1,\cdots,m_l|n_1,\cdots,n_l)$ is the Frobenius notation.
As a consequence,
the tau-function $\tau_{KP}(\mathbf{t})$ is determined by $\{c_{(a|b)}\}_{a,b\in\mathbb{N}}$.
For convenience,
we denote the generating function of these numbers $\{c_{(a|b)}\}_{a,b\in\mathbb{N}}$ by (see \cite{Z15})
\begin{align}\label{eqn:def Azw}
    A(z,w)=\sum_{n,m\in\mathbb{N}} (-1)^n c_{(m|n)} \cdot z^{-n-1} w^{-m-1}.
\end{align}
This generating function is also called the fermionic two-point function in the literature.

From equations \eqref{eqn:tauKP as slambda} and \eqref{eqn:clambda by c},
the generating function $A(z,w)$ uniquely specifies the KP tau-function $\tau_{KP}(\mathbf{t})$ with the condition $\tau_{KP}(0)=1$.
The following formula for connected $N$-point function of KP tau-functions provides a direct way to compute the connected correlators.
We recommend the references \cite{BE12,TW94,Z15}.
The result is,
for the $N=1$ case (see equation (211) in \cite{Z15}),
\begin{align}\label{eqn:conn n=1}
	\sum_{j\geq1}
	\bigg(\frac{\partial \log \tau_{KP}(\mathbf{t})}{\partial t_{j}} \Big|_{\mathbf{t}=0}
	\cdot z^{-j-1}\bigg)
	=\lim_{w\rightarrow z} A(z,w),
\end{align}
and for general $N\geq 2$ (see Theorem 5.3 in \cite{Z15}),
\begin{align}\label{eqn:conn n}
	\begin{split}
	&\sum_{j_1,\cdots,j_N\geq 1}
	\bigg(\frac{\partial^N \log \tau_{KP}(\mathbf{t})}{\partial t_{j_1} \cdots \partial t_{j_N}} \Big|_{\mathbf{t}=0}
	\cdot z_1^{-j_1-1} \cdots z_N^{-j_N-1}\bigg) \\
	=&(-1)^{N-1}\cdot \sum_{\text{$N$-cycles }: \sigma}
	\prod_{i=1}^N \widehat A (z_{\sigma^{i}(1)},z_{\sigma^{i+1}(1)})
	-\delta_{N,2}\cdot\sum_{h\geq0}(h+1)z_1^{-2-h}z_2^h,
	\end{split}
\end{align}
where the sum runs over $N$-cycles $\sigma\in S_N$ and
\begin{align}\label{eq-AandAhat}
	\widehat A(z_i,z_j) =
	\begin{cases}
		\sum_{h\geq 0} z_i^{-1-h}z_j^h + A(z_i,z_j),\qquad
		& i<j,\\
		-\sum_{h\geq 0} z_j^{-1-h}z_i^h + A(z_i,z_j), &i>j.
	\end{cases}
\end{align}

At the end of the preliminaries, we fix some notation

\begin{definition}\label{def: Pochhammer}
    We define $k$-Pochhammer symbol for any $n\text{ and }i\in \mathbb{Z}$ to be

    \[(n)_{i}:=\begin{cases}
        \prod_{j=0}^{i-1}(n-jk)&i>0\\
        1&i=0\\
        \frac{1}{\prod_{j=1}^{-i}(n+jk)}&i<0
    \end{cases}.\]
    And we use the symbol $(n)_{\overline{i}}$ to denote the $(-1)$-Pochhammer symbol. We define $n=i[n]_i+\langle n\rangle_i$ for the integral division of an integer $n$ by a positive integer $i$, where $0\leq \langle n\rangle_i<i$. If $i=k+r$, we will omit the subscript.
    \end{definition}

\section{Leaky Hurwitz numbers via bosons and its property}\label{sec: bosons}
We start from the formula \eqref{eq: sym in nu} to compute the single leaky Hurwitz numbers via bosons.
Note that positive energy operators $\alpha_1$ and $\mathcal{H}_{k,r}$ for $k>0$ annihilate the vacuum, hence 
\[e^{-u\mathcal{H}_{k,r}}e^{-\alpha_1}|0\rangle=|0\rangle.\]
Thus formula \eqref{eq: sym in nu} is equal to 
\begin{equation}\label{eq: hurwitz as prod of A operator}
    h^{k,r}_\mu(\hbar)=\frac{1}{\mathfrak{z}(\mu)}\langle \prod_{i=1}^N(e^{\alpha_1}e^{\hbar\mathcal{H}_{k,r}}\alpha_{-\mu_i}e^{-\hbar\mathcal{H}_{k,r}}e^{-\alpha_1})\rangle,
\end{equation}
where $N:=l(\mu)$. By calculating the conjugate 
\begin{equation}\label{eq: conjugate}
    e^{\alpha_1}e^{u\mathcal{H}_{k,r}}\alpha_{-n}e^{-u\mathcal{H}_{k,r}}e^{-\alpha_1},
\end{equation}
we give in this section an expression of $h^{k,r}_\mu(\hbar)$ via a recursively defined function $F$, and study this function's property.

\subsection{Calculating the conjugate}

As said above, we will give expression for formula \eqref{eq: hurwitz as prod of A operator} by calculating the conjugate \eqref{eq: conjugate}, which consists of first conjugate $\alpha_n$ by $e^{u\mathcal{H}_{k,r}}$ and then conjugate by $e^{\alpha_1}$. The Baker--Campbell--Hausdorff formula reads 
\begin{equation}\label{eq: BCH}
    e^{\hbar\mathcal{H}_{k,r}}\alpha_{-n}e^{-\hbar\mathcal{H}_{k,r}}=\sum_{i=0}^{\infty}\frac{\hbar^i}{i!}\text{ad}_{\mathcal{H}_{k,r}}^i(\alpha_{-n}),
\end{equation}
where $\text{ad}_{U}(V):=[U,V]$ and $\text{ad}_{\mathcal{H}_{k,r}}^i$ means $\text{ad}_{\mathcal{H}_{k,r}}$ acting $i$ times on $\alpha_{-n}$. We inductively define a set of functions which appear in the expression of \eqref{eq: BCH}.

\begin{definition}\label{def: poly for one commutator}
    Functions $F_n^{(i)}(l)$ for $i\geq 0,\,n\neq0$ and $l\in\mathbb{Z}+\frac{1}{2}$ are inductively defined by letting 
    \[F_n^{(0)}(l):=1,\]

    \noindent and 
    \begin{align}
        F_n^{(i+1)}(l)&=\frac{(\hat{l}+n-ki)^{r'}}{r'!}F_n^{(i)}(l)-\frac{\hat{l}^{r'}}{r'!}F_n^{(i)}(l-k).\label{eq: induction for F}
    \end{align}
\end{definition}
\noindent Note that $\hat{l}$ and $r'$ are defined in \eqref{eq:definelhatrprime}, and we omit $k$ and $r$ in the notation of $F$ to simplify our formula. We will also do this for $G$ defined in Proposition \ref{prop: def of G}.

\begin{proposition}\label{prop: first conjugation}
    We have 
    \begin{align}
        e^{\hbar\mathcal{H}_{k,r}}\alpha_{-n}e^{-\hbar\mathcal{H}_{k,r}}
        =&\sum_{i=0}^{\infty}\frac{\hbar^i}{i!}\sum_{l\in \mathbb{Z}+\frac{1}{2}}F_{n}^{(i)}(l)E_{l+n-ki,l}+\frac{\hbar^{[n]_k}\delta_{\langle n\rangle_k}}{[n]_k!r'!}\sum_{\substack{l\in \mathbb{Z}+\frac{1}{2}\\
        0<l<k}}\hat{l}^{r'}F_{n}^{([n]_k-1)}(l-k).\label{eq:firstconjugate}
    \end{align}
\end{proposition}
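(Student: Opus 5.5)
The plan is to expand the left-hand side with the Baker--Campbell--Hausdorff formula \eqref{eq: BCH}. The whole proposition then reduces to a closed form for the iterated commutators $\mathrm{ad}^i_{\mathcal{H}_{k,r}}(\alpha_{-n})$, which I would establish by induction on $i$ using the general commutator formula \eqref{eq: compute commutator of general operator}.

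\textbf{Inductive claim.} Set $m:=[n]_k$. The statement to prove is
\[
\mathrm{ad}^i_{\mathcal{H}_{k,r}}(\alpha_{-n})=\sum_{l\in\mathbb{Z}+\frac12}F_n^{(i)}(l)\,E_{l+n-ki,l}+c_i,
\]
where $c_i$ is a scalar, given by
\[
c_i=\begin{cases}
0, & i\neq m \text{ or } \langle n\rangle_k\neq 0,\\[2pt]
\displaystyle\sum_{0<l<k}\frac{\hat l^{r'}}{r'!}F_n^{(m-1)}(l-k), & i=m \text{ and } \langle n\rangle_k=0.
\end{cases}
\]
The base case $i=0$ is immediate: $\alpha_{-n}=\sum_l E_{l+n,l}$, $F^{(0)}_n\equiv1$, and there is no constant term.

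\textbf{Inductive step.} I write $\mathcal{H}_{k,r}=\sum_l g_lE_{l-k,l}$ with $g_l=\hat l^{r'}/r'!$, and the $i$-th iterate as $\sum_l f_lE_{l-b,l}$ with $f_l=F^{(i)}_n(l)$ and $b=ki-n$. I then apply \eqref{eq: compute commutator of general operator} with $a=k$. The main term is
\[
\sum_l (g_{l-b}f_l-g_lf_{l-k})E_{l-(k+b),l}.
\]
Since
\[
g_{l-b}=\frac{(l-ki+n-\frac k2)^{r'}}{r'!}=\frac{(\hat l+n-ki)^{r'}}{r'!},
\]
this coefficient is exactly the recursion \eqref{eq: induction for F} for $F^{(i+1)}_n(l)$. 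The index also matches: $l-(k+b)=l+n-k(i+1)$.

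\textbf{Central terms.} There are two possible central contributions.
\begin{itemize}
\item The one carrying $\delta_{b>0}$ requires $a+b=0$ with $b>0$. This is impossible because $a=k>0$.
\item The one carrying $\delta_{a>0}$ survives only when $k+b=0$, i.e.\ $k(i+1)=n$. This forces $\langle n\rangle_k=0$ and $i+1=m$, and it contributes $\sum_{0<l<k}g_lf_{l-k}$, which is exactly the stated constant $c_m$.
\end{itemize}
Any scalar already present commutes with $\mathcal{H}_{k,r}$, so it is killed by all subsequent applications of $\mathrm{ad}$. Hence the constant occurs only at order $i=m$. This completes the induction.

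\textbf{Conclusion.} Inserting the claim into \eqref{eq: BCH}, the operator parts give the first sum in \eqref{eq:firstconjugate}. The unique scalar appears with coefficient $\hbar^{m}/m!$, which yields the second term, with $\hat l^{r'}/r'!$ rewritten as $\hat l^{r'}$ and the factor $1/r'!$ pulled out.

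\textbf{Main obstacle.} The only delicate point is the bookkeeping of the central term: I must check that the sign conventions of \eqref{eq: compute commutator of general operator} match, with the roles of $g$ and $f$ in the correct order, and that no further central term arises at the step where the operator has energy zero (the $E_{l,l}$ terms). I would verify this directly on the case $a=k$, $b=-k$. I would also remark that each coefficient of a fixed $E_{p,q}$ and a fixed power of $\hbar$ is a finite expression, so the formal manipulations are legitimate.
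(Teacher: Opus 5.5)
Your proposal is correct and is essentially the paper's own proof: expand by BCH, prove $\mathrm{ad}^i_{\mathcal{H}_{k,r}}(\alpha_{-n})=\sum_l F_n^{(i)}(l)E_{l+n-ki,l}+c_i$ by induction using \eqref{eq: compute commutator of general operator} with $a=k$, and observe that a scalar arises only from the $\delta_{a>0}$ piece at the step where $k(i+1)=n$, after which further commutators annihilate it. Your caution about signs is well placed but harmless here: the $\delta_{b>0}$ central term in \eqref{eq: compute commutator of general operator} should actually carry a minus sign (test it on $[\alpha_{-n},\alpha_n]=-n$), but, as you note, that term never contributes since $a=k>0$.
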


\begin{proof}
    By equation \eqref{eq: BCH}, we only need to calculate $\text{ad}^i_{\mathcal{H}_{k,r}}(\alpha_{-n})$. We use induction to show that
    \begin{equation}\label{eq: ind to prove adHi}
        \text{ad}^i_{\mathcal{H}_{k,r}}(\alpha_{-n})=\sum_{l\in \mathbb{Z}+\frac{1}{2}}F_{n}^{(i)}(l)E_{l+n-ki,l}+\text{scalar multiplication term}.
    \end{equation}
    For $i=0$, this follows from
    \[\alpha_{-n}=\sum_{l\in\mathbb{Z}+\frac{1}{2}}E_{l+n,l}=\sum_{l\in\mathbb{Z}+\frac{1}{2}}F_n^{(0)}(l)E_{l+n,l}.\]
    Now suppose equation \eqref{eq: ind to prove adHi} holds for $i$, then since commutator of any operator with scalar multiplication is 0
    \begin{align*}
        \text{ad}^{i+1}_{\mathcal{H}_{k,r}}(\alpha_{-n})=&[\sum_{l\in \mathbb{Z}+\frac{1}{2}}\frac{\hat{l}^{r'}}{r'!}E_{l-k,l},\sum_{l\in \mathbb{Z}+\frac{1}{2}}F_{n}^{(i)}(l)E_{l+n-ki,l}]\\
        =&\sum_{l\in \mathbb{Z}+\frac{1}{2}}(\frac{(\hat{l}+n-ki)^{r'}}{r'!}F_{n}^{(i)}(l)-\frac{\hat{l}^{r'}}{r'!}F_{n}^{(i)}(l-k))E_{l+n-(i+1)k,l}\\
        &+\text{scalar multiplication term}\\
        =&\sum_{l\in \mathbb{Z}+\frac{1}{2}}F_{n}^{(i+1)}(l)E_{l+n-(i+1)k,l}+\text{scalar multiplication term}.
    \end{align*}
    The second equality follows from equation \eqref{eq: compute commutator of general operator}, and the last equality follows from equation \eqref{eq: induction for F}. So the induction procedure verifies the non-scalar multiplication term in the proposition. For the scalar multiplication term, equation \eqref{eq: compute commutator of general operator} tells us that it occurs only if $k\,|\,n$ and $i=[n]_k$, and in this case, it is
    \[\sum_{\substack{l\in \mathbb{Z}+\frac{1}{2}\\
        0<l<k}}\frac{\hat{l}^{r'}}{r'!}F_{n}^{([n]_k-1)}(l-k).\]
    The proposition is thus proved.
\end{proof}

Define for any function in $l$ the backward difference operator
\[\nabla f(l):=f(l)-f(l-1).\]

\begin{proposition}\label{prop: second conjugation}
    We have 
    \begin{equation}\label{eq: second conjugate}
    \begin{split}
&e^{\alpha_1}e^{\hbar\mathcal{H}_{k,r}}\alpha_{-n}e^{-\hbar\mathcal{H}_{k,r}}e^{-\alpha_1}\\=&\sum_{i=0}^{\infty}\sum_{t=0}^{\infty}\frac{\hbar^i}{i!t!}\sum_{l\in \mathbb{Z}+\frac{1}{2}}\nabla^t F_n^{(i)}(l)E_{l+n-ki-t,l}
        +\frac{\hbar^{[n]_k}\delta_{\langle n\rangle_k}}{[n]_k!r'!}\sum_{\substack{l\in \mathbb{Z}+\frac{1}{2}\\
        0<l<k}}\hat{l}^{r'}F_{n}^{([n]_k-1)}(l-k)\\
        &+(1-\delta_{\langle n\rangle_k})\frac{\hbar^{[n]_k}}{[n]_k!\langle n\rangle_k!}\nabla^{\langle n\rangle_k-1} F_n^{([n]_k)}(-\frac{1}{2})+\sum_{i=0}^{[n]_k-1}\frac{\hbar^i}{i!(n-ki)!}\nabla^{n-ki-1} F_n^{(i)}(-\frac{1}{2}).
    \end{split}
    \end{equation}
\end{proposition}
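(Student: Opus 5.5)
We start from Proposition \ref{prop: first conjugation} and conjugate each term of \eqref{eq:firstconjugate} by $e^{\alpha_1}$. The scalar term commutes with everything, so it passes through unchanged and gives the second term of \eqref{eq: second conjugate}. It therefore suffices to compute $e^{\alpha_1}\big(\sum_l f(l)E_{l+m,l}\big)e^{-\alpha_1}$ for $f=F_n^{(i)}$ and $m=n-ki$.

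By BCH this conjugation equals $\sum_t \frac{1}{t!}\mathrm{ad}_{\alpha_1}^t$ applied to the operator. We apply \eqref{eq: compute commutator of general operator} with $g_l\equiv 1$ and $a=1$ (since $\alpha_1=\sum_l E_{l-1,l}$), and with $b=-m$. This gives
\[
[\alpha_1,\textstyle\sum_l f(l)E_{l+m,l}]=\sum_l \big(f(l)-f(l-1)\big)E_{l+m-1,l}+\text{scalar}=\sum_l\nabla f(l)\,E_{l+m-1,l}+\text{scalar}.
\]
Iterating by induction on $t$ shows that the non-scalar part of $\mathrm{ad}^t_{\alpha_1}$ is $\sum_l\nabla^t f(l)E_{l+m-t,l}$. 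This produces the double sum in \eqref{eq: second conjugate}.

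The scalar terms arise only at the step where the energy crosses zero, i.e. when $a+b=0$ with $a=1$ and $b=-(m-t+1)$. This happens exactly at $t=m$, the step going from $\nabla^{m-1}f$ with shift $1$ to shift $0$. There the correction is $\delta_{a>0}\sum_{0<l<1}g_l f_{l-1}$, and the only half-integer in $(0,1)$ is $l=\frac12$. So the scalar is $\nabla^{m-1}f(-\frac12)$. Each such scalar is then killed by all further $\mathrm{ad}_{\alpha_1}$, so it contributes $\frac{1}{m!}\nabla^{m-1}F_n^{(i)}(-\frac12)$, multiplied by $\hbar^i/i!$, and only when $m=n-ki\ge1$.

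The bookkeeping over $i$ then splits into three cases:
\begin{itemize}
\item For $0\le i\le [n]_k-1$ we have $m\ge k\ge1$, which gives the last sum.
\item For $i=[n]_k$ we have $m=\langle n\rangle_k$. This contributes only if $\langle n\rangle_k\ne0$, which gives the term with $(1-\delta_{\langle n\rangle_k})$.
\item For $i>[n]_k$ we have $m<0$. A crossing would require $t=m<0$, so there is no scalar.
\end{itemize}

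The main point to check carefully is the sign and placement of the scalar from \eqref{eq: compute commutator of general operator}: which of the two $\delta$-terms fires, and that it evaluates at $l-1=-\frac12$. We would verify this against the normal-ordering convention for $E_{ij}$ on a small case, such as $[\alpha_1,\alpha_{-1}]=1$, which corresponds to $f\equiv1$, $m=1$ and gives $\nabla^0 f(-\frac12)=1$, consistent with the formula.
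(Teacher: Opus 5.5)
Your proposal is correct and follows the paper's own proof. Both pass the scalar term of Proposition~\ref{prop: first conjugation} through unchanged, and both show by induction on $t$ via \eqref{eq: compute commutator of general operator} that the non-scalar part of $\mathrm{ad}^t_{\alpha_1}$ is $\sum_l\nabla^t F_n^{(i)}(l)E_{l+n-ki-t,l}$; the scalars then appear only at the single step $t=n-ki$ where the energy reaches zero, giving the third and fourth terms, and your case split over $i$ simply makes explicit what the paper leaves implicit.
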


\begin{proof}
    Since the conjugation of the scalar multiplication term in equation \eqref{eq:firstconjugate} by $e^{\alpha_1}$ is itself, we only need to compute
    \[e^{\alpha_1}(\sum_{i=0}^{\infty}\frac{\hbar^i}{i!}\sum_{l\in \mathbb{Z}+\frac{1}{2}}F_{n}^{(i)}(l)E_{l+n-ki,l})e^{-\alpha_1}\]
    and prove that it is equal to the sum of the first, third and fourth terms on the right hand side of equation \eqref{eq: second conjugate}.

    We use induction to show that
    \begin{align}
        &\text{ad}_{\alpha_1}^t(\sum_{l\in \mathbb{Z}+\frac{1}{2}}F_{n}^{(i)}(l)E_{l+n-ki,l})=\sum_{l\in \mathbb{Z}+\frac{1}{2}}\nabla^t F_n^{(i)}(l)E_{l+n-ki-t,l}+\text{scalar multiplication term}.\label{eq: adt}
    \end{align}
    For $t=0$ this is true. Suppose it is true for $t$, then since $\alpha_1=\sum_{l\in\mathbb{Z}+\frac{1}{2}}E_{l-1,l}$,
    \begin{align*}
        &\text{ad}_{\alpha_1}(\sum_{l\in \mathbb{Z}+\frac{1}{2}}\nabla^t F_n^{(i)}(l)E_{l+n-ki-t,l})\\
        =&\sum_{l\in \mathbb{Z}+\frac{1}{2}}(\nabla^t F_n^{(i)}(l)-\nabla^t F_n^{(i)}(l-1))E_{l+n-ki-t-1,l}+\text{scalar multiplication term}\\
        =&\sum_{l\in \mathbb{Z}+\frac{1}{2}}\nabla^{t+1} F_n^{(i)}(l)E_{l+n-ki-t-1,l}+\text{scalar multiplication term}.
    \end{align*}
    Thus induction shows that equation \eqref{eq: adt} is true.

    We now need to compute the scalar multiplication term. If $\langle n\rangle_k\neq 0$, the commutator
    \[\text{ad}_{\alpha_1}(\sum_{l\in \mathbb{Z}+\frac{1}{2}}\nabla^{\langle n\rangle_k-1} F_n^{([n]_k)}(l)E_{l+1,l})\]
    produces the third term on the right hand side of equation \eqref{eq: second conjugate}. And the commutator
    \[\text{ad}_{\alpha_1}(\sum_{i=0}^{[n]_k-1}\sum_{l\in \mathbb{Z}+\frac{1}{2}}\nabla^{n-ki-1} F_n^{(i)}(l)E_{l+1,l})\]
    produces the last term on the right hand side of equation \eqref{eq: second conjugate}. The proposition is proved.
\end{proof}

\subsection{Dependence of function $F$ on parameters}

We study in this subsection the dependence of $F_n^{(i)}(l)$ on parameters $i$ and $n$. We first take out the leading coefficient of $F_{n}^{(i)}(l)$.

\begin{proposition}\label{prop: def of G}
    Define 
    \[G_n^{(i)}(l):=\frac{(r'!)^i}{(r'n)_i}F_{n}^{(i)}(l).\]
    Recall that $(\cdots)_i$ is the $k$-Pochhammer symbol. Then they satisfy the following recursive relation
    \begin{equation}\label{eq: recurrence for G}
        G_n^{(i+1)}(l)=\frac{(\hat{l}+n-ki)^{r'}}{r'n-ki}G_n^{(i)}(l)-\frac{\hat{l}^{r'}}{r'n-ki}G_n^{(i)}(l-k).
    \end{equation}
\end{proposition}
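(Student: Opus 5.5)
This is a direct substitution of the definition of $G$ into the recurrence \eqref{eq: induction for F} of Definition \ref{def: poly for one commutator}; nothing beyond the $k$-Pochhammer identity is needed.

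First I would record the one-step identity for the $k$-Pochhammer symbol from Definition \ref{def: Pochhammer}. For $i\geq 0$,
\[(r'n)_{i+1}=(r'n)_i\,(r'n-ki).\]
For $i>0$ this holds because the product $\prod_{j=0}^{i}(r'n-jk)$ simply gains its last factor. For $i=0$ it holds since $(r'n)_1=r'n$. Hence $\frac{(r'!)^{i+1}}{(r'n)_{i+1}}=\frac{(r'!)^i}{(r'n)_i}\cdot\frac{r'!}{r'n-ki}$ whenever $r'n-ki\neq 0$.

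Next, I multiply both sides of \eqref{eq: induction for F} by $\frac{(r'!)^{i+1}}{(r'n)_{i+1}}$. On the left this gives $G_n^{(i+1)}(l)$ by definition. On the right, the factor $\frac{r'!}{r'n-ki}$ cancels the $\frac{1}{r'!}$ in each coefficient. The remaining factor $\frac{(r'!)^i}{(r'n)_i}$ converts $F_n^{(i)}(l)$ and $F_n^{(i)}(l-k)$ into $G_n^{(i)}(l)$ and $G_n^{(i)}(l-k)$; this uses that the normalizing constant does not depend on $l$. The result is exactly \eqref{eq: recurrence for G}.

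The only delicate point is well-definedness: we need $(r'n)_i\neq 0$, i.e.\ $r'n-jk\neq 0$ for $0\leq j<i$. I would state that $G$, and hence the recursion, is understood for those $i$ where the Pochhammer symbol does not vanish. Alternatively, one can treat $n$ as a formal variable, so the identity holds as rational functions in $n$ and specializes wherever the denominators are nonzero. With that convention the proposition follows immediately.
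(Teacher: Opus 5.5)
Your proposal is correct and is essentially the paper's proof: substitute the definition of $G$ into the recurrence \eqref{eq: induction for F} and use $(r'n)_{i+1}=(r'n-ki)(r'n)_i$ to cancel the normalizing factors. Your added remark about well-definedness (treating $n$ formally, or restricting to non-vanishing Pochhammer symbols) is a harmless clarification that the paper leaves implicit.
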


\begin{proof}
    Substituting $G$ for $F$ in equation \eqref{eq: induction for F} we get
    \begin{align*}
        \frac{(r'n)_{i+1}}{(r'!)^{i+1}}G_n^{(i+1)}(l)=\frac{(\hat{l}+n-ki)^{r'}}{r'!}\frac{(r'n)_i}{(r'!)^i}G_n^{(i)}(l)-\frac{\hat{l}^{r'}}{r'!}\frac{(r'n)_i}{(r'!)^i}G_n^{(i)}(l-k).
    \end{align*}
    This simplifies to equation \eqref{eq: recurrence for G} because 
    \[(r'n)_{i+1}=(r'n-ki)(r'n)_i.\]
\end{proof}

\begin{lemma}\label{lem: upper bound degree G l}
    $G_n^{(i)}(l)$ is a monic polynomial in $l$ of degree $ir$.
\end{lemma}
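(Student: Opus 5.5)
We argue by induction on $i$ using the recurrence \eqref{eq: recurrence for G} of Proposition \ref{prop: def of G}. For $i=0$ we have $G_n^{(0)}(l)=F_n^{(0)}(l)=1$, which is monic of degree $0$. Assume $G_n^{(i)}(l)=l^{ir}+c_1l^{ir-1}+c_2l^{ir-2}+\cdots$ is monic of degree $ir$. We must show that the right-hand side of \eqref{eq: recurrence for G} is monic of degree $(i+1)r$.

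The key observation is that the two terms on the right-hand side each have degree $ir+r'=ir+r+1$, and their leading coefficients cancel. So the task is to compute the coefficients of $l^{ir+r+1}$ and $l^{ir+r}$ in
\[(\hat l+n-ki)^{r'}G_n^{(i)}(l)-\hat l^{r'}G_n^{(i)}(l-k),\]
and then divide by $r'n-ki$. To do this we write everything in the variable $\hat l$, noting that $G_n^{(i)}(l)$ is also a monic polynomial in $\hat l$ of degree $ir$, say $\hat l^{ir}+d\,\hat l^{ir-1}+\cdots$. Then:
\begin{itemize}
\item $(\hat l+n-ki)^{r'}=\hat l^{r'}+r'(n-ki)\hat l^{r}+\cdots$;
\item $G_n^{(i)}(l-k)=\hat l^{ir}+(d-kir)\hat l^{ir-1}+\cdots$, since shifting the argument by $-k$ changes the subleading coefficient of a monic degree-$ir$ polynomial by $-k\cdot ir$.
\end{itemize}

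With these expansions the top coefficient of the difference is $1-1=0$. The coefficient of $\hat l^{ir+r}$ is
\[\bigl(d+r'(n-ki)\bigr)-\bigl(d-kir\bigr)=r'n-kir'+kir=r'n-ki.\]
Dividing by $r'n-ki$ shows that $G_n^{(i+1)}$ is monic of degree $ir+r=(i+1)r$ in $\hat l$, and hence in $l$, since $\hat l=l-\tfrac k2$ is a shift.

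The main point requiring care is the division by $r'n-ki$, which is defined only when $r'n\neq ki$. Where $(r'n)_i$ vanishes, $G$ is defined only formally. I would handle this by treating $n$ as a formal or generic parameter, so the statement holds as polynomials with coefficients rational in $n$. Alternatively, one can restrict to the values of $n$ and $i$ for which the Pochhammer symbols are nonzero, which is implicitly the paper's convention.

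Apart from this, the only delicate step is the second-order coefficient bookkeeping above. In particular, one must confirm that the $d$ terms cancel exactly, so that the result does not depend on the unknown subleading coefficient of $G_n^{(i)}$.
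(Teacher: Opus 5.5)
Your proposal is correct and follows essentially the same route as the paper: induction on $i$ via the recurrence \eqref{eq: recurrence for G}, with the top coefficients cancelling ($1-1=0$) and the subleading coefficient of the difference equal to $r'n-ki$, with the unknown subleading term $d$ of $G_n^{(i)}$ cancelling. The only difference is cosmetic: you expand in $\hat l$ rather than $l$, which removes the $-\frac{k}{2}$ terms that appear in the paper's computation. Your remark on treating $n$ as generic is consistent with how the paper later regards these quantities as rational functions of $n$.
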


\begin{proof}
    We can see this by induction on $i$. For $i=0$, $G_n^{(0)}(l)=F_n^{(0)}(l)=1$. Suppose $G_n^{(i)}(l)$ is a monic polynomial in $l$ of degree $ir$:   
    \[G_n^{(i)}(l)=l^{ir}+al^{ir-1}+\text{lower degree terms}.\]
    Then the degree of $G_n^{(i+1)}(l)$ is not greater than $ir+r+1$ since $\frac{(\hat{l}+n-ki)^{r'}}{r'n-ki}$ and $\frac{\hat{l}^{r'}}{r'n-ki}$ are polynomials in $l$ of degree $r+1$. The coefficient of $l^{(i+1)r+1}$ is $1-1=0$, and the coefficient of $l^{(i+1)r}$ is 
    \[\frac{r'(n-ki-\frac{k}{2})-(-\frac{k}{2})r'-(-k)ir+a-a}{r'n-ki}\]
    which is equal to 1. The lemma is thus proved.
\end{proof}

 We now introduce a basis for the polynomial ring $\mathbb{Q}[l]$, namely
    \begin{equation}\label{def: shifted basis}
        \{\hat{l\vphantom{}}^{\underline{i}}:=(\hat{l}-[i]_rk)^{\langle i\rangle_r}\cdot\prod_{j=0}^{[i]_r-1}(\hat{l}-jk)^r\}.
    \end{equation}
    We will expand $G_n^{(i)}(l)$ with respect to this basis. We give a lemma to relate the basis $\{\hat{l\vphantom{}}^{\underline{ir-j}}\;|\;j\geq0\}$ with $\{l^{ir-j}\;|\;j\geq0\}$.

\begin{lemma}\label{lem: base change between polynomial basis}
    For any fixed $0\leq j\leq s$, when $\hat{l\vphantom{}}^{\underline{ir-j}}$ is written as linear combination of $\{l^{ir-t}\;|\;t\geq j\}$ as polynomials in $l$, the coefficient of $l^{ir-s}$ is a polynomial in $i$, which depends on $k,r,j,s$.
\end{lemma}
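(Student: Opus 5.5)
Write $m:=ir-j$, $q:=[m]_r$, $\rho:=\langle m\rangle_r$. By the definition \eqref{def: shifted basis},
\[\hat{l\vphantom{}}^{\underline{m}}=(\hat{l}-qk)^{\rho}\prod_{c=0}^{q-1}(\hat{l}-ck)^r,\]
and we want the coefficient of $l^{ir-s}=l^{m-(s-j)}$, i.e.\ the coefficient of the $(s-j)$-th power down from the top degree.

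The plan is to express $q$ and $\rho$ through $i$ and then apply elementary symmetric functions. Since $j$ is fixed, write $j=r[j]_r+\langle j\rangle_r$. If $\langle j\rangle_r=0$, then $q=i-[j]_r$ and $\rho=0$. Otherwise $q=i-[j]_r-1$ and $\rho=r-\langle j\rangle_r$. In both cases $\rho$ depends only on $j,r$, and $q=i-c_0$ with $c_0$ depending only on $j,r$.

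Next, substitute $\hat{l}=l-\frac{k}{2}$. Then $\hat{l\vphantom{}}^{\underline{m}}$ is the product of the linear factors $l-\beta$ over the multiset of roots
\[\beta\in\Big\{\tfrac{k}{2}+ck\ (\text{each with multiplicity } r),\ 0\le c\le q-1\Big\}\cup\Big\{\tfrac{k}{2}+qk\ (\text{multiplicity }\rho)\Big\}.\]
The coefficient of $l^{m-d}$, with $d=s-j$, is $(-1)^d e_d$ of this root multiset. Use generating functions:
\[\sum_d e_d\,x^d=(1+(\tfrac{k}{2}+qk)x)^{\rho}\prod_{c=0}^{q-1}(1+(\tfrac{k}{2}+ck)x)^r.\]
The first factor contributes a coefficient polynomial in $q$ (hence in $i$), because $\rho$ is fixed. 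For the product, take logarithms:
\[\log\prod_{c=0}^{q-1}(1+(\tfrac{k}{2}+ck)x)^r=r\sum_{t\ge1}\frac{(-1)^{t-1}x^t}{t}\,p_t(q),\qquad p_t(q)=\sum_{c=0}^{q-1}\Big(\tfrac{k}{2}+ck\Big)^t.\]
By Faulhaber's formula, each $p_t(q)$ is a polynomial in $q$ of degree $t+1$. Exponentiating, the coefficient of $x^d$ is a finite polynomial expression in $p_1,\dots,p_d$, hence a polynomial in $q$. Multiplying the two series and extracting $[x^d]$ keeps polynomiality.

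Finally, substitute $q=i-c_0$. The result is a polynomial in $i$ whose coefficients depend only on $k,r,j,s$, as claimed.

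The main subtlety is that the definition of $\hat{l\vphantom{}}^{\underline{m}}$ involves floors, so it is not a priori "uniform" in $i$. This is resolved by fixing $j$, which fixes $\langle m\rangle_r$ and the offset in $[m]_r$. The claim should be understood for $i$ large enough that $q\ge0$. For small $i$, the statement is either vacuous, or the same formula still holds provided $p_t$ is interpreted via its Faulhaber polynomial. We would note this range restriction explicitly.
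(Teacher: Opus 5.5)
Your argument is correct, but it uses a different key tool from the paper's proof.

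\textbf{The paper's route.} The paper rewrites $\hat{l\vphantom{}}^{\underline{ir-j}}=\big((\hat l)_{i-[j]_r-1}\big)^r(\hat l-ik+[j]_rk+k)^{r-\langle j\rangle_r}$. It expands the $k$-Pochhammer symbol in powers of $\hat l$ through Stirling numbers of the first kind, and proves by induction on their recurrence that ${i\brack i-t}$ is a polynomial in $i$ for fixed $t$. It then combines the $r$ copies by a multi-index convolution. Only at the end does it pass from the basis $\hat l^{\,ir-t}$ to $l^{ir-s}$, via the binomial coefficients $\binom{ir-t}{s-t}$.

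\textbf{Your route.} You put the shift $-\frac k2$ into the roots, so you work in $l$ from the start. You replace the Stirling-number lemma by the exp--log expression of $e_d$ in terms of power sums, together with Faulhaber's formula. The multiplicity $r$ then becomes a harmless scalar in the logarithm. So no convolution over $r$ factors and no final change of basis are needed.

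\textbf{Comparison.} Your version is shorter and more uniform, while the paper's ties the computation to a classical Stirling-number identity. The underlying mechanism is the same: the paper's proof that ${i\brack i-t}$ is polynomial also reduces to summing a polynomial in $u$ over $t+1\le u\le i-1$, which is Faulhaber again.

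\textbf{On your range caveat.} It is not an extra restriction. The condition $q=[ir-j]_r\ge 0$ holds exactly when $ir\ge j$, which is exactly when $\hat{l\vphantom{}}^{\underline{ir-j}}$ is defined. At $q=0$ the Faulhaber polynomials give $p_t(0)=0$. So your identity holds on the whole range of the statement, and the hedge about reinterpreting $p_t$ for smaller $i$ can be dropped.
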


\begin{proof}
    We first show the case of expansion with respect to $\{\hat{l}^{ir-t}\;|\;t\geq j\}$. We recall that
    \begin{equation}\label{eq: base change by stirling number}
        (\hat{l})_i=\sum_{t=0}^i(-k)^t {i\brack i-t}\hat{l}^{i-t},
    \end{equation}
    where $i\brack i-t$ is the first kind Stirling number. This can be obtained for example by letting $x=\frac{\hat{l}}{k}$ in equation (6.13) of \cite{GKP}.

    \vspace{10pt}

    \begin{claim}
        The ${i \brack i-t}$ is a polynomial in $i$ for fixed $t$.
    \end{claim}

    \vspace{5pt}

    \noindent \emph{Proof of the Claim}.
    We use induction on $t$ to prove the claim. For $t=0$, the claim is true since ${i \brack i}=1$. Suppose the claim is true for $t$. Since the Stirling numbers satisfy the recursive relation (see page 264 in \cite{GKP})
    \[{i \brack i-t-1}-{i-1 \brack i-t-2}=(i-1){i-1 \brack i-t-1},\]
    we have 
    \[{i \brack i-t-1}={t+1 \brack 0}+\sum_{u=t+1}^{i-1}u{u\brack u-t}=\sum_{u=t+1}^{i-1}u{u\brack u-t}.\]
    By induction hypothesis this is a summation of a polynomial in $u$ hence is a polynomial in $i$ and $t$. Since $t$ is fixed, we have thus proved the claim.

    \vspace{10pt}

    We return to the proof of the lemma. Since 
    \begin{align*}
        [ir-j]_r&=i-[j]_r-1+\delta_{\langle j\rangle_r},\\
        \langle ir-j\rangle_r&=r-\langle j\rangle_r-r\delta_{\langle j\rangle_r},
    \end{align*}
    it can be seen that
    \begin{equation}\label{eq: decompose hatl exponent}
        \hat{l\vphantom{}}^{\underline{ir-j}}=((\hat{l})_{i-[j]_r-1})^r\cdot (\hat{l}-ik+[j]_rk+k)^{r-\langle j\rangle_r},
    \end{equation}
    no matter $\langle j\rangle_r$ vanishes or not. By equation \eqref{eq: base change by stirling number} and the claim, $(\hat{l})_{i-[j]_r-1}$, when written in basis $\hat{l}^{i-t}$, has polynomial coefficient in $i$ for fixed $t$. Then $\hat{l}^{ir-s}$ comes from that of the form
    \[\hat{l}^{|\mathbf{t}|-s}\prod_{a=1}^r\hat{l}^{i-t_a},\]
    where $\mathbf{|t|}\leq s+r-\langle j\rangle_r$ and $t_a\geq[j]_r+1$ for any $1\leq a\leq r$. Here $\hat{l}^{i-t_a}$ comes from $(\hat{l})_{i-[j]_r-1}$ and $\hat{l}^{|\mathbf{t}|-s}$ comes from $(\hat{l}-ik+[j]_rk+k)^{r-\langle j\rangle_r}$. It is a finite sum of products of polynomials in $i$, with coefficients depending on specific values of $k,r,j,s$.
    
    At last we change the basis $\{\hat{l}^{ir-t}\;|\;t\geq j\}$ into $\{l^{ir-t}\;|\;t\geq j\}$. Since the coefficient of $l^{ir-s}$ in $\hat{l}^{ir-t}=(l-\frac{k}{2})^{ir-t}$ is $\binom{ir-t}{s-t}\cdot(-\frac{k}{2})^{s-t}$, it is still a polynomial in $i$. As a consequence the coefficient of $l^{ir-s}$ in $\hat{l\vphantom{}}^{\underline{ir-j}}$ is a sum of finite polynomials hence is still a polynomial in $i$, with coefficients depending on $k,r,j,s$. The lemma is thus proved.
\end{proof}

By Lemma \ref{lem: upper bound degree G l}, we let
\begin{equation}\label{eq: expansion of G}
    G_n^{(i)}(l)=\sum_{j=0}^{ir}g_j(i,n)\hat{l\vphantom{}}^{\underline{ir-j}},
\end{equation}
with $g_0(i,n)=1$. And we simply let $g_j(i,n)=0$ for $j> ir$. 

\begin{lemma}\label{lem: induction for g}
    The coefficients $g_s(i,n)$ satisfy the following recurrence relation 
    \begin{equation}\label{eq: induction for g}
        g_s(i+1,n)=\frac{r'n-ki-ks}{r'n-ki}g_s(i,n)+\sum_{j=s-r}^{s-1}\frac{C_{s,j}(n)}{r'n-ki}g_j(i,n)
    \end{equation}
    with $g_0(i,n)=1$. Here $C_{s,j}(n)$ are polynomials in $n$ with coefficients depending on $k,r,s,j$. In particular, $C_{s,j}(n)$ do not depend on $i$.
\end{lemma}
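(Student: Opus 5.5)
The plan is to substitute the expansion \eqref{eq: expansion of G} into the recurrence \eqref{eq: recurrence for G} and re-expand each term in the basis \eqref{def: shifted basis} at level $i+1$. We first verify that, for each fixed $j$, the product of $\hat{l\vphantom{}}^{\underline{ir-j}}$ with $(\hat{l}+n-ki)^{r'}$, and with $\hat l^{r'}$ after the shift $l\mapsto l-k$, has an expansion whose coefficients do not depend on $i$ apart from the diagonal term.

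\textbf{Step 1: a common factor.} Write $ir-j=ar+b$ with $a=[ir-j]_r$, $b=\langle ir-j\rangle_r$, and set $P(\hat l)=\hat{l\vphantom{}}^{\underline{ir-j}}=\prod_{m=0}^{a-1}(\hat l-mk)^r(\hat l-ak)^b$. Shifting $l\mapsto l-k$ gives $P(\hat l-k)=\prod_{m=1}^{a}(\hat l-mk)^r(\hat l-(a+1)k)^b$. Hence
\[\hat l^{r'}P(\hat l-k)=\hat l\cdot \hat{l\vphantom{}}^{\underline{ir-j+r}},\qquad \hat{l\vphantom{}}^{\underline{ir-j+r}}=P(\hat l)\,(\hat l-ak)^{r-b}(\hat l-(a+1)k)^b .\]
The contribution of $g_j(i,n)$ to the numerator of $G^{(i+1)}_n$ is therefore $P(\hat l)\,Q_j(\hat l)$, where
\[Q_j(\hat l)=(\hat l+n-ki)^{r+1}-\hat l\,(\hat l-ak)^{r-b}(\hat l-(a+1)k)^b .\]
This has degree $\le r$, since the leading terms cancel.

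\textbf{Step 2: removing the $i$-dependence.} Put $x=\hat l-ak$. By the formula for $[ir-j]_r$ in the proof of Lemma \ref{lem: base change between polynomial basis}, $a=i-[j]_r-1+\delta_{\langle j\rangle_r}$, so $n-ki+ak$ depends only on $n,k,j$ and not on $i$. Splitting $\hat l=x+ak$ in the second summand gives
\[Q_j=(x+c_j(n))^{r+1}-x^{r-b+1}(x-k)^b-ak\,x^{r-b}(x-k)^b ,\qquad c_j(n):=n-ki+ak .\]
The key observation is that the last piece, multiplied by $P$, equals exactly $ak\cdot\hat{l\vphantom{}}^{\underline{ir-j+r}}$. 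So all of its $i$-dependence sits in the diagonal term. Note that $b$ depends only on $j$.

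\textbf{Step 3: re-expanding in the basis.} The basis elements $\hat{l\vphantom{}}^{\underline{m}}$ for $ir-j\le m\le ir-j+r$ are $P(\hat l)$ times products of powers of $x$ and $x-k$ determined by $b$ and $m-(ir-j)$ alone. Consequently the triangular change of basis expressing $P(\hat l)\cdot x^{t}$, $t\le r$, in terms of $\hat{l\vphantom{}}^{\underline{ir-j+t'}}$ has coefficients depending only on $k,r,j$. It follows that $P\cdot\big[(x+c_j)^{r+1}-x^{r-b+1}(x-k)^b\big]$ expands in $\hat{l\vphantom{}}^{\underline{(i+1)r-s}}$, $s=j+r-t'\in[j,j+r]$, with coefficients that are polynomials in $n$ independent of $i$. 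These coefficients are the $C_{s,j}(n)$ for $s>j$. Dividing by $r'n-ki$ as in \eqref{eq: recurrence for G} gives the sum in \eqref{eq: induction for g}.

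\textbf{Step 4: the diagonal coefficient (main obstacle).} It remains to check that the coefficient of $\hat{l\vphantom{}}^{\underline{(i+1)r-s}}$ coming from $g_s$ itself is exactly $r'n-ki-ks$. This coefficient is the $x^r$-coefficient of $Q_s$, namely
\[(r+1)c_s-(-bk)-ak=(r+1)(n-ki)+rak+bk=r'n-ki-k\big(ir-(ar+b)\big)=r'n-ki-ks ,\]
using $ar+b=ir-s$. (The same leading-coefficient calculation appeared in Lemma \ref{lem: upper bound degree G l}.) The point to handle carefully is the bookkeeping of $a,b$ when $\langle s\rangle_r=0$, where the formula for $a$ jumps. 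The identity $ar+b=ir-s$ holds uniformly, which should make the case distinction harmless. Finally we check the boundary conventions $g_j=0$ for $j>ir$ and $g_0=1$, using that the recurrence for $s=0$ reduces to the monicity in Lemma \ref{lem: upper bound degree G l}.
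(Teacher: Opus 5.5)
Your proposal is correct and follows essentially the same route as the paper: substitute the expansion into the recurrence for $G$, use $\hat{l}^{r'}\,\widehat{(l-k)}^{\underline{m}}=\hat{l}\cdot\hat{l}^{\underline{m+r}}$ to confine the contributions to $s-r\le j\le s$, compute the diagonal coefficient $r'n-ki-ks$ from the degree-$r$ terms, and obtain $i$-independence of the off-diagonal coefficients by expanding relative to $\hat{l}^{\underline{ir-j}}$, which is the content of Lemma \ref{lem:claim}. Your bookkeeping is slightly cleaner: you merge the two summations into the single polynomial $Q_j$ of degree at most $r$ and center at $x=\hat{l}-[ir-j]_r k$. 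This isolates all $i$-dependence in the term $ak\,\hat{l}^{\underline{(i+1)r-j}}$, so the paper's separate cancellation of the $j=s+1$ terms and its case split on $\langle j\rangle_r=0$ become automatic.
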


\begin{proof}
    Notice that $\hat{l\vphantom{}}^{\underline{i+r}}=\hat{l}^r\cdot \widehat{(l-k)}^{\underline{i}}$, hence equation \eqref{eq: recurrence for G} is just
    \begin{align*}
&\sum_{j=0}^{(i+1)r}g_j(i+1,n)\hat{l\vphantom{}}^{\underline{(i+1)r-j}}=\frac{(\hat{l}+n-ki)^{r+1}}{r'n-ki}\sum_{j=0}^{ir}g_j(i,n)\hat{l\vphantom{}}^{\underline{ir-j}}-\frac{\hat{l}}{r'n-ki}\sum_{j=0}^{ir}g_j(i,n)\hat{l\vphantom{}}^{\underline{(i+1)r-j}}.
    \end{align*}

    We first assume $s\leq ir$. The coefficient of $\hat{l\vphantom{}}^{\underline{(i+1)r-s}}$ on the left hand side is $g_s(i+1,n)$. To compute the coefficient of $\hat{l\vphantom{}}^{\underline{(i+1)r-s}}$ on the right hand side, we first exclude terms that give no contribution. For the first summation on the right hand side, terms from $j=0$ to $s-r-1$ have degree in $l$ greater than $(i+1)r-s$, hence give no contribution. Still for the first summation, terms from $j=s+2$ to $ir$ have degree smaller than $(i+1)r-s$, hence still give no contribution. For the second summation, similar argument shows that terms from $j=0$ to $s-1$ and from $j=s+2$ to $ir$ give no contribution. 
    
    The contribution of 
\[\frac{(\hat{l}+n-ki)^{r+1}}{r'n-ki}g_{s+1}(i,n)\hat{l\vphantom{}}^{\underline{ir-s-1}}\]
will cancel out with that of 
\[-\frac{\hat{l}}{r'n-ki}g_{s+1}(i,n)\hat{l\vphantom{}}^{\underline{(i+1)r-s-1}}.\]
So the coefficient of $\hat{l\vphantom{}}^{\underline{(i+1)r-s}}$ on the right hand side comes from
\begin{equation}\label{form: total contribution to l (i+1)r-s}
    \frac{(\hat{l}+n-ki)^{r'}}{r'n-ki}\sum_{j=s-r}^{s}g_j(i,n)\hat{l\vphantom{}}^{\underline{ir-j}}-\frac{\hat{l}}{r'n-ki}g_s(i,n)\hat{l\vphantom{}}^{\underline{(i+1)r-s}}.
\end{equation}

We first calculate contributions containing $g_s(i,n)$, which are the second part and the $j=s$ term of the first part of formula \eqref{form: total contribution to l (i+1)r-s}. Since $\hat{l\vphantom{}}^{\underline{(i+1)r-s+1}}/\hat{l\vphantom{}}^{\underline{(i+1)r-s}}=\hat{l}-k(i-[s]_r+\delta_{\langle s\rangle_r})$, the contribution to $\hat{l\vphantom{}}^{\underline{(i+1)r-s}}$ of the second part is
\begin{equation}\label{eq:firstpart}
    -\frac{k(i-[s]_r+\delta_{\langle s\rangle_r})}{r'n-ki}g_s(i,n)\hat{l\vphantom{}}^{\underline{(i+1)r-s}}.
\end{equation}
We now compute the contribution of the first part of \eqref{form: total contribution to l (i+1)r-s} for $j=s$. Since
\begin{align}
    &\hat{l\vphantom{}}^{\underline{(i+1)r-s+1}}/\hat{l\vphantom{}}^{\underline{ir-s}}=(\hat{l}-k(i-1-[s]_r))^{\langle s\rangle_r}\cdot (\hat{l}-k(i-[s]_r))^{r-\langle s\rangle_r}\cdot(\hat{l}-k(i-[s]_r+\delta_{\langle s\rangle_r})),\label{eq:plus1coeff}
\end{align}
the coefficient of $\hat{l\vphantom{}}^{\underline{(i+1)r-s}}$ is equal to the degree $r$ part of $l$ in $(\hat{l}+n-ki)^{r'}$ minus degree $r$ part of $l$ in \eqref{eq:plus1coeff}, hence it is 
\begin{equation}\label{eq:secondpart}
    \frac{r'(n-ki)+rk(i-[s]_r)-k\langle s\rangle_r+k(i-[s]_r+\delta_{\langle s\rangle_r})}{r'n-ki}g_s(i,n)\hat{l\vphantom{}}^{\underline{(i+1)r-s}}.
\end{equation}
Combining \eqref{eq:firstpart} and \eqref{eq:secondpart}, the total coefficient of $\hat{l\vphantom{}}^{\underline{(i+1)r-s}}$ containing $g_s(i,n)$ is 
\[\frac{r'n-ki-ks}{r'n-ki}g_s(i,n)\hat{l\vphantom{}}^{\underline{(i+1)r-s}}.\]
This is the first part of equation \eqref{eq: induction for g}.

We then study the remaining part of formula \eqref{form: total contribution to l (i+1)r-s}, for the $j$-th ($s-r\leq j\leq s-1$) term we need to represent $\frac{(\hat{l}+n-ki)^{r'}}{r'n-ki}$ as a linear combination of $\hat{l\vphantom{}}^{\underline{(i+1)r-j+1-t}}/\hat{l\vphantom{}}^{\underline{ir-j}},\;0\leq t\leq r+1$. And the contribution is given by the coefficient of $\hat{l\vphantom{}}^{\underline{(i+1)r-s}}/\hat{l\vphantom{}}^{\underline{ir-j}}$. So the Lemma follows from the following Lemma \ref{lem:claim} if $s\leq ir$.

Next we consider $s>ir$. For $s>(i+1)r$, equation \eqref{eq: induction for g} is true since both sides are equal to 0. For $ir< s\leq (i+1)r$, the argument is the same as that in the case $s\leq ir$, but the contribution \eqref{form: total contribution to l (i+1)r-s} will only contain 
\begin{equation}\label{form: total contribution of s>ir}
    \frac{(\hat{l}+n-ki)^{r'}}{r'n-ki}\sum_{j=s-r}^{ir}g_j(i,n)\hat{l\vphantom{}}^{\underline{ir-j}}.
\end{equation}
Thus equation \eqref{eq: induction for g} is still true by the same argument for the terms in \eqref{form: total contribution of s>ir}. The lemma is thus completely proved.
\end{proof}

\begin{lemma}\label{lem:claim}
    The coefficient of $\hat{l\vphantom{}}^{\underline{(i+1)r-j+1-t}}/\hat{l\vphantom{}}^{\underline{ir-j}},\;0\leq t\leq r+1$ in $(\hat{l}+n-ki)^{r'}$ is a polynomial in $n$ which depends on $k,r,j,t$. In particular, the coefficients are independent of $i$.
\end{lemma}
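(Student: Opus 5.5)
The plan is to make the dependence on $i$ disappear through a shift of the variable. Both $(\hat{l}+n-ki)^{r'}$ and the ratios $\hat{l\vphantom{}}^{\underline{(i+1)r-j+1-t}}/\hat{l\vphantom{}}^{\underline{ir-j}}$ should depend on $i$ only through $\hat l-ki$. So I set $x:=\hat{l}-ki$ and work in $\mathbb{Q}[x]$.

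\emph{Step 1: compute the ratios explicitly.} Put $m=ir-j$ and $m'=(i+1)r-j+1-t=m+r+1-t$. By the definition \eqref{def: shifted basis}, $\hat{l\vphantom{}}^{\underline{m}}$ is the product of the factors $(\hat l-ck)$, with $\hat l-ck$ appearing $r$ times for $0\le c<[m]_r$ and $\langle m\rangle_r$ times for $c=[m]_r$. Since $m'\ge m$, the ratio $Q_t:=\hat{l\vphantom{}}^{\underline{m'}}/\hat{l\vphantom{}}^{\underline{m}}$ is a genuine polynomial: it is the product of the $m'-m=r+1-t$ factors that follow the first $m$ in this ordered list. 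These factors are $(\hat l-ck)$ with $c$ between $[m]_r$ and $[m']_r$. As in the proof of Lemma \ref{lem: base change between polynomial basis}, $[ir-j]_r=i-[j]_r-1+\delta_{\langle j\rangle_r}$ and $\langle ir-j\rangle_r$ depends only on $j,r$. Hence every such $c$ has the form $i-d$, where $d$ and the multiplicities depend only on $j,t,r$. It follows that $Q_t=\prod(x+dk)$ is a monic polynomial in $x$ of degree $r+1-t$ whose coefficients depend only on $k,r,j,t$, not on $i$.

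\emph{Step 2: change of basis.} In the variable $x$ the target polynomial is $(\hat l+n-ki)^{r'}=(x+n)^{r+1}$. The $Q_t$ for $0\le t\le r+1$ are monic of the distinct degrees $r+1,\dots,0$, so they form a basis of polynomials of degree $\le r+1$. The transition matrix to the monomial basis $\{x^{r+1-t}\}$ is unitriangular, with entries depending only on $k,r,j$. Its inverse is therefore also unitriangular with entries independent of $i$ and $n$. Expanding $(x+n)^{r+1}=\sum_u\binom{r+1}{u}n^{u}x^{r+1-u}$ and applying the inverse matrix, the coefficient of $Q_t$ is $\sum_{u\le t}\beta_{t,u}\binom{r+1}{u}n^u$ with $\beta_{t,u}$ depending on $k,r,j$. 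This is a polynomial in $n$ of degree $t$ and leading coefficient $\binom{r+1}{t}$, with coefficients depending on $k,r,j,t$ and not on $i$. That is exactly the statement.

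\emph{Main obstacle.} The only delicate point is the bookkeeping in Step 1: checking that the indices $c$ in the ratio are all of the form $i-(\text{constant})$, uniformly in the cases $\langle j\rangle_r=0$ or not, and when $m'$ crosses a multiple of $r$. I would handle this by writing the factor list of $\hat{l\vphantom{}}^{\underline{m}}$ as a sequence indexed by position $p\ge1$, with the $p$-th factor equal to $\hat l-\lfloor (p-1)/r\rfloor k$. The factors of $Q_t$ are then those at positions $p=m+1,\dots,m'$. Substituting $m=ir-j$ gives $\lfloor (ir-j+q-1)/r\rfloor=i+\lfloor (q-1-j)/r\rfloor$ for $1\le q\le r+1-t$, so the $q$-th factor is $x-\lfloor (q-1-j)/r\rfloor k$. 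This is manifestly free of $i$, which removes the case analysis.
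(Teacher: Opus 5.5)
Your proof is correct, and it reaches the conclusion by a more structural route than the paper.

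\textbf{The paper's route.} It writes out the ratio for $t=0$ with a case split on whether $\langle j\rangle_r$ vanishes. It then expands $(\hat l+n-ki)^{r'}$ binomially in powers of $\hat l-k(i-1-[j]_r)$, so the shift that appears, $n-k-k[j]_r$, is visibly free of $i$. Next it re-expands around $\hat l-k(i-[j]_r)$ using the shift by $k$. When $\langle j\rangle_r=0$, it performs one more shift to reach the last factor $\hat l-k(i-[j]_r+1)$. The outcome is an explicit binomial double sum for each coefficient.

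\textbf{Your route.} You observe once and for all that, in the variable $x=\hat l-ki$, the target is $(x+n)^{r+1}$. You also show that the $q$-th factor of the ratio is $x-\lfloor (q-1-j)/r\rfloor k$, so the whole expansion problem is literally independent of $i$. The statement then follows from two facts. First, expansions in the nested monic basis $Q_0,\dots,Q_{r+1}$ are unique. Second, the transition matrix to monomials is unitriangular and independent of $n$.

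\textbf{Comparison.} Your floor formula treats $\langle j\rangle_r=0$ and $\langle j\rangle_r\neq 0$ uniformly, and it reproduces the paper's displayed ratio in both cases. You also get for free that the coefficient of $Q_t$ has degree exactly $t$ in $n$, with leading coefficient $\binom{r+1}{t}$. In exchange, the paper's computation gives closed-form expressions for the coefficients. Those would be needed only if one wanted the $C_{s,j}(n)$ of Lemma~\ref{lem: induction for g} explicitly. For the qualitative statement actually used later (polynomial in $n$, independent of $i$), your argument suffices and is shorter.
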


\begin{proof}
    As in the case of $j=s$, 
\begin{align*}
    &\hat{l\vphantom{}}^{\underline{(i+1)r-j+1}}/\hat{l\vphantom{}}^{\underline{ir-j}}=\big(\hat{l}-k(i-1-[j]_r)\big)^{\langle j\rangle_r}\big(\hat{l}-k(i-[j]_r)\big)^{r-\langle j\rangle_r}\big(\hat{l}-k(i-[j]_r+\delta_{\langle j\rangle_r})\big).
\end{align*}
So the polynomials $\hat{l\vphantom{}}^{\underline{(i+1)r-j+1-t}}/\hat{l\vphantom{}}^{\underline{ir-j}},\;0\leq t\leq r+1$ are successively discarding terms in this expression from right to left. We first expand $(\hat{l}+n-ki)^{r'}$ into polynomials of $\hat{l}-k(i-1-[j]_r)$, which is
\begin{equation}\label{eq: first change}
   (\hat{l}+n-ki)^{r'}=\sum_{t=0}^{r'}\binom{r'}{t}(n-k-k[j]_r)^t(\hat{l}-k(i-1-[j]_r))^{r'-t}. 
\end{equation}

When further changing the basis $(\hat{l}-k(i-1-[j]_r))^a$ to 
\begin{equation}\label{eq:furtherchange}
    (\hat{l}-k(i-1-[j]_r))^{\langle j\rangle_r}\cdot (\hat{l}-k(i-[j]_r))^{r'-t-\langle j\rangle_r},
\end{equation}
one needs to expand $(\hat{l}-k(i-1-[j]_r))^a$ for $a\geq\langle j\rangle_r$ as
\begin{align}
    &(\hat{l}-k(i-1-[j]_r))^{\langle j\rangle_r}\cdot(\hat{l}-k(i-[j]_r)+k)^{a-\langle j\rangle_r}\nonumber\\
    =&(\hat{l}-k(i-1-[j]_r))^{\langle j\rangle_r}\sum_{b=0}^{a-\langle j\rangle_r}\binom{a-\langle j\rangle_r}{b}k^b(\hat{l}-k(i-[j]_r))^{a-\langle j\rangle_r-b}.\label{eq: second change}
\end{align}
In the above equation,
the term satisfying $r'-t=a-b$ contributes to \eqref{eq:furtherchange}. Combining Equation \eqref{eq: first change} and \eqref{eq: second change}, we see that for $r'-t\geq \langle j\rangle_r$, the coefficient of 
\[(\hat{l}-k(i-1-[j]_r))^{\langle j\rangle_r}\cdot (\hat{l}-k(i-[j]_r))^{r'-t-\langle j\rangle_r}\]
in $(\hat{l}+n-ki)^{r'}$ is
\begin{equation}\label{form: to compute degree of n}
    \sum_{a=\langle j\rangle_r}^{r'}\binom{r'}{a}(n-k-k[j]_r)^{r'-a}\binom{a-\langle j\rangle_r}{a+t-r'}k^{a+t-r'}.
\end{equation}
One sees that the coefficients are polynomials in $n$ which depends on $k,r,j,t$ if $r'-t\geq \langle j\rangle_r$. For $r'-t< \langle j\rangle_r$ this conclusion follows directly from equation \eqref{eq: first change} because the change of basis \eqref{eq: second change} does not affect them.

If $\langle j\rangle_r\neq0$, then
\begin{align*}
    \hat{l\vphantom{}}^{\underline{(i+1)r-j+1}}/\hat{l\vphantom{}}^{\underline{ir-j}}=(\hat{l}-k(i-1-[j]_r))^{\langle j\rangle_r}\cdot (\hat{l}-k(i-[j]_r))^{r'-\langle j\rangle_r},
\end{align*}
and the claim in this case is proved. If $\langle j\rangle_r=0$, we need to write one $\hat{l}-k(i-[j]_r)$ as
\[(\hat{l}-k(i-[j]_r+1))+k.\]
Hence 
\begin{align*}
   &(\hat{l}-k(i-1-[j]_r))^{\langle j\rangle_r}\cdot (\hat{l}-k(i-[j]_r))^{r'-\langle j\rangle_r} \\
   =&\hat{l\vphantom{}}^{\underline{(i+1)r-j+1}}/\hat{l\vphantom{}}^{\underline{ir-j}}+k\cdot\hat{l\vphantom{}}^{\underline{(i+1)r-j}}/\hat{l\vphantom{}}^{\underline{ir-j}}.
\end{align*}
So in this case, the coefficient of $\hat{l\vphantom{}}^{\underline{(i+1)r-j}}/\hat{l\vphantom{}}^{\underline{ir-j}}$ is the sum of two polynomials, and other coefficient is a single polynomial, which does not affect the conclusion of the claim. We have thus completed the proof of this lemma.
\end{proof}

Using Lemma \ref{lem: induction for g}, we can obtain the following

\begin{theorem}\label{thm: polynomiality of g}
    The $g_s(i,n)$ is a polynomial in $i\in\mathbb{Z}_{\geq0}$, with degree not bigger than $s$. And it is a rational function in $n$, with possible simple poles at 
    
    \[n=\frac{kt}{r'},\;0< t<s.\]
\end{theorem}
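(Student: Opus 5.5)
Induction on $s$, using the recurrence of Lemma \ref{lem: induction for g}. The statement to prove for each $s$ is twofold. First, $g_s(i,n)$ agrees, for all integers $i\ge 0$, with a polynomial in $i$ of degree at most $s$. Second, the coefficients of that polynomial are rational functions of $n$ whose only possible poles are simple poles at $n=kt/r'$ with $0<t<s$.

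\textbf{Base case and the product $P_s$.} For $s=0$ we have $g_0(i,n)=1$, so there is nothing to prove. For $s\ge 1$, write
\[
P_s(i)=\prod_{m=0}^{i-1}\frac{r'n-km-ks}{r'n-km}.
\]
Iterating \eqref{eq: induction for g} from $g_s(0,n)=0$ (valid for $s>0$, since $G_n^{(0)}=1$) gives
\[
g_s(i,n)=\sum_{m=0}^{i-1}\frac{P_s(i)}{P_s(m+1)}\cdot\frac{1}{r'n-km}\sum_{j=s-r}^{s-1}C_{s,j}(n)\,g_j(m,n).
\]
The product telescopes:
\[
P_s(i)=\prod_{t=0}^{s-1}\frac{r'n-k(i+t)}{r'n-kt}.
\]
This is a polynomial in $i$ of degree $s$, with denominator $\prod_{t=0}^{s-1}(r'n-kt)$.

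\textbf{Cleaner route via the ansatz.} Rather than evaluate the sum directly, I would solve the recurrence with the ansatz $g_s(i,n)=P_s(i)\,h_s(i,n)$. The recurrence then becomes
\[
h_s(i+1)-h_s(i)=\frac{1}{P_s(i+1)\,(r'n-ki)}\sum_{j}C_{s,j}(n)\,g_j(i,n).
\]
Substituting $P_s(i+1)=\prod_{t=1}^{s}\frac{r'n-k(i+t)}{r'n-k(t-1)}$ and noting the factor $r'n-ki$ completes it to a product over $t=0,\dots,s$, the right side is
\[
\frac{\prod_{t=0}^{s-1}(r'n-kt)}{\prod_{t=0}^{s}\bigl(r'n-k(i+t)\bigr)}\,Q(i),
\]
where $Q(i)=\sum_j C_{s,j}(n)g_j(i,n)$ is a polynomial in $i$ of degree at most $s-1$ by the induction hypothesis.

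\textbf{Main obstacle.} The key step is to show that the indefinite sum of this rational function of $i$, multiplied by $P_s(i)$, is again a polynomial in $i$. Here is the plan for that step.
\begin{itemize}
\item Expand $Q(i)$ in the basis of falling products adapted to the denominator. Then $Q(i)/\prod_{t=0}^{s}(r'n-k(i+t))$ becomes a combination of terms $1/\prod_{t=0}^{m}(r'n-k(i+t))$ with $m\ge 1$.
\item Each such term has an explicit antidifference of the form $c/\prod_{t=0}^{m-1}(r'n-k(i+t))$, since the denominators are a discrete gamma-type product.
\item Multiplying by $P_s(i)$, whose numerator contains $\prod_{t=0}^{s-1}(r'n-k(i+t))$, clears all $i$-denominators. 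This leaves a polynomial in $i$ of degree at most $s$.
\item Fix the summation constant by the initial condition $g_s(0,n)=0$.
\end{itemize}

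\textbf{Pole structure in $n$.} The constants introduced by the antidifference expansion and by the initial condition involve division only by factors $r'n-kt$ from $\prod_{t=0}^{s-1}(r'n-kt)$. Together with the poles inherited from the $g_j$, this yields the pole set $n=kt/r'$ with $t<s$.

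The remaining work, and what I expect to be the delicate part, is twofold.
\begin{itemize}
\item Show that $t=0$ actually cancels. I would check this at $n=0$ using $g_s(0,n)=0$ and the structure of $C_{s,j}$, possibly noting that $C_{s,j}(n)$ vanishes appropriately.
\item Show that the poles stay simple, i.e., that poles inherited from the $g_j$ with $j<s$ do not combine with the new factors to form double poles. I would try to handle this by tracking the poles directly in the closed form of the antidifference, rather than through the crude bound.
\end{itemize}
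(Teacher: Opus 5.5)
Up to the closed form, your argument is the paper's. Your $h_s$ is $(r'n)_s\,g_s(i,n)/(r'n-ki)_s$, your adapted basis is $(r'n-ki)_t$, and the antidifference of $1/(r'n-ki-kt)_{s+1-t}$ is $\frac{1}{k(s-t)\,(r'n-ki-kt)_{s-t}}$. This yields
\[
g_s(i,n)=\sum_{t=0}^{s-1}\frac{a_t}{k(s-t)}(r'n-ki)_t-(r'n-ki)_s\sum_{t=0}^{s-1}\frac{a_t}{k(s-t)(r'n-kt)_{s-t}}.
\]
That proves polynomiality in $i$ of degree $\le s$ and confines the poles to $n=kt/r'$ with $0\le t<s$. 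But the two claims that make the statement sharp are left only as a plan: no pole at $n=0$, and all poles simple. For simplicity the plan has nothing to run on. The coefficient $a_t$ can inherit a simple pole at $n=ku/r'$ from the $g_j$, and $(r'n-kt)_{s-t}$ vanishes at the same point whenever $t\le u$. So the closed form a priori permits double poles. They do cancel, but nothing visible in the closed form explains why, so ``tracking the poles directly'' cannot succeed without further input.

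The missing input is what happens at fixed integer $i$. For each $i\in\mathbb{Z}_{\ge0}$, $G_n^{(i)}=(r'!)^iF_n^{(i)}/(r'n)_i$ with $F_n^{(i)}$ polynomial in $n$; equivalently, the recurrence only ever divides by $r'n-ki$. Hence $g_s(i,n)$ has at most simple poles, all at $n=kt/r'$ with $t<i$. It has none at $n=0$, since $G_n^{(1)}=\bigl((\hat l+n)^{r'}-\hat l^{r'}\bigr)/(r'n)$ is regular there. The paper lifts this to the interpolating polynomial. With $e_t$ the pole orders and $f=\prod_t(r'n-kt)^{e_t}g_s\in\mathbb{C}[i,n]$, the polynomial $n^{e_0}\prod_{t\ge1}(r'n-kt)^{\max(e_t-1,0)}$ divides $f(i,\cdot)$ for infinitely many $i$. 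Division with remainder in $n$ then shows it divides $f$ in $\mathbb{C}[i,n]$, forcing $e_0=0$ and $e_t\le1$.

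You could also finish inside your own framework using the same input. It says $\mathrm{Res}_{n=ku/r'}g_j(i,n)$ vanishes at $i=0,\dots,u$. Evaluating $\sum_t\mathrm{Res}_{n=ku/r'}(a_t)\,(ku-ki)_t$ successively at $i=u,u-1,\dots,0$ gives $\mathrm{Res}_{n=ku/r'}a_t=0$ for all $t\le u$, which is exactly the needed cancellation. Your hint for $n=0$ does complete. The only candidate pole is the $t=0$ term, whose residue is proportional to $a_0(0)=Q(0,0)$. Because $g_j(0,n)=\delta_{j0}$, this equals $C_{s,0}(0)$ when $s\le r$ and $0$ otherwise. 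A direct expansion gives $C_{s,0}(n)=\binom{r+1}{s+1}n^{s+1}$, so it vanishes.
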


\begin{proof}
    Since $g_0(i,n)=1$, this is true for $s=0$. Now for $s>0$, suppose the theorem is true for less than $s$ case. Dividing both sides of equation \eqref{eq: induction for g} by $(r'n-ki-k)_s$, we get
    \begin{align*}
        \frac{g_s(i+1,n)}{(r'n-ki-k)_s}=\frac{g_s(i,n)}{(r'n-ki)_s}+\frac{\sum_{j=s-r}^{s-1}C_{s,j}(n)g_j(i,n)}{(r'n-ki)_{s+1}}.
    \end{align*}
    Recall that $(\cdots)_s$ is the $k$-Pochhammer symbol defined in Definition \ref{def: Pochhammer}. Expand 
    \[\sum_{j=s-r}^{s-1}C_{s,j}(n)g_j(i,n)=\sum_{t=0}^{s-1}a_t(r'n-ki)_t,\]
    with respect to $i$, where $a_t$ does not depend on $i$, and only has possible poles $n=\frac{ku}{r'}, 0<u<s-1$. Thus we have
    \begin{align*}
        \frac{g_s(i+1,n)}{(r'n-ki-k)_s}=\frac{g_s(i,n)}{(r'n-ki)_s}+\sum_{t=0}^{s-1}\frac{a_t}{(r'n-ki-kt)_{s+1-t}}.
    \end{align*}
    Consequently, we have
    \begin{align*}
        \frac{g_s(i,n)}{(r'n-ki)_s}=\frac{g_s(0,n)}{(r'n)_s}+\sum_{u=0}^{i-1}\sum_{t=0}^{s-1}\frac{a_t}{(r'n-ku-kt)_{s+1-t}}.
    \end{align*}
    The hypothesis $s>0$ implies $g_s(0,n)=0$, so the right hand side of the above equation is equal to
    \begin{align*}
        &\sum_{t=0}^{s-1}\sum_{u=0}^{i-1}\frac{a_t}{k(s-t)}\bigg(\frac{1}{(r'n-ku-kt-k)_{s-t}}-\frac{1}{(r'n-ku-kt)_{s-t}}\bigg)\\
        =&\sum_{t=0}^{s-1}\frac{a_t}{k(s-t)}\bigg(\frac{1}{(r'n-ki-kt)_{s-t}}-\frac{1}{(r'n-kt)_{s-t}}\bigg).
    \end{align*}
    Thus
    \begin{align*}
        g_s(i,n)=&\sum_{t=0}^{s-1}\frac{a_t}{k(s-t)}(r'n-ki)_t-(r'n-ki)_s\cdot\sum_{t=0}^{s-1}\frac{a_t}{k(s-t)(r'n-kt)_{s-t}},
    \end{align*}
and it is a polynomial in $i$ of degree $\leq s$, with possible poles $n=\frac{kt}{r'}, 0\leq t<s$. Note that $a_t$ may have poles in $n$. To prove the theorem, we only need to show that $g_s(i,n)$ does not have pole at $n=0$, and all its poles are simple.

    \vspace{10pt}

    \begin{claim}
        Given polynomials $f(i,n)$ and $g(n)$ in $\mathbb{C}[i,n]$, if for infinitely many $i\in \mathbb{C}$, $f(i,n)$ is divisible by $g(n)$ in $\mathbb{C}[n]$, then $f(i,n)$ is divisible by $g(n)$ in $\mathbb{C}[i,n]$.
    \end{claim}

    \vspace{5pt}

    \noindent \emph{Proof of claim}. We can write $f(i,n)=g(n)q(i,n)+r(i,n)$ where $r(i,n)$ has degree in $n$ smaller than the degree of $g(n)$. Then for infinitely many $i$, $r(i,n)=0$, hence $r(i,n)$ itself vanishes. And we have thus proved the claim.

    \vspace{10pt}
    
    We now return to the recurrence \eqref{eq: recurrence for G}. One can use induction to show that $G_n^{(i)}(l)$ has at most simple poles at $n=\frac{kt}{r'}, 0< t<i$. And since $G_n^{(1)}(l)=\frac{(\hat{l}+n)^{r'}}{r'n}-\frac{\hat{l}^{r'}}{r'n}$ does not have pole at $n=0$, by recurrence \eqref{eq: recurrence for G}, $G_n^{(i)}(l)$ still does not have pole at $n=0$. Suppose the order of pole $n=\frac{kt}{r'},\;0\leq t<s$ of $g_s(i,n)$ is $e_t\geq0$. We will show that $e_t\leq 1$ for any $0< t<s$ and $e_0=0$. Let 
    \[f(i,n)=\prod_{t=0}^{s-1}(r'n-kt)^{e_t}g_s(i,n)\in \mathbb{C}[i,n].\]
    Since $g_s(i,n)$ is the coefficient of $G_n^{(i)}(l)$, for any specific $i\in\mathbb{Z}_{\geq0}$ it has at most simple poles at $n=\frac{kt}{r'},\;t>0$. Hence 
    \begin{equation}\label{eq: remaining factor}
        n^{e_0}\prod_{t=1}^{s-1}(r'n-kt)^{e_t-1}
    \end{equation}
    divides $f(i,n)$ for any $i\in\mathbb{Z}_{\geq0}$. Using the above claim, the factor \eqref{eq: remaining factor} divides $f(i,n)$ in $\mathbb{C}[i,n]$. However, if $e_0>0$ or $e_t>1$ for some $0<t<s$, this will contradict with the definition of the order of the poles.
\end{proof}

\begin{corollary}\label{cor: polynomiality wrt old polynomial basis}
    When $G_n^{(i)}(l)$ is expanded with respect to the basis $\{l^{ir-s}\;|\;s\geq 0\}$, the coefficient of $l^{ir-s}$ for a fixed $s$ is a polynomial in $i$, and rational in $n$ with simple poles described in Theorem \ref{thm: polynomiality of g}.
\end{corollary}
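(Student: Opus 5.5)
The corollary follows by a change of basis: I will combine the expansion \eqref{eq: expansion of G}, Theorem \ref{thm: polynomiality of g}, and Lemma \ref{lem: base change between polynomial basis}. By \eqref{eq: expansion of G},
\[G_n^{(i)}(l)=\sum_{j=0}^{ir}g_j(i,n)\,\hat{l\vphantom{}}^{\underline{ir-j}}.\]
Since $\hat{l\vphantom{}}^{\underline{ir-j}}$ has degree $ir-j$ in $l$, only the terms with $j\le s$ can contribute to $l^{ir-s}$. The coefficient of $l^{ir-s}$ in $G_n^{(i)}(l)$ is therefore the finite sum
\[\sum_{j=0}^{s} g_j(i,n)\, c_{j,s}(i),\]
where $c_{j,s}(i)$ is the coefficient of $l^{ir-s}$ in $\hat{l\vphantom{}}^{\underline{ir-j}}$. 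The number of summands is bounded by $s$ and does not depend on $i$.

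Each factor is then controlled by an earlier result. By Lemma \ref{lem: base change between polynomial basis}, $c_{j,s}(i)$ is a polynomial in $i$ whose coefficients depend only on $k,r,j,s$; in particular it is independent of $n$. By Theorem \ref{thm: polynomiality of g}, $g_j(i,n)$ is a polynomial in $i$, and it is rational in $n$ with at most simple poles at $n=\frac{kt}{r'}$, $0<t<j\le s$. A finite sum of products of such terms is again polynomial in $i$. Its poles in $n$ lie in the set $\{\frac{kt}{r'}:0<t<s\}$ and remain simple, because the factors $c_{j,s}(i)$ carry no $n$-dependence and a sum of functions with simple poles has at most simple poles.

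The one point that needs care is small values of $i$. The expansion has $j$ running only up to $ir$, and for $ir<s$ the polynomial $\hat{l\vphantom{}}^{\underline{ir-j}}$ and the exponent $ir-s$ may be negative or degenerate, so "coefficient of $l^{ir-s}$" needs interpreting there. I will use the convention $g_j(i,n)=0$ for $j>ir$, which is compatible with Theorem \ref{thm: polynomiality of g} since that theorem is stated for all $i\in\mathbb{Z}_{\ge0}$. For $ir<s$ the coefficient of $l^{ir-s}$ is interpreted as zero. I will check that the polynomial formula above agrees with this value there, using that the vanishing is built into the recurrence of Lemma \ref{lem: induction for g}. Beyond this bookkeeping the argument is routine.
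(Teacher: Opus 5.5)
Your argument is the paper's own one-line proof written out: the coefficient of $l^{ir-s}$ is $\sum_{j\le s}g_j(i,n)\,c_{j,s}(i)$, with $c_{j,s}$ an $n$-free polynomial in $i$ by Lemma~\ref{lem: base change between polynomial basis} and $g_j$ controlled by Theorem~\ref{thm: polynomiality of g}, so it is correct. On the small-$i$ point you raise, which the paper does not address, note that Lemma~\ref{lem: induction for g} only gives $g_j(i,n)=0$ for $j>ir$. For $j\le ir<s$ you need instead that the polynomial $c_{j,s}(i)$ itself vanishes. This holds because $c_{j,s}(i)=(-1)^{s-j}e_{s-j}$ of the $ir-j$ roots of $\hat{l\vphantom{}}^{\underline{ir-j}}$. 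That is a universal polynomial in the roots' power sums, which are polynomial in $i$ for every $i$ with $ir\ge j$ by Faulhaber, and it equals $0$ whenever there are fewer than $s-j$ roots.
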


\begin{proof}
    This follows from Lemma \ref{lem: base change between polynomial basis} and Theorem \ref{thm: polynomiality of g}.
\end{proof}

\section{Quasi-polynomiality of single connected $k$-leaky $(r+1)$-completed Hurwitz numbers}\label{sec:quasipolynomiality}

This section is devoted to the proof of quasi-polynomiality of single connected leaky Hurwitz numbers using the results in Section \ref{sec: bosons}. We will let $g_s(i,n)$ be the coefficient of $G_n^{(i)}(l)$ with respect to the basis $\{l^{ir-s}\;|\;s\geq 0\}$ by the analogous expression of equation \eqref{eq: expansion of G} throughout this section. By Corollary \ref{cor: polynomiality wrt old polynomial basis}, they are polynomials in $i$ and rational in $n$ with possible simple poles at $n=\frac{kt}{r'},\;0< t<s$.

\subsection{The $\mathcal{A}$-operators}

Let $\mu=(\mu_1,\mu_2,\cdots,\mu_N),\;\sigma=2g-2+2N$ and $\tau=2g-2+N$. By letting $i=[n]+a$ and $b=t+ka-r[n]$ and extracting the common factor
\begin{equation}\label{eq: def of d}
    d(n):=\frac{\hbar^{[n]}(r'n-k\sigma)_{[n]-\sigma}}{[n]!(r'!)^{[n]}},
\end{equation}
from the first part on the right hand side of equation \eqref{eq: second conjugate}, we define $\mathcal{A}$-operators as
\begin{align}
    \mathcal{A}_{\langle n\rangle}(\hbar,n):=\sum_{a\in\mathbb{Z}}\frac{\hbar^a(r'n)_\sigma(r'n-k[n])_a}{n(r'!)^a([n]+1)_{\overline{a}}}\sum_{t\geq0}\sum_{l\in\mathbb{Z}+\frac{1}{2}}\frac{\nabla^t}{t!}G_n^{([n]+a)}(l)E_{l+\langle n\rangle-b,l}.\label{form: def A operator}
\end{align}
This operator is motivated by previous works, see for example \cite{DKOSS,DLPS,KLPS,OP06}. Here $a$ runs through all integers since $\frac{1}{([n]+1)_{\overline{a}}}=0$ for $a< -[n]$. So from equation \eqref{eq: hurwitz as prod of A operator} and \eqref{eq: second conjugate} we have
\begin{equation}\label{eq: generating func as A operators}
    h^{k,r}_\mu(\hbar)=\frac{1}{|\mathrm{Aut}(\mu)|}\prod_{j=1}^{N}d(\mu_j)\langle\prod_{j=1}^N(\mathcal{A}_{\langle \mu_j\rangle}(\hbar,\mu_j)+C_j)\rangle.
\end{equation}
Here $C_j$ represents some scalar.

We first recall

\begin{lemma}\label{lem: difference polynomial}(Lemma 4.5 in \cite{KLPS})
    The coefficient of $l^q$ in 
    
    \[\frac{\nabla^{x+m}}{(x+m)!}l^{p+x}\]
    is polynomial in $x$ for fixed $p$ and $m$, with degree $2p-q-2m$.
\end{lemma}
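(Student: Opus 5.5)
The claim is Lemma 4.5 of \cite{KLPS}. I would prove it directly by expanding the backward difference operator as a power series in the derivative. Write $D=\frac{d}{dl}$. Then $\nabla = 1-e^{-D}$ as operators on polynomials. Factor this as $\nabla = D\,\phi(D)$ with
\[\phi(D)=\frac{1-e^{-D}}{D}=1-\frac{D}{2}+\cdots,\]
which has constant term $1$. Then
\[\frac{\nabla^{x+m}}{(x+m)!}\,l^{p+x}=\frac{1}{(x+m)!}\,\phi(D)^{x+m}\,D^{x+m}\,l^{p+x}.\]
If $m>p$, then $D^{x+m}l^{p+x}=0$ and the statement is trivial, so assume $m\le p$. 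In that case
\[\frac{D^{x+m}}{(x+m)!}\,l^{p+x}=\binom{p+x}{x+m}\,l^{p-m}.\]

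First I would expand the power of $\phi$ as $\phi(D)^{y}=\sum_{j\ge0}c_j(y)D^j$ with $y=x+m$. Since $\phi(D)^y=\exp(y\log\phi(D))$ and $\log\phi(D)$ has no constant term, each $c_j(y)$ is a polynomial in $y$ of degree at most $j$.

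Next I would extract the coefficient of $l^q$. We have $D^j l^{p-m}=\frac{(p-m)!}{(q)!}\,l^{q}$ when $j=p-m-q\ge0$. So the coefficient of $l^q$ is
\[\binom{p+x}{x+m}\,c_{p-m-q}(x+m)\,\frac{(p-m)!}{q!}.\]
The key observation is that $\binom{p+x}{x+m}=\binom{p+x}{p-m}$ is a polynomial in $x$ of degree $p-m$, for fixed $p$ and $m$. The second factor $c_{p-m-q}(x+m)$ is a polynomial of degree at most $p-m-q$. The product therefore has degree at most $2p-q-2m$.

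The main point requiring care is the exact degree, that is, showing the leading coefficient of $c_j(y)$ in degree $j$ is nonzero. This leading coefficient comes from $\frac{(y\,\ell_1 D)^j}{j!}$, where $\ell_1=-\frac12$ is the linear coefficient of $\log\phi$. It equals $(-1/2)^j/j!\neq0$, so the degree is exactly $2p-q-2m$ whenever $0\le q\le p-m$.

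I would also note that $x$ ranges over integers with $x+m\ge0$. Because of this, the formal manipulation $\nabla^{y}=D^{y}\phi(D)^{y}$ is only used for nonnegative integer $y$, which is legitimate. Polynomiality in $x$ then follows because the resulting closed formula is polynomial.
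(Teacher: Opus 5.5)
The paper gives no proof of this lemma; it quotes it from \cite{KLPS}, so there is no in-paper argument to match. Your proof is correct and complete, and it is self-contained. The factorization $\nabla=D\,\phi(D)$ with $\phi(D)=(1-e^{-D})/D$ is legitimate on polynomials for nonnegative integer powers $x+m$. It gives the coefficient of $l^q$ as $\frac{(p-m)!}{q!}\binom{p+x}{p-m}\,c_{p-m-q}(x+m)$. The top term $(-\tfrac12)^{j}y^{j}/j!$ of $c_j(y)$ then yields exact degree $2p-q-2m$ for $0\le q\le p-m$. Outside that range the coefficient vanishes, so there the lemma is only an upper bound, and that bound is all the paper ever uses. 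Your argument also covers negative $m$, which does occur in the paper's application, where $x=r[\mu_1]$ and $m=b_1-ka_1$.

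The usual alternative goes through the Stirling numbers of the second kind $S(N,n)$. Expanding $\nabla^{n}l^{N}=\sum_{i}(-1)^{i}\binom{n}{i}(l-i)^{N}$ shows that the coefficient of $l^q$ in $\frac{\nabla^{x+m}}{(x+m)!}l^{p+x}$ equals $(-1)^{p-q-m}\binom{p+x}{q}\,S(x+p-q,\,x+m)$. One then invokes the classical fact that $S(n+j,n)$ is a polynomial in $n$ of degree $2j$, here with $j=p-q-m$, and the total degree is $q+2(p-q-m)$.

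The two routes split the degree differently: $q+2(p-q-m)$ there versus $(p-m)+(p-m-q)$ in yours. The Stirling route is shorter if that polynomiality is taken as known. However, the polynomiality is itself normally proved from the power series $\bigl((e^t-1)/t\bigr)^n=\phi(-t)^n$, which is exactly your $\phi(D)^y=\exp(y\log\phi(D))$ mechanism. So your version is the same idea made self-contained, and it has the added benefit of exhibiting the leading coefficient explicitly.
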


Note that we factor out $\hbar^{\sum_j[\mu_j]}$ in $\prod_{j=1}^{N}d(\mu_j)$. So the remaining power of $\hbar$ determined by equation \eqref{eq: relation of g and s} is $\hbar^{\frac{\tau+\sum_j\langle \mu_j\rangle}{k+r}}$. Since expression \eqref{eq: sym in nu} is symmetric in their arguments, we will first consider the dependence on $[\mu_1]$.

\begin{proposition}\label{prop: disc corre of A operator}
    For fixed $\langle \mu_1\rangle,\cdots,\langle \mu_N\rangle$ and $[\mu_2],\cdots,[\mu_N]$, the coefficient $\hbar^{\frac{\tau+\sum_j\langle \mu_j\rangle}{k+r}}$ in
    \begin{equation}\label{form: prod of A operators}
        \langle\prod_{j=1}^N\mathcal{A}_{\langle \mu_j\rangle}(\hbar,\mu_j)\rangle,
    \end{equation}
    is a rational function in $[\mu_1]$, with possible simple poles at negative integers and $-\frac{\langle \mu_1\rangle}{k+r}$. And it has an upper bound of degree in $[\mu_1]$ which does not depend on $[\mu_2],\cdots,[\mu_N]$.
\end{proposition}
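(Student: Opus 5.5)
We will compute the vacuum expectation \eqref{form: prod of A operators} by expanding each $\mathcal{A}_{\langle\mu_j\rangle}(\hbar,\mu_j)$ via \eqref{form: def A operator}. We will then track, for a fixed monomial $\hbar^{\frac{\tau+\sum_j\langle\mu_j\rangle}{k+r}}$, how the summation indices $(a_j,t_j)$ and the summation variables $l$ are constrained, and show that the dependence on $[\mu_1]$ is through finitely many terms, each rational of the stated shape.

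\textbf{Step 1: reduce to finitely many terms.} Since $\langle E_{l_1+c_1,l_1}\cdots E_{l_N+c_N,l_N}\rangle$ vanishes unless the total energy is zero and the partial energies are nonnegative from the right, the vector $(b_1,\dots,b_N)$ satisfies $\sum_j(\langle\mu_j\rangle-b_j)=0$. The power of $\hbar$ is $\sum_j a_j$, which is fixed. Together with $b_j=t_j+ka_j-r[\mu_j]$ and $t_j\ge0$, this bounds each $a_j$ from below: $a_j\ge -[\mu_j]$ by the vanishing of $1/([\mu_j]+1)_{\overline{a_j}}$. Since $\sum a_j$ is fixed and the $a_j$ with $j\ge2$ lie in a range independent of $[\mu_1]$, only finitely many tuples $(a_1,\dots,a_N)$ contribute. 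The number of such tuples does not depend on $[\mu_1]$, since $a_1$ is determined by the others. For each tuple the vacuum expectation is a finite sum over $l_j$ in a window determined by the $b_j$, and $t_j$ ranges accordingly.

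\textbf{Step 2: rational dependence of the prefactors.} For fixed $a_1$, write $n=\mu_1=(k+r)[\mu_1]+\langle\mu_1\rangle$. The prefactor $\frac{(r'n)_\sigma(r'n-k[n])_{a_1}}{n\,([n]+1)_{\overline{a_1}}}$ is then rational in $[\mu_1]$. The factor $1/n$ gives the pole at $-\langle\mu_1\rangle/(k+r)$. For $a_1<0$, the factor $1/([n]+1)_{\overline{a_1}}$ is polynomial in $[n]$. For $a_1>0$, it equals $1/\prod([n]+j)$, which gives simple poles at negative integers. The Pochhammer factors with negative index contribute denominators $r'n-k[n]+jk$. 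In these, $r'n-k[n]=(r'(k+r)-k)[n]+r'\langle n\rangle$, so their zeros must be checked. I expect that the factor $(r'n)_\sigma$ is designed to cancel the poles of $G$ at $r'n=kt$ from Theorem \ref{thm: polynomiality of g}. This should leave only the poles allowed in the statement.

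\textbf{Step 3: the $G$-part.} By Corollary \ref{cor: polynomiality wrt old polynomial basis}, $G_n^{([n]+a_1)}(l)=\sum_s g_s([n]+a_1,n)\,l^{([n]+a_1)r-s}$ with $g_s$ polynomial in $i=[n]+a_1$. We apply $\frac{\nabla^{t_1}}{t_1!}$ with $t_1=b_1-ka_1+r[\mu_1]$ and use Lemma \ref{lem: difference polynomial} with $x$ linear in $[\mu_1]$. Since $l_1$ ranges over a bounded window independent of $[\mu_1]$, each coefficient of $l^q$ is polynomial in $[\mu_1]$. Only $q$ bounded by a constant matters, because the other factors and the $l$-window are independent of $[\mu_1]$. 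The degree bound $2p-q-2m$ from Lemma \ref{lem: difference polynomial}, with $p,m$ depending only on $s,a_1,b_1$, gives a degree bound independent of $[\mu_2],\dots,[\mu_N]$. Here $s$ is bounded because the exponent $ir-s$ must match $t_1$ up to bounded $q$.

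\textbf{Main obstacle.} The delicate point is the pole analysis. One must show that the simple poles of $g_s$ at $r'n=kt$ with $0<t<s$ are cancelled by $(r'n)_\sigma$, equivalently that the relevant $s$ satisfy $s\le\sigma$. One must also show that the Pochhammer denominators $r'n-k[n]+jk$ do not produce additional poles. I would first try to bound $s$ through the energy and degree bookkeeping of Step 3, and check whether the resulting bound is $\le\sigma$. If this fails, I would relax the claimed pole set and verify that the extra poles cancel later, at the level of Theorem \ref{thm:main thm 1}.
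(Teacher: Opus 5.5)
Your outline (expand each $\mathcal{A}$, constrain the indices, then use Theorem \ref{thm: polynomiality of g} and Lemma \ref{lem: difference polynomial}) is the paper's. However, the two inequalities that carry the proof are missing, and you leave both open as your ``main obstacle''.

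The first is $a_1\ge0$. In Step 1 your only lower bound on $a_1$ is $a_1\ge-[\mu_1]$, which moves with $[\mu_1]$. Since only $\sum_j a_j$ is fixed, nothing then bounds $a_2,\dots,a_N$ from above independently of $[\mu_1]$. So your claim that they ``lie in a range independent of $[\mu_1]$'' is unsupported, and a sum whose number of terms grows with $[\mu_1]$ need not be rational in $[\mu_1]$. The missing argument combines two constraints you do not extract:
\begin{enumerate}
\item The leftmost factor must not annihilate $\langle0|$, so its energy is nonnegative: $b_1\ge\langle\mu_1\rangle$. You allude to partial energies but never derive this.
\item $\nabla^{t_j}$ kills $G^{([\mu_j]+a_j)}_{\mu_j}$, which has degree $r([\mu_j]+a_j)$ by Lemma \ref{lem: upper bound degree G l}, unless $t_j=r[\mu_j]+b_j-ka_j\le r([\mu_j]+a_j)$. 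That is, $b_j\le(k+r)a_j$ for every $j$.
\end{enumerate}
Together these give $(k+r)a_1\ge b_1\ge\langle\mu_1\rangle\ge0$. This yields finiteness of the summation uniformly in $[\mu_1]$. It also disposes of the negative-index factors $(r'\mu_1-k[\mu_1])_{a_1}$ you flag in Step 2: they never occur. They would matter, since their zeros $[\mu_1]=-\frac{r'\langle\mu_1\rangle+jk}{r(k+r+1)}$ are in general not integers, hence outside the claimed pole set.

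The second missing inequality bounds $s$. For $j=1$ the degree constraint leaves only $g_s$ with $s\le(k+r)a_1-b_1$. Substituting $\sum_ja_j=\frac{\tau+\sum_j\langle\mu_j\rangle}{k+r}$ and $\sum_jb_j=\sum_j\langle\mu_j\rangle$, and using $b_j\le(k+r)a_j$ for $j\ge2$, gives
\[(k+r)a_1-b_1=\tau+\sum_{j\ge2}\big(b_j-(k+r)a_j\big)\le\tau.\]
Thus $s\le\tau<\sigma$, so the simple poles of $g_s$ at $r'\mu_1=kt$, $0<t<s$, are cancelled by $(r'\mu_1)_\sigma$. This is exactly what you ``expect'' but do not show.

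The same inequality is what makes the degree bound independent of $[\mu_2],\dots,[\mu_N]$:
\begin{enumerate}
\item only the finitely many rational functions $g_0,\dots,g_\tau$ can appear;
\item Lemma \ref{lem: difference polynomial} contributes degree at most $2\big((k+r)a_1-b_1-s\big)\le2\tau$;
\item the prefactor has degree $\sigma-1$.
\end{enumerate}
Your Step 3 bound ``with $p,m$ depending only on $s,a_1,b_1$'' does not give this, because the admissible ranges of $a_1,b_1$ grow with $[\mu_2],\dots,[\mu_N]$. Your fallback of enlarging the pole set would prove a weaker statement than the one claimed.
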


\begin{proof}
    We collect the constraints on parameters $a_j$ and $b_j,\;1\leq j\leq N$.
    \begin{enumerate}
    \item $a_j\geq -[\mu_j]$ and $b_j\geq ka_j-r[\mu_j]$ for any $1\leq j\leq N$.
        \item The energy $b_1-\langle \mu_1\rangle$ of $\mathcal{A}_{\langle \mu_1\rangle}(\hbar,\mu_1)$ should be non-negative, which means $b_1\geq \langle \mu_1\rangle$.
        \item The exponent $r[\mu_j]+b_j-ka_j$ of the backward difference operator in formula \eqref{form: def A operator} is at most the degree $r[\mu_j]+ra_j$ of $G_{\mu_j}^{[\mu_j]+a_j}(l)$ in $l$, which is $b_j\leq (k+r)a_j$.
        \item The total energy of operators in expression \eqref{form: prod of A operators} should be 0, thus 
        
        \[\sum_jb_j=\sum_j\langle \mu_j\rangle.\]
        \item The power of $\hbar$ is $\sum_j a_j$ which is equal to $\frac{\tau+\sum_j\langle \mu_j\rangle}{k+r}$.
    \end{enumerate}

    By point 2 and point 3, we know that $a_1$ and $b_1$ have lower bound. In particular, $a_1\geq0$, so the term $((r+1)\mu_1-k[\mu_1])_{a_1}$ of formula \eqref{form: def A operator} does not give poles. Since $[\mu_j]$ are fixed, by point 1, $a_j$ has lower bound for $j\geq2$. Since the summation $\sum_{j=1}^Na_j$ is fixed by point 5, $a_1$ has upper bound. Then by point 3, $b_1$ also has upper bound. So the summation indices $a$ and $t$ of formula \eqref{form: def A operator} for $n=\mu_1$ are finite (possibly depend on $[\mu_j],\;j\geq2$), and the summands are rational functions in $[\mu_1]$ (note that fixed $a$ and $t$, $\langle0|E_{l+\langle n\rangle-b,l}\neq 0$ only for finite $l$). There are possible poles of $\mu_1=0$, $[\mu_1]$ at negative integers and at $\mu_1=\frac{kp}{r'},\;p>0$ from Theorem \ref{thm: polynomiality of g}.
    
    Since $\nabla^{r[\mu_1]+b_1-ka_1}$ annihilates terms with powers in $l$ smaller than $r[\mu_1]+b_1-ka_1$, we know that $\frac{\nabla^{r[\mu_1]+b_1-ka_1}}{(r[\mu_1]+b_1-ka_1)!}G_{\mu_1}^{([\mu_1]+a_1)}(l)$ in formula \eqref{form: def A operator} contains only $g_s([\mu_1]+a_1,\mu_1)$ with
    \begin{align}\label{eq: inequ tau}
    \begin{split}
        s&\leq r([\mu_1]+a_1)-(r[\mu_1]+b_1-ka_1)\\
        &=(k+r)a_1-b_1\\
        &=(k+r)(\frac{\tau+\sum_j\langle \mu_j\rangle}{k+r}-\sum_{j=2}^Na_j)-\sum_j\langle \mu_j\rangle+\sum_{j=2}^Nb_j\\
        &=\tau+\sum_{j=2}^N(b_j-(k+r)a_j)\leq \tau.
        \end{split}
    \end{align}
    So there is an upper bound of the power in $[\mu_1]$ coming from $g_s([\mu_1]+a_1,\mu_1)$ which does not depend on $[\mu_2],\cdots,[\mu_N]$. By Lemma \ref{lem: difference polynomial} and the same argument as in inequality \eqref{eq: inequ tau}, the contribution of   
    \[\frac{\nabla^{r[\mu_1]+b_1-ka_1}}{(r[\mu_1]+b_1-ka_1)!}l^{r[\mu_1]+ra_1-s}\]
    has degree in $[\mu_1]$ not bigger than
    \[2(ra_1-s)-2(b_1-ka_1)\leq 2\tau,\]
    which does not depend on $[\mu_2],\cdots,[\mu_N]$. The degree of $[\mu_1]$ in    
    \[\frac{(r'\mu_1)_\sigma(r'\mu_1-k[\mu_1])_a}{\mu_1([\mu_1]+1)_{\overline{a}}}\]
    is $\sigma-1$, and the total power of $r'!$ is $\frac{\tau+\sum_j\langle \mu_j\rangle}{k+r}$. The simple poles coming from $g_s([\mu_1]+a_1,\mu_1)$ can occur at $\mu_1=\frac{kp}{r'},\;0<p<\tau$, which cancel out with $(r'\mu_1)_\sigma$. We have thus proved the proposition.

\end{proof}

Note that the generating function of single connected Hurwitz numbers is related to the disconnected one via
\begin{equation}\label{eq: discon via con}
    |\mathrm{Aut}(\mu)|h_\mu^{k,r}(\hbar)=\sum_{M\vdash\{1,\cdots,N\}}\prod_{J\in M}|\mathrm{Aut}(\mu_J)|h_{\mu_J}^{k,r,\circ}(\hbar),
\end{equation}
where $M$ runs through all partitions of $\{1,\cdots,N\}$, and $\mu_J$ is the sub-partition of $\mu$ corresponding to subset $J$. Elements $J$ of $M$ are non-empty subsets of $\{1,\cdots,N\}$. Hence $h_\mu^{k,r,\circ}(\hbar)$ is obtained by inverting equation \eqref{eq: discon via con}, and we have
\begin{equation}\label{eq: inclusion-exclusion}
    |\mathrm{Aut}(\mu)|h_\mu^{k,r,\circ}(\hbar)=\sum_{M\vdash\{1,\cdots,N\}}(-1)^{|M|-1}(|M|-1)!\prod_{J\in M}|\mathrm{Aut}(\mu_J)|h_{\mu_J}^{k,r}(\hbar).
\end{equation}

\begin{lemma}
    If $N>1$, we have
\begin{equation}\label{eq: independ on C}
    h_\mu^{k,r,\circ}(\hbar)=\frac{1}{|\mathrm{Aut}(\mu)|}\prod_{j=1}^{N}d(\mu_j)\langle\prod_{j=1}^N\mathcal{A}_{\langle \mu_j\rangle}(\hbar,\mu_j)\rangle^\circ,
\end{equation}
where $d(\mu_j)$ is defined in equation \eqref{eq: def of d}.
\end{lemma}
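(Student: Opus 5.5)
Recall that $\langle\cdots\rangle^\circ$ denotes the connected vacuum expectation. It is defined from the disconnected correlators $\langle\prod_{j\in J}X_j\rangle$ of the operators $X_j=\mathcal{A}_{\langle\mu_j\rangle}(\hbar,\mu_j)$ by the inclusion--exclusion formula
\[\langle X_1\cdots X_N\rangle^\circ=\sum_{M\vdash\{1,\cdots,N\}}(-1)^{|M|-1}(|M|-1)!\prod_{J\in M}\langle\prod_{j\in J}X_j\rangle.\]
Here each ordered product keeps the original order of indices in $J$. The plan is to apply the inclusion--exclusion formula \eqref{eq: inclusion-exclusion} to the disconnected expression \eqref{eq: generating func as A operators}. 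We then show that the scalars $C_j$ drop out of the alternating sum whenever $N>1$.

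\textbf{Step 1.} For every nonempty $J\subseteq\{1,\cdots,N\}$, formula \eqref{eq: generating func as A operators} applied to the sub-partition $\mu_J$ gives
\[|\mathrm{Aut}(\mu_J)|\,h^{k,r}_{\mu_J}(\hbar)=\prod_{j\in J}d(\mu_j)\,\Big\langle\prod_{j\in J}(X_j+C_j)\Big\rangle.\]
The scalar $C_j$ is the one in Proposition \ref{prop: second conjugation} attached to $\mu_j$ alone, so it is the same for every $J\ni j$. The factor $d(\mu_j)$ also depends only on $\mu_j$. One should check, though, that $\sigma=2g-2+2N$ enters $d$ and $\mathcal{A}$ only through the splitting $d\cdot\mathcal{A}$. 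That product equals the $\sigma$-independent conjugated operator, so we may use the same $\sigma$ (the one for $N$) for all sub-partitions.

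Substituting into \eqref{eq: inclusion-exclusion}, the product $\prod_{J\in M}\prod_{j\in J}d(\mu_j)=\prod_{j=1}^Nd(\mu_j)$ factors out. This gives
\[|\mathrm{Aut}(\mu)|h^{k,r,\circ}_\mu=\prod_j d(\mu_j)\,\langle (X_1+C_1)\cdots(X_N+C_N)\rangle^\circ.\]

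\textbf{Step 2.} This is the main step: show that connected correlators are multilinear and kill scalars when $N>1$. Expand each $\langle\prod_{j\in J}(X_j+C_j)\rangle$ as $\sum_{S\subseteq J}\prod_{j\in J\setminus S}C_j\,\langle\prod_{j\in S}X_j\rangle$, using $\langle\emptyset\rangle=1$. Regroup the whole alternating sum by the set $T$ of indices contributing a scalar.

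For fixed $T\neq\emptyset$, the coefficient of $\prod_{j\in T}C_j$ is the cumulant of the family in which each $X_j$, $j\in T$, is replaced by the constant $1$. Replacing an operator by the scalar $1$ changes neither the ordering nor the correlators of the others. The standard fact we then need is that a cumulant containing a constant entry vanishes when the number of entries $N\geq2$. To prove it, we invert the relation $\langle\prod_J Y\rangle=\sum_{M\vdash J}\prod_{K\in M}\langle\prod_K Y\rangle^\circ$ by induction on $N$. Here we should take care that the moment--cumulant relation is order-compatible, which holds because every block product inherits the ordering.

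Alternatively, we can argue with generating functions. The full-index cumulant is the coefficient of $\prod_j x_j$ in $\log\langle\prod_j(1+x_jY_j)\rangle$ with $x_j^2=0$, i.e.\ the coefficient extraction is in the exterior-free nilpotent algebra where the $x_j$ commute and square to zero. Substituting $Y_1=1$ makes the expression $\log(1+x_1)+\log\langle\prod_{j\geq2}(1+x_jY_j)\rangle$. This has no mixed term containing $x_1$ together with any other $x_j$.

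\textbf{Step 3.} Only $T=\emptyset$ survives, which yields \eqref{eq: independ on C} after dividing by $|\mathrm{Aut}(\mu)|$. I expect the main obstacle to be the bookkeeping in Step 1, namely that the same $d(\mu_j)$, $\mathcal{A}$ and $C_j$ can be used for all sub-partitions. The combinatorial vanishing in Step 2 is routine.
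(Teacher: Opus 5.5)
Your proposal is correct and follows the paper's route: apply \eqref{eq: inclusion-exclusion} to \eqref{eq: generating func as A operators}, use multilinearity, and show that a connected correlator with a scalar in one slot vanishes when $N\ge 2$. The paper proves this vanishing by a direct pairing: each partition containing the singleton $\{j_0\}$, with weight $(-1)^p p!$, cancels against the $p$ partitions obtained by adjoining $j_0$ to one of the blocks, each with weight $(-1)^{p-1}(p-1)!$. Your nilpotent-$\log$ argument is an equivalent shortcut, and your check that $d(\mu_j)\,\mathcal{A}_{\langle\mu_j\rangle}(\hbar,\mu_j)$ is independent of $\sigma$ makes explicit a point the paper uses tacitly.
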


\begin{proof}

 Recall equation \eqref{eq: generating func as A operators}, then we will prove equation \eqref{eq: independ on C} by showing that
\begin{equation}\label{eq: conn correlator of constant}
    \langle C_{j_0}\prod_{j\neq j_0}\mathcal{W}_j\rangle^\circ=0,
\end{equation}
for any $1\leq j_0\leq N$, where $\mathcal{W}_j$ are some operators. We divide partitions of $\{1,\cdots,N\}$ into two classes, the first contains the subset $\{j_0\}$ and the latter does not. For any $M'=\{J_1,\cdots, J_p\}\vdash\{1,\cdots,\hat{j}_0,\cdots,N\}$ where $\hat{j}_0$ means omitting $j_0$, let 
\[M'_u:=\{J_1,\cdots,J_u\cup\{j_0\},\cdots,J_p\}\vdash\{1,\cdots,N\},\]
for $1\leq u\leq p$. Then
\[(-1)^{p}p!\langle C_{j_0}\rangle\prod_{J\in M'}\langle \prod_{j\in J}\mathcal{W}_j\rangle\]
will cancel out with
\[\sum_{u=1}^p(-1)^{p-1}(p-1)!\langle C_{j_0}\prod_{j\in J_u}\mathcal{W}_j\rangle\prod_{\substack{J\in M'\\J\neq J_u}}\langle \prod_{j\in J}\mathcal{W}_j\rangle.\]
We have thus proved equation \eqref{eq: conn correlator of constant}, which implies equation \eqref{eq: independ on C}.

\end{proof}

\begin{remark}\label{rm: sigma > tau nuJ}
    Since connected leaky Hurwitz numbers count tropical covers of $\mathbf{P}^1$ with connected source curve, it has non-negative genus. Hence $\tau(\mu_J)=2g-2+l(\mu_J)\geq-1$, and the degree in $\hbar$ gives 
    \[\frac{\tau+|\mu|}{k+r}=\sum_{J\in M}\frac{\tau(\mu_J)+|\mu_J|}{k+r},\]
    hence $\tau=\sum_{J\in M}\tau(\mu_J)$. We know that $\tau(\mu_J)\leq \tau+N-1<\sigma$, so the same argument as in Proposition \ref{prop: disc corre of A operator} shows that the correlators in computing the connected one do not have poles at $\mu_1=\frac{kp}{r'},\;p>0$.
\end{remark}

\subsection{Proof of quasi-polynomiality}

We prove in this subsection the first main result of this paper.

\begin{theorem}\label{thm: quasi-polynomiality}
    The single connected leaky Hurwitz numbers $h_{g,\mu}^{k,r,\circ}$ are quasi-polynomials in $\mu=(\mu_1,\cdots,\mu_N)$ in the stable $\tau=2g-2+N>0$ case. Let $s=\frac{\tau+|\mu|}{k+r}$, we have:
    \begin{itemize}
        \item If $N>1$, then
        \[h_{g,\mu}^{k,r,\circ}=\frac{s!}{|\mathrm{Aut}(\mu)|}\prod_{i=1}^N\frac{(r'\mu_i-k\sigma)_{[\mu_i]-\sigma}}{[\mu_i]!(r'!)^{[\mu_i]}}P_{\langle \mu_1\rangle,\cdots,\langle \mu_N\rangle}([\mu_1],\cdots,[\mu_N]),\]
        where $P_{\langle \mu_1\rangle,\cdots,\langle \mu_N\rangle}$ is a polynomial depending on parameters $\langle \mu_1\rangle,\cdots,\langle \mu_N\rangle$.
        \item If $N=1$, $h^{k,r,\circ}_{g,(n)}=h^{k,r}_{g,(n)} $ is not 0 only if $n\equiv1-2g\;(\mathrm{mod}\;(k+r))$ and $r[n]_k\geq2g-1+\langle n\rangle_k$. For $r[n]_k>2g-1+\langle n\rangle_k$, $h^{k,r}_{g,(n)}$ is of the form
        \[\frac{(r'n-2kg)_{s-2g}}{(n-ks)(r'!)^s}P(n),\]
        where $P(n)$ is a polynomial in $n$.
    \end{itemize}
\end{theorem}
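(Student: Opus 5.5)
For $N>1$, I will start from the connected version of the $\mathcal{A}$-operator expression, equation \eqref{eq: independ on C}. This gives
\[
h^{k,r,\circ}_\mu(\hbar)=\frac{1}{|\mathrm{Aut}(\mu)|}\prod_j d(\mu_j)\,\langle\prod_j\mathcal{A}_{\langle\mu_j\rangle}(\hbar,\mu_j)\rangle^\circ .
\]
I then take the coefficient of $\hbar^s$ and use $h^{k,r,\circ}_{g,\mu}=s!\,[\hbar^s]h^{k,r,\circ}_\mu(\hbar)$. The factors $d(\mu_j)$ account for $\hbar^{\sum[\mu_j]}$ together with the prefactor $\frac{(r'\mu_j-k\sigma)_{[\mu_j]-\sigma}}{[\mu_j]!(r'!)^{[\mu_j]}}$. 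It then remains to show that the $\hbar^{(\tau+\sum\langle\mu_j\rangle)/(k+r)}$-coefficient of the connected correlator is a polynomial in $[\mu_1],\dots,[\mu_N]$ once the residues $\langle\mu_j\rangle$ are fixed; the residual power of $r'!$ is absorbed into $P$.

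I will expand the connected correlator by inclusion--exclusion \eqref{eq: inclusion-exclusion} into products of disconnected correlators over the blocks $J$. Proposition~\ref{prop: disc corre of A operator} applies to each block, with $\tau$ replaced by $\tau(\mu_J)$. The key point is Remark~\ref{rm: sigma > tau nuJ}: since $\tau(\mu_J)<\sigma$, the simple poles at $\mu_1=kp/r'$ coming from $g_s$ are cancelled by $(r'\mu_1)_\sigma$, as in the proof of Proposition~\ref{prop: disc corre of A operator}. Hence the only possible poles in $[\mu_1]$ are at negative integers and at $-\langle\mu_1\rangle/(k+r)$.

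To exclude these, I will argue as follows. The negative-integer poles come from $1/([\mu_1]+1)_{\overline{a}}$ with $a<0$, and constraint~2 in the proof of Proposition~\ref{prop: disc corre of A operator} forces $a_1\ge0$, so these factors are polynomial. The pole at $-\langle\mu_1\rangle/(k+r)$ comes from $1/\mu_1$. To remove it I would use that the $E$-matrix elements of $\nabla^t G$ with $t$ large produce a factor of $\mu_1$, or alternatively that $(r'\mu_1)_\sigma$ contains the factor $r'\mu_1$ since $\sigma\ge1$. I expect the latter to suffice: $(r'\mu_1)_\sigma/\mu_1$ is a polynomial.

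The result is then a rational function in $[\mu_1]$ with no poles, hence a polynomial, and its degree is bounded independently of the other $[\mu_j]$. By symmetry the same holds in each variable. A function that is polynomial in each variable separately, with degree bounds uniform in the other variables, is a polynomial jointly; I will prove this by interpolation in one variable, using the uniform degree bound. This gives $P_{\langle\mu_1\rangle,\dots,\langle\mu_N\rangle}$.

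For $N=1$, the disconnected and connected numbers coincide, and by \eqref{eq: second conjugate} only scalar terms contribute to $\langle\,\cdot\,\rangle$. The congruence condition $n\equiv 1-2g \pmod{k+r}$ is exactly the integrality of $s$ in \eqref{eq: relation of g and s}. The number is the $\hbar^s$-coefficient of the constants in \eqref{eq: second conjugate}, namely the term $\frac{\hbar^i}{i!(n-ki)!}\nabla^{n-ki-1}F_n^{(i)}(-\tfrac12)$ with $i=s<[n]_k$, or the boundary terms when $s=[n]_k$. This is nonzero only if $n-ki-1\le ri$, which is equivalent to $r[n]_k\ge 2g-1+\langle n\rangle_k$ after substituting $s$. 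In the strict case, I write $F=\frac{(r'n)_s}{(r'!)^s}G$ and note that $\nabla^{m}l^{m+x}/m!$ is polynomial by Lemma~\ref{lem: difference polynomial}. Combined with Corollary~\ref{cor: polynomiality wrt old polynomial basis}, and with $(r'n)_s$ cancelling the poles in $n$ as before, this gives the form $\frac{(r'n-2kg)_{s-2g}}{(n-ks)(r'!)^s}P(n)$. The factor $1/(n-ks)$ arises from $1/(n-ki)!$ against $\nabla^{n-ki-1}$.

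I expect the main obstacle to be controlling the pole at $\mu_1\propto$ the residue and verifying the exact Pochhammer bookkeeping in the $N=1$ case.
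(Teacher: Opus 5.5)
\textbf{There is a genuine gap in the $N>1$ part: your argument for removing the poles at negative integers is backwards.}

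Your setup for $N>1$ matches the paper's:
\begin{itemize}
\item the connected $\mathcal{A}$-correlator \eqref{eq: independ on C} and inclusion--exclusion;
\item Remark~\ref{rm: sigma > tau nuJ} for the poles at $\mu_1=kp/r'$;
\item the factor $r'\mu_1$ in $(r'\mu_1)_\sigma$ to cancel the $1/\mu_1$.
\end{itemize}
The failing step is the one that disposes of the poles in $[\mu_1]$ at negative integers, which is the main content of the argument. You have the $(-1)$-Pochhammer symbol reversed. In fact $([n]+1)_{\overline{a}}=([n]+a)!/[n]!$. So $1/([n]+1)_{\overline{a}}=[n]([n]-1)\cdots([n]+a+1)$ is a polynomial when $a\le 0$. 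For $a>0$ it equals $1/\big(([n]+1)\cdots([n]+a)\big)$, with poles at $[n]=-1,\dots,-a$. The paper uses $a_1\ge 0$ only to make the numerator $(r'\mu_1-k[\mu_1])_{a_1}$ polynomial. For the factor $1/([\mu_1]+1)_{\overline{a_1}}$, the same constraint puts you exactly in the regime where the poles arise. These poles are not removed block by block in the inclusion--exclusion sum.

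\textbf{What is missing is the residue argument of Proposition~\ref{prop: conn corre of A}.}
\begin{itemize}
\item One computes $\mathrm{Res}_{[n]=-m}\mathcal{A}_{\langle n\rangle}(\hbar,n)=h(\langle n\rangle,m)\,e^{\alpha_1}e^{\hbar\mathcal{H}_{k,r}}\alpha_{-n_-}e^{-\hbar\mathcal{H}_{k,r}}e^{-\alpha_1}$, where $n_-=\langle n\rangle-(k+r)m<0$.
\item Since $\alpha_{-n_-}$ kills the vacuum, singleton blocks $\{1\}$ contribute nothing.
\item In a larger block, commute $\alpha_{-n_-}$ to the right. This produces terms $(-n_-)\delta_{n_-+\mu_j}$, which cancel in the inclusion--exclusion sum against the partitions having $\{1,j\}$ as a block.
\item For $N=2$ a residue $\propto\delta_{n_-+\mu_2}$ survives. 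Its $\hbar$-degree is $1-\delta_{\langle\mu_1\rangle}$, while the required degree is $\frac{\tau}{k+r}+1-\delta_{\langle\mu_1\rangle}$. The two differ only because $\tau>0$.
\end{itemize}
This gives a diagnostic for your version: it never uses the stability hypothesis $\tau>0$. If it were correct, it would equally prove polynomiality for $(g,N)=(0,2)$. There the degrees match, and this residue is not excluded.

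\textbf{Smaller slip in the $N=1$ part.} With $i=s$ and $(k+r)s=2g-1+n$, the inequality $n-ks-1\le rs$ is just $g\ge 0$. The condition $r[n]_k\ge 2g-1+\langle n\rangle_k$ is instead equivalent to $s\le [n]_k$, i.e.\ to $\hbar^s$ occurring among the scalar terms of \eqref{eq: second conjugate} at all. Otherwise your $N=1$ argument agrees with the paper's. To finish it, use $(r'n)_s=(r'n-2kg)_{s-2g}(r'n)_{2g}$, and note that for $g>0$ the factor $(r'n)_{2g}$ absorbs both the $1/n$ and the simple poles at $n=kt/r'$, $0<t<2g$.
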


We first compute residues for $\mathcal{A}_{\langle n\rangle}(\hbar,n)$ of $[n]$ at negative integers. Let $n_-:=\langle n\rangle-(k+r)m<0$ for a positive integer $m$. By the same argument as Proposition \ref{prop: second conjugation}, and notice that in this case the initial positive energy leads to no scalar multiplication term, we get
\begin{align*}
    e^{\alpha_1}e^{\hbar\mathcal{H}_{k,r}}\alpha_{-n_-}e^{-\hbar\mathcal{H}_{k,r}}e^{-\alpha_1}
    =&\sum_{i=0}^{\infty}\sum_{t=0}^{\infty}\frac{\hbar^i}{i!t!}\sum_{l\in \mathbb{Z}+\frac{1}{2}}\nabla^t F_{n_-}^{(i)}(l)E_{l+n_--ki-t,l},\\
    =&\sum_{a\geq m}\frac{\hbar^{a-m}(r'n_-)_{a-m}}{(a-m)!(r'!)^{a-m}}\sum_{t\geq0}\sum_{l\in \mathbb{Z}+\frac{1}{2}}\frac{\nabla^t}{t!}G_{n_-}^{(a-m)}(l)E_{l+n_--ka+km-t,l}.
\end{align*}
By formula \eqref{form: def A operator}, the residue of $\mathcal{A}_{\langle n\rangle}(\hbar,n)$ at negative integer $-m$ is
\begin{align*}
    &\text{Res}_{[n]=-m}\mathcal{A}_{\langle n\rangle}(\hbar,n)\\
    =&\sum_{a\in\mathbb{Z}}\frac{\hbar^a(r'n)_\sigma(r'n-k[n])_a([n]+m)}{n(r'!)^a([n]+1)_{\overline{a}}}\sum_{t\geq0}\sum_{l\in\mathbb{Z}+\frac{1}{2}}\frac{\nabla^t}{t!}G_n^{([n]+a)}(l)E_{l+n-ka-k[n]-t,l}\bigg|_{[n]=-m}\\
    =&\sum_{a\geq m}(-1)^{m-1}\frac{\hbar^a(r'n_-)_\sigma(r'n_-+km)_a}{n_-(m-1)!(a-m)!(r'!)^a}\sum_{t\geq0}\sum_{l\in\mathbb{Z}+\frac{1}{2}}\frac{\nabla^t}{t!}G_{n_-}^{(a-m)}(l)E_{l+n_--ka+km-t,l}\\
    =&(-1)^{m-1}\frac{\hbar^m(r'n_-)_\sigma(r'n_-+km)_m}{n_-(m-1)!(r'!)^m}e^{\alpha_1}e^{\hbar\mathcal{H}_{k,r}}\alpha_{-n_-}e^{-\hbar\mathcal{H}_{k,r}}e^{-\alpha_1},
\end{align*}
where in the second equality the fact that denominator has factor $([n]+m)$ requires that $a\geq m$. Define
\[h(\langle n\rangle,m):=(-1)^{m-1}\frac{\hbar^m(r'n_-)_\sigma(r'n_-+km)_m}{n_-(m-1)!(r'!)^m}.\]
We are now ready to prove

\begin{proposition}\label{prop: conn corre of A}
    If $N>1$ and $\tau=2g-2+N>0$, the coefficient of $\hbar^{\frac{\tau+\sum_j\langle \mu_j\rangle}{k+r}}$ in the connected correlator
    \[\langle\prod_{j=1}^N\mathcal{A}_{\langle \mu_j\rangle}(\hbar,\mu_j)\rangle^\circ,\]
    is polynomial in $[\mu_1],\cdots,[\mu_N]$ for fixed $\langle \mu_1\rangle,\cdots,\langle \mu_N\rangle$.
\end{proposition}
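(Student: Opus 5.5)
By symmetry of the correlator it suffices to prove polynomiality in $[\mu_1]$ with the other $[\mu_j]$ and all $\langle\mu_j\rangle$ fixed, and then to get joint polynomiality. The connected correlator is the inclusion--exclusion combination \eqref{eq: inclusion-exclusion} of disconnected correlators $\langle\prod_{j\in J}\mathcal{A}_{\langle\mu_j\rangle}(\hbar,\mu_j)\rangle$. By Proposition \ref{prop: disc corre of A operator} and Remark \ref{rm: sigma > tau nuJ}, each relevant $\hbar$-coefficient is a rational function of $[\mu_1]$. Its only possible poles are simple poles at negative integers $[\mu_1]=-m$, and possibly at $[\mu_1]=-\langle\mu_1\rangle/(k+r)$, which just means $\mu_1=0$. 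Its degree in $[\mu_1]$ is bounded independently of $[\mu_2],\dots,[\mu_N]$. A rational function with a degree bound and no poles is a polynomial, so the plan is to show that every candidate pole of the connected correlator has zero residue.

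For the pole at $\mu_1=0$, I would look at the prefactor $\frac{(r'\mu_1)_\sigma}{\mu_1}$ in \eqref{form: def A operator}. Since $\sigma\geq 1$, the factor $(r'\mu_1)_\sigma$ contains $r'\mu_1$, which cancels the $1/\mu_1$. So this pole is absent as long as the remaining part of the summand has no pole at $\mu_1=0$, and the $G$'s have no such pole by Theorem \ref{thm: polynomiality of g}.

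For $[\mu_1]=-m$ with $m\geq 1$, I would use the residue computation just carried out. It gives
\[\mathrm{Res}_{[\mu_1]=-m}\mathcal{A}_{\langle\mu_1\rangle}(\hbar,\mu_1)=h(\langle\mu_1\rangle,m)\,e^{\alpha_1}e^{\hbar\mathcal{H}_{k,r}}\alpha_{-n_-}e^{-\hbar\mathcal{H}_{k,r}}e^{-\alpha_1},\]
where $n_-=\langle\mu_1\rangle-(k+r)m<0$. Substituting this into any correlator turns the first factor into a conjugated \emph{positive}-energy boson $\alpha_{|n_-|}$. I would then move $e^{-\hbar\mathcal{H}}e^{-\alpha_1}$ and its inverse through the other conjugated factors and use $\langle0|e^{\alpha_1}e^{\hbar\mathcal{H}_{k,r}}=\langle0|e^{\alpha_1}e^{\hbar\mathcal{H}_{k,r}}$, with the conjugation structure of \eqref{eq: hurwitz as prod of A operator}. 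This reduces the residue to $\langle\,\alpha_{|n_-|}\,\prod_{j\ge2}(\text{conjugated }\alpha_{-\mu_j})\,\rangle$ up to conjugation by $e^{\alpha_1}e^{\hbar\mathcal{H}}$. Next I would commute $\alpha_{|n_-|}$ to the right so that it annihilates the vacuum. Since $[\alpha_a,\alpha_{-b}]=a\delta_{ab}$ is a scalar, the residue of the disconnected correlator becomes a sum of terms in which the first operator pairs with a single other operator $j$ to give a scalar, times the disconnected correlator of the remaining operators. The scalar parts $C_j$ and the $d(\mu_j)$ normalizations must be tracked, but they are constants with respect to this step.

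A sum of that form is exactly a product factorized into a two-point block $\{1,j\}$ times a correlator of the rest. Under the connected inclusion--exclusion, such factorized contributions cancel, by the same combinatorial cancellation used for \eqref{eq: conn correlator of constant}. So the connected correlator has zero residue at $[\mu_1]=-m$, provided $N>1$. The hypothesis $N>1$ matters here because for $N=1$ there is nothing to pair with. This pairing-and-cancellation step is the main obstacle. The delicate points are the bookkeeping of the conjugations, making sure the commutator really is a scalar times a lower correlator, and checking that the inclusion--exclusion weights match. I would first verify the case $N=2$ by hand, where the residue should equal a pure product that is subtracted off.

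Once all residues vanish, the $\hbar$-coefficient is a polynomial in $[\mu_1]$ of degree at most $D$, with $D$ independent of the other $[\mu_j]$. By symmetry the same holds in each $[\mu_j]$. A function that is polynomial in each variable separately, with uniform degree bounds, is a polynomial jointly; this follows by Lagrange interpolation in each variable in turn.
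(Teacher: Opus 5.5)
Your framework is the paper's. You treat the $\hbar$-coefficient as a rational function of $[\mu_1]$ whose degree is bounded independently of $[\mu_2],\dots,[\mu_N]$. You remove the pole at $\mu_1=0$ using $(r'\mu_1)_\sigma$. You kill the poles at $[\mu_1]=-m$ via $\mathrm{Res}_{[\mu_1]=-m}\mathcal{A}_{\langle\mu_1\rangle}(\hbar,\mu_1)=h(\langle\mu_1\rangle,m)\,e^{\alpha_1}e^{\hbar\mathcal{H}_{k,r}}\alpha_{-n_-}e^{-\hbar\mathcal{H}_{k,r}}e^{-\alpha_1}$ and the pairing of $\alpha_{-n_-}$ with a single $\alpha_{-\mu_j}$. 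For $N\geq 3$ your cancellation is correct and coincides with the paper's. Partitions having $\{1\}$ as a block contribute zero, since $\langle e^{\alpha_1}e^{\hbar\mathcal{H}_{k,r}}\alpha_{-n_-}\rangle=0$. Each pairing $\{1,j\}$ inside a block of size at least $3$ cancels against the partition in which $\{1,j\}$ is a separate block, because $p(-1)^{p-1}(p-1)!+(-1)^p p!=0$. Here $p\geq 1$ is the number of blocks in the induced partition of $\{1,\dots,N\}\setminus\{1,j\}$.

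The gap is the case $N=2$, which is within the statement. There $p=0$: the only partition containing the pair is $\{\{1,2\}\}$ itself, and the subtracted term $\langle\mathcal{A}_1\rangle\langle\mathcal{A}_2\rangle$ has zero residue, so nothing cancels. The residue of the connected correlator at $[\mu_1]=-m$ is the pairing term $(-n_-)\,h(\langle\mu_1\rangle,m)\,\delta_{n_-+\mu_2}/(\mu_2\,d(\mu_2))$, which is in general nonzero. This is like a Gaussian family, where the covariance survives and only the higher joint cumulants vanish. So your expectation that for $N=2$ ``the residue should equal a pure product that is subtracted off'' is false. Your explanation of the role of $N>1$ is also backwards: for $N=1$ the residue vanishes trivially.

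The missing ingredient is an $\hbar$-degree count that uses stability, a hypothesis your proposal never invokes. The surviving residue is a single monomial $\hbar^{m-[\mu_2]}$. The constraint $\mu_2=(k+r)m-\langle\mu_1\rangle$ forces $m-[\mu_2]=1-\delta_{\langle\mu_1\rangle}$, whereas the target exponent is $\frac{\tau+\langle\mu_1\rangle+\langle\mu_2\rangle}{k+r}=\frac{\tau}{k+r}+1-\delta_{\langle\mu_1\rangle}$. Since $\tau>0$ these differ, so this residue does not affect the coefficient in question. For $\tau=0$, i.e.\ $(g,N)=(0,2)$, the two exponents coincide, which is why the hypothesis $\tau>0$ is exactly what this step requires.
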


\begin{proof}
    We first consider the dependence on $\mu_1$. We show that the residues of Proposition \ref{prop: disc corre of A operator} vanish. The residue at $[\mu_1]=-m$ is
    \begin{align*}
        &\text{Res}_{[\mu_1]=-m}\sum_{M\vdash\{1,\cdots,N\}}(-1)^{|M|-1}(|M|-1)!\prod_{J\in M}\langle\prod_{j\in J}\mathcal{A}_{\langle \mu_j\rangle}(\hbar,\mu_j)\rangle\\
        =&\sum_{M\vdash\{1,\cdots,N\}}(-1)^{|M|-1}(|M|-1)!\frac{h(\langle \mu_1\rangle,m)}{\prod_{j\neq 1}\mu_jd(\mu_j)}\prod_{J\in M}\langle e^{\alpha_1}e^{\hbar\mathcal{H}_{k,r}}{\prod_{j\in J}}^-\alpha_{-\mu_{j}}\rangle,
    \end{align*}
    where $\prod^-$ means that when $j=1$, one should multiply $\alpha_{-n_-}$ instead of $\alpha_{-\mu_1}$. Here we have used equation \eqref{eq: conn correlator of constant}. We divide partitions of $\{1,\cdots,N\}$ into three classes: The first contains the subset $\{1\}$; The second contains a two-element subset $\{1,j\}$; And the third are the remaining ones.
    
    For a partition $M$ in the first class, since $n_-<0$, we have
    \[\langle e^{\alpha_1}e^{\hbar\mathcal{H}_{k,r}}\alpha_{-n_-}\rangle=0.\]
    So the residue of these components at $[\mu_1]=-m$ is 0. Let $M=\{J_1,\cdots,J_p\}$ be a partition in the third class, suppose $1\in J_u$ where $1\leq u\leq p$. Then $J_u$ contains at least three elements. We compute  
    \[\langle e^{\alpha_1}e^{\hbar\mathcal{H}_{k,r}}{\prod_{j\in J_u}}^-\alpha_{-\mu_{j}}\rangle\]
    by commuting $\alpha_{-n_-}$ to the rightmost. Since $\alpha_{-n_-}$ annihilates $|0\rangle$, the correlator equals
    \[\sum_{\substack{j\in J_u\\j\neq1}}
    (-n_-) \delta_{n_-+\mu_j}
    \langle e^{\alpha_1}e^{\hbar\mathcal{H}_{k,r}} \prod_{\substack{t\in J_u\\t\neq 1,j}}
    \alpha_{-\mu_{t}} \rangle,\]
    which is equal to
    \[\sum_{\substack{j\in J_u\\j\neq1}}
    \langle e^{\alpha_1}e^{\hbar\mathcal{H}_{k,r}} \alpha_{n_-} \alpha_{-\mu_j}\rangle
    \langle e^{\alpha_1} e^{\hbar\mathcal{H}_{k,r}} \prod_{\substack{t\in J_u\\t\neq 1,j}}\alpha_{-\mu_{t}}\rangle.\]
    Let $J_v'=J_v$ if $v\neq u$ and $J_u'=J_u\backslash\{1,j\}$.
    This correspondence shows that the residue for partition
    \[M'=\{\{1,j\}\}\cup\{J_1',\cdots,J_p'\}\]
    in the second class appears $p$ times in the residue for partitions 
    \[M''=\{J_1',\cdots,J_v'\cup\{1,j\},\cdots,J_p'\}.\]
    The factor in the summand of $M'$ differs from that of $M''$ by $-p$,
    so they cancel out with each other.
    If $N=2$,
    the residue at $[\mu_1]=-m$,
    \begin{align}
        \frac{h(\langle \mu_1\rangle,m)}{\mu_2d(\mu_2)}
        \langle e^{\alpha_1} e^{\hbar\mathcal{H}_{k,r}}\alpha_{-n_-}\alpha_{-\mu_2}\rangle
        =(-n_-)\frac{h(\langle \mu_1\rangle,m)}{\mu_2d(\mu_2)}\delta_{n_-+\mu_2},
        \label{eq: N=2}
    \end{align}
    does not cancel out with other terms.
    It is non-zero only if
    $\mu_2=m(k+r)-\langle \mu_1\rangle$. In this case
    $[\mu_2]=m-1+\delta_{\langle \mu_1\rangle}$
    and $\langle \mu_2\rangle=(k+r)(1-\delta_{\langle \mu_1\rangle})-\langle \mu_1\rangle$.
    We have
    \[\frac{\tau+\langle \mu_1\rangle+\langle \mu_2\rangle}{k+r}=\frac{\tau}{k+r}+1-\delta_{\langle \mu_1\rangle}>0.\]
    But formula \eqref{eq: N=2} has degree $m-[\mu_2]=1-\delta_{\langle \mu_1\rangle}$ in $\hbar$. Since $\tau>0$, it does not contribute to $\hbar^{\frac{\tau+\sum_j\langle \mu_j\rangle}{k+r}}$.
    So the connected correlator has no poles at negative integers in any case.

    The pole at $[\mu_1]=-\frac{\langle \mu_1\rangle}{k+r}$ can occur only if $\sigma=0$ by equation \eqref{form: def A operator}. This is possible only if $g=0, \,N=1$, so $[\mu_1]$ always does not have poles. Since connected correlators are symmetric in their inputs by equation \eqref{eq: inclusion-exclusion}, and the degree in $[\mu_1]$ does not depend on $[\mu_2],\cdots,[\mu_N]$ by Proposition \ref{prop: disc corre of A operator}, we complete the proof.
\end{proof}

This proposition, together with equation \eqref{eq: independ on C} gives the result of the first part of Theorem \ref{thm: quasi-polynomiality}.

For $N=1$, since $E_{l,l}$ annihilates $|0\rangle$ for any $l\in\mathbb{Z}+\frac{1}{2}$, we have 
\begin{align*}
    h_n^\circ(\hbar)=h_n(\hbar)=C+C',
\end{align*}
where
\begin{align*}
    C=&\frac{\hbar^{[n]_k}\delta_{\langle n\rangle_k}}{[n]_k!r'!n}\sum_{\substack{l\in \mathbb{Z}+\frac{1}{2}\\
        0<l<k}}\hat{l}^{r'}F_{n}^{([n]_k-1)}(l-k)\\
    &+(1-\delta_{\langle n\rangle_k})\frac{\hbar^{[n]_k}}{[n]_k!\langle n\rangle_k!n}\nabla^{\langle n\rangle_k-1} F_n^{([n]_k)}(-\frac{1}{2}),\\
    C'=&\sum_{i=0}^{[n]_k-1}\frac{\hbar^i}{i!(n-ki)!n}\nabla^{n-ki-1} F_n^{(i)}(-\frac{1}{2}).
\end{align*}
Note that the coefficient in the required power of $\hbar$ in $C+C'$ is non-zero only if $s=\frac{2g-1+n}{k+r}\in \mathbb{Z}\cap[0,[n]_k]$, which requires that $ n\equiv1-2g\;(\mathrm{mod}\;(k+r))$ and $r[n]_k\geq2g-1+\langle n\rangle_k$.

In the $r[n]_k>2g-1+\langle n\rangle_k$ case, the coefficient of $C$ is zero. The coefficient of $C'$ is
\begin{align}
    &\frac{(r'n)_{s}}{s!(n-ks)!n(r'!)^s}\nabla^{n-ks-1}G_n^{(s)}(-\frac{1}{2}).\label{eq: case 1}
\end{align}
Since $rs-(n-ks-1)=2g$ is fixed, as in the proof of Proposition \ref{prop: disc corre of A operator}, using Theorem \ref{thm: polynomiality of g} and Lemma \ref{lem: difference polynomial}, one gets that
\begin{equation}\label{eq: const rational part}
    \frac{\nabla^{n-ks-1}}{(n-ks-1)!}G^{(s)}_n(-\frac{1}{2})
\end{equation}
is a rational function in $n$, with possible simple poles at $n=\frac{kt}{r'},\,0<t<2g$. We can write 
\begin{equation}\label{eq:shift}
    (r'n)_{s}=(r'n-2kg)_{s-2g}(r'n)_{2g}.
\end{equation}
Since we are in the stable case, which requires that $g>0$, the $(r'n)_{2g}$ eliminates these poles and the $n$ in the denominator. We have thus completed the proof of Theorem \ref{thm: quasi-polynomiality}.

\begin{remark}\label{rem:criticalcase}
For fixed $g$, there are only finitely many $n$ satisfying $r[n]_k=2g-1+\langle n\rangle_k$. If $\langle n\rangle_k=0$, the coefficient of $\hbar^s$ only appears in $C$ and the Hurwitz number $h_{g,(n)}^{k,r}$ is
\begin{equation}\label{eq: case 2}
    \frac{s!(r'n)_{[n]_k-1}}{[n]_k!(r'!)^{[n]_k}n}\sum_{\substack{l\in \mathbb{Z}+\frac{1}{2}\\
        0<l<k}}\hat{l}^{r'}G_{n}^{([n]_k-1)}(l-k).
\end{equation}
If $\langle n\rangle_k\neq0$, it is
\begin{equation}\label{eq: case 3}
    \frac{s!(r'n)_{[n]_k}}{[n]_k!\langle n\rangle_k!n(r'!)^{[n]_k}}\nabla^{\langle n\rangle_k-1} G_n^{([n]_k)}(-\frac{1}{2}).
\end{equation}
\end{remark}

\section{Computations of single connected leaky completed Hurwitz numbers}\label{sec: conn n point function}
In this section,
we derive a closed formula to compute the single connected $k$-leaky $(r+1)$-completed Hurwitz numbers using KP integrability.
We also study applications of this formula.

\subsection{Connected $N$-point function for leaky Hurwitz numbers}
In this subsection, we provide a formula to compute the connected leaky Hurwitz numbers. The partition function of $k$-leaky $(r+1)$-completed possibly disconnected Hurwitz numbers is related to the connected one
\[H^{leaky}_{k,r}(\mathbf{t};\hbar):=\sum_{\mu\in\mathcal{P}}\sum_{s\geq 0}
\Big(\frac{1}{s!}\hbar^sh^{k,r,s,\circ}_\mu \prod_{i=1}^{l(\mu)}\mu_it_{\mu_i}\Big)\]
via
\begin{align}
    \tau_{k,r}^{leaky}(\mathbf{t};\hbar)=
    \exp \big(H^{leaky}_{k,r}(\mathbf{t};\hbar)\big)
    \in\mathbb{C}[\![\mathbf{t}, \hbar]\!].\label{eq:conn disconn}
\end{align}
So to compute the connected Hurwitz numbers,
we need to obtain the expansion formula of $\log \tau_{k,r}^{leaky}(\mathbf{t};\hbar)$. Formula \eqref{eq: partition function}, together with (see, for example, equation (A.16) in \cite{O01})
\begin{align*}
    \Gamma_-(\mathbf{t})|0\rangle
    =\sum_{\mu\in\mathcal{P}}
    s_{\mu}(\mathbf{t}) \cdot |\mu\rangle,
\end{align*}
provides the Schur polynomial expansion of the tau-function $\tau_{k,r}^{leaky}(\mathbf{t};\hbar)$.
Consequently,
we have
\begin{align}\label{eqn:leaky tau as schur}
    \tau_{k,r}^{leaky}(\mathbf{t};\hbar)
    =\sum_{\mu\in\mathcal{P}}
    \langle0|e^{\alpha_1} e^{\hbar \mathcal{H}_{k;r}} |\mu\rangle \cdot s_{\mu}(\mathbf{t}).
\end{align}
For convenience,
we denote the coefficient of $s_{\mu}(\mathbf{t})$ by 
$c^{leaky}_\mu=\langle0|e^{\alpha_1} e^{\hbar \mathcal{H}_{k;r}} |\mu\rangle$.
\begin{proposition}\label{prop:Aleaky formula}
    The fermionic two-point function of the partition function of $k$-leaky $(r+1)$-completed Hurwitz numbers defined in \eqref{eqn:def Azw} is equal to
    \begin{align}\label{eqn: Aleaky as tildePQ}
        A^{leaky}_{k;r}(z,w)
        =\sum_{a\in\mathbb{N}+\half}\tilde{\mathcal{P}}_{-a}(w)\,\mathcal{Q}_{-a}(z),
    \end{align}
    where for $a,b\in\mathbb{Z}+\half$, we denote
    \begin{align}
        \tilde{\mathcal{P}}_a(w)&=w^{-a-\frac12}\!\!\sum_{i,t\geq0 \atop ik+t\geq-a+\half}
        \frac{\hbar^{i}\,w^{-ik-t}}{i!\,t!\,(r'!)^{i}}
        (\hat{a}+t+ik)_i^{r'}, \label{eqn:def tildeP}\\
        \mathcal{Q}_b(z)&=z^{b-\frac12}\!\!\sum_{p,t\geq0}
        \frac{(-\hbar)^{p}(-1)^{t}\,z^{-pk-t}}{p!\,t!\,(r'!)^{p}}
        (\hat{b}-t)_p^{r'}, \label{eqn:def Q}
    \end{align}
    where $()_i$ and $()_p$ are $k$-Pochhammer symbols given in Definition \ref{def: Pochhammer}.
\end{proposition}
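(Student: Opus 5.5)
The plan is to compute $c^{leaky}_{(m|n)}=\langle0|e^{\alpha_1}e^{\hbar\mathcal{H}_{k;r}}|(m|n)\rangle$ fermionically, by conjugating free fermions instead of bosons. First, I will write the hook state in fermions. With the convention $|\mu\rangle=\underline{\mu_1-\frac12}\wedge\underline{\mu_2-\frac32}\wedge\cdots$, one has $|(m|n)\rangle=(-1)^n\psi_{m+\frac12}\psi^*_{-n-\frac12}|0\rangle$. I would check this sign on the hooks $(0|0)$ and $(0|1)$. Hence $(-1)^nc^{leaky}_{(m|n)}=\langle e^{\alpha_1}e^{\hbar\mathcal{H}_{k;r}}\psi_{m+\frac12}\psi^*_{-n-\frac12}\rangle$.

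Since $\alpha_1$ and $\mathcal{H}_{k,r}$ have positive energy, $e^{-\hbar\mathcal{H}_{k,r}}e^{-\alpha_1}|0\rangle=|0\rangle$, exactly as in \eqref{eq: hurwitz as prod of A operator}. So the correlator equals $\langle\Psi_{m+\frac12}\Psi^*_{-n-\frac12}\rangle$, where $\Psi_c$ and $\Psi^*_e$ denote the conjugates of $\psi_c,\psi^*_e$ by $e^{\alpha_1}e^{\hbar\mathcal{H}_{k,r}}$.

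Next I compute these conjugates via \eqref{eq: comm E psi} and BCH, as in Propositions \ref{prop: first conjugation} and \ref{prop: second conjugation}; here there are no scalar terms. From $[\mathcal{H}_{k,r},\psi_c]=\frac{\hat c^{r'}}{r'!}\psi_{c-k}$, iteration gives
\[\mathrm{ad}^i_{\mathcal{H}_{k,r}}\psi_c=\frac{(\hat c)^{r'}_i}{(r'!)^i}\psi_{c-ik},\]
with the $k$-Pochhammer symbol of Definition \ref{def: Pochhammer}. From $[\alpha_1,\psi_d]=\psi_{d-1}$ one gets $e^{\alpha_1}\psi_de^{-\alpha_1}=\sum_t\psi_{d-t}/t!$. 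Thus
\[\Psi_c=\sum_{i,t\ge0}\frac{\hbar^i(\hat c)_i^{r'}}{i!\,t!\,(r'!)^i}\psi_{c-ik-t}.\]
Similarly, $[\mathcal{H}_{k,r},\psi^*_e]=-\frac{(\widehat{e+k})^{r'}}{r'!}\psi^*_{e+k}$ and $[\alpha_1,\psi^*_e]=-\psi^*_{e+1}$. These give
\[\Psi^*_e=\sum_{p,t\ge0}\frac{(-\hbar)^p(-1)^t}{p!\,t!\,(r'!)^p}\Big(\prod_{j=1}^{p}(\hat e+jk)^{r'}\Big)\psi^*_{e+pk+t}.\]

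Then I apply Wick's rule $\langle\psi_d\psi^*_{d'}\rangle=\delta_{dd'}\,\delta_{d<0}$. This forces $d=c-ik-t=e+pk+t'=-a$ with $a\in\mathbb{N}+\frac12$. Substituting into \eqref{eqn:def Azw}, I will set $w^{-m-1}=w^{c-\frac12}$ with $c=-a+ik+t$, so that $w^{-m-1}=w^{a-\frac12}w^{-ik-t}$. Then $\hat c=\widehat{-a}+t+ik$, and the constraint $m\ge0$ becomes $ik+t\ge a+\frac12$; this gives exactly $\tilde{\mathcal{P}}_{-a}(w)$ in \eqref{eqn:def tildeP}. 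Likewise $z^{-n-1}=z^{e-\frac12}$ with $e=-a-pk-t'$. Here $n\ge0$ is automatic since $a>0$, so there is no constraint on the $\mathcal{Q}$ side, and $z^{-n-1}=z^{-a-\frac12-pk-t'}$. For the Pochhammer factor, $\hat e+jk=\widehat{-a}-t'-(p-j)k$ for $j=1,\dots,p$, which is the descending product $(\widehat{-a}-t')^{r'}_p$. The signs $(-1)^n$ from the hook state and from \eqref{eqn:def Azw} cancel. Summing over $a$ then yields $\sum_a\tilde{\mathcal{P}}_{-a}(w)\mathcal{Q}_{-a}(z)$.

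The main obstacle is purely bookkeeping: getting the hook-state sign, the signs in $[\cdot,\psi^*]$, and the direction of the Pochhammer products exactly right. I would verify the final formula by comparing the $\hbar^0$ term with $A(z,w)$ for $e^{\alpha_1}$ alone, and the $\hbar^1$ coefficient for a small hook such as $m=k,\,n=0$ directly against \eqref{eqn:leaky tau as schur}.
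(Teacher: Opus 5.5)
Your proposal is correct and follows essentially the paper's proof: you conjugate $\psi,\psi^*$ by $e^{\alpha_1}e^{\hbar\mathcal{H}_{k,r}}$ (this is Lemma~\ref{lem:conj on psipsi*}) and apply $\langle\psi_d\psi^*_{d'}\rangle=\delta_{dd'}\delta_{d<0}$. The only difference is that you work mode by mode with $m,n\geq0$, so the truncation $ik+t\geq a+\frac12$ comes directly from $m\geq0$, whereas the paper passes to the full fields $\psi(w),\psi^*(z)$ and then argues that the nonnegative $w$-degree part of $\mathcal{P}_{-a}(w)$ cancels $\sum_{h\geq0}z^{-1-h}w^h$ (minor slip: $c=m+\frac12$ gives $w^{-m-1}=w^{-c-\frac12}$, not $w^{c-\frac12}$, though your next expression $w^{a-\frac12}w^{-ik-t}$ is correct).
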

\begin{proof}
From the Schur polynomial expansion \eqref{eqn:leaky tau as schur} of the partition function $\tau_{k;r}^{leaky}(\mathbf{t};\hbar)$ of $k$-leaky $(r+1)$-completed Hurwitz numbers,
the corresponding fermionic two-point function is given by
\begin{align}\label{eqn:Aleaky=1}
    A_{k;r}^{leaky}(z,w)
    =&\sum_{n,m\in\mathbb{N}}
    \langle0|e^{\alpha_1} e^{\hbar \mathcal{H}_{k;r}}  \psi_{m+\half}\psi^*_{-n-\half}|0\rangle\cdot z^{-n-1} w^{-m-1},
\end{align}
where we have used equation \eqref{eqn:def Azw} and $|(m|n)\rangle=(-1)^n\psi_{m+\half}\psi^*_{-n-\half}|0\rangle$.
Then from $\psi^*_{a+\half}|0\rangle=0, \psi_{-b-\half}\psi^*_{-a-\half}|0\rangle=\delta_{a,b}|0\rangle, a,b\in\mathbb{N}$,
and $\langle0|e^{\alpha_1} e^{\hbar \mathcal{H}_{k;r}} |0\rangle=1$,
equation \eqref{eqn:Aleaky=1} can be rewritten as
\begin{align}\label{eqn:Aleaky as psipsi*}
    A_{k;r}^{leaky}(z,w)
    =&\langle0| e^{\alpha_1} e^{\hbar \mathcal{H}_{k;r}}  \psi(w)\psi^*(z) |0\rangle
    -\sum_{h\geq 0} z^{-1-h}w^h
\end{align}
where $\psi(w):=\sum_{m\in\mathbb{Z}}\psi_{m+\frac{1}{2}}w^{-m-1},\,\psi^*(z):=\sum_{n\in\mathbb{Z}}\psi^*_{-n-\frac{1}{2}}z^{-n-1}$. In the following Lemma \ref{lem:conj on psipsi*},
we have computed the conjugation of the operator $e^{\alpha_1} e^{\hbar \mathcal{H}_{k;r}}$ on the fermionic fields $\psi(w)$ and $\psi^*(z)$,
which,
together with $\langle0|\psi_a\psi^*_b|0\rangle=\delta_{a,b}\delta_{b<0}$ gives the following formula
\begin{align}\label{eqn: Aleaky as PQ}
        A^{leaky}_{k;r}(z,w)
        =\sum_{a\in\mathbb{N}+\half}\mathcal{P}_{-a}(w)\,\mathcal{Q}_{-a}(z)
    \;-\;\sum_{h\geq0}z^{-1-h}w^{h},
    \end{align}
where the function $\mathcal{P}_{a}(w)$ is defined in equation \eqref{eqn:def P}.
Note that from the definition \eqref{eqn:def Azw},
we have $A_{k;r}^{leaky}(z,w)\in z^{-1}w^{-1}\mathbb{C}[\![z^{-1},w^{-1}]\!]$.
If we split $\mathcal{P}_{a}(w)$ into its nonnegative $w$-degree part and negative $w$-degree part as
\begin{align*}
    \mathcal{P}_{a}(w)
    =\mathcal{P}^{\geq0}_{a}(w)
    +\tilde{\mathcal{P}}_{a}(w),
\end{align*}
where the negative $w$-degree part $\tilde{\mathcal{P}}_{a}(w)$ is exactly the function defined in equation \eqref{eqn:def tildeP}.
Then the contribution from the nonnegative $w$-degree part $\mathcal{P}^{\geq0}_{a}(w)$ must cancel with the last term $\sum_{h\geq0}z^{-1-h}w^{h}$.
Thus equation \eqref{eqn: Aleaky as PQ} reduces to the formula \eqref{eqn: Aleaky as tildePQ} for the fermionic two-point function $A^{leaky}_{k;r}(z,w)$.
\end{proof}

\begin{lemma}\label{lem:conj on psipsi*}
We have the following conjugation formulas
    \begin{align}
        e^{\alpha_1} e^{\hbar \mathcal{H}_{k;r}}  \psi(w) e^{-\hbar \mathcal{H}_{k;r}} e^{-\alpha_1}
        =\sum_{a\in\mathbb{Z}+\frac12}\psi_a\,\mathcal{P}_a(w),\\
        e^{\alpha_1} e^{\hbar \mathcal{H}_{k;r}} \psi^*(z)e^{-\hbar \mathcal{H}_{k;r}} e^{-\alpha_1}
        =\sum_{b\in\mathbb{Z}+\frac12}\psi^*_b\,\mathcal{Q}_b(z),
    \end{align}
    where the function $\mathcal{P}_a(w)$ is given by
    \begin{align}\label{eqn:def P}
    \begin{split}
        \mathcal{P}_a(w)&=w^{-a-\frac12}\!\!\sum_{i,t\geq0}
        \frac{\hbar^{i}\,w^{-ik-t}}{i!\,t!\,(r'!)^{i}}
        (\hat{a}+t+ik)_i^{r'},
    \end{split}
    \end{align}
    and the function $\mathcal{Q}_b(z)$ has been defined in equation \eqref{eqn:def Q}.
\end{lemma}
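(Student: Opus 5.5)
The plan is to conjugate the two fermionic fields separately, in two stages: first by $e^{\hbar\mathcal{H}_{k;r}}$, then by $e^{\alpha_1}$. Working directly with $\psi(w)$ and $\psi^*(z)$ avoids the scalar correction terms that appeared in Propositions \ref{prop: first conjugation} and \ref{prop: second conjugation}. The reason is that commutators of fermion bilinears with single fermions are again single fermions, by \eqref{eq: comm E psi}.

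\textbf{Conjugation by $e^{\hbar\mathcal{H}_{k;r}}$.} Applying \eqref{eq: comm E psi} to $\mathcal{H}_{k;r}=\sum_l \frac{\hat l^{r'}}{r'!}E_{l-k,l}$ gives
\[
[\mathcal{H}_{k;r},\psi_c]=\frac{(\hat c)^{r'}}{r'!}\,\psi_{c-k},
\]
where $\hat c=c-\tfrac k2$. Iterating this,
\[
\mathrm{ad}^i_{\mathcal{H}}\psi_c=\frac{1}{(r'!)^i}\prod_{j=0}^{i-1}(\hat c-jk)^{r'}\,\psi_{c-ik}=\frac{(\hat c)_i^{r'}}{(r'!)^i}\,\psi_{c-ik},
\]
and the BCH formula \eqref{eq: BCH} then yields
\[
e^{\hbar\mathcal{H}}\psi_c e^{-\hbar\mathcal{H}}=\sum_{i\ge0}\frac{\hbar^i(\hat c)^{r'}_i}{i!(r'!)^i}\,\psi_{c-ik}.
\]
For the dual field, \eqref{eq: comm E psi} gives $[\mathcal{H},\psi^*_c]=-\frac{(\hat c+k)^{r'}}{r'!}\psi^*_{c+k}$, since only the term $l=c+k$ contributes. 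Iterating produces $(-1)^p\prod_{j=1}^{p}(\hat c+jk)^{r'}$, which I will rewrite as a $k$-Pochhammer in the shifted variable after reindexing.

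\textbf{Conjugation by $e^{\alpha_1}$.} Since $\alpha_1=\sum_l E_{l-1,l}$, we have $[\alpha_1,\psi_c]=\psi_{c-1}$, hence $e^{\alpha_1}\psi_c e^{-\alpha_1}=\sum_t\frac{1}{t!}\psi_{c-t}$. Similarly $[\alpha_1,\psi^*_c]=-\psi^*_{c+1}$ gives $e^{\alpha_1}\psi^*_c e^{-\alpha_1}=\sum_t\frac{(-1)^t}{t!}\psi^*_{c+t}$.

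\textbf{Collecting coefficients.} Substituting into $\psi(w)=\sum_c\psi_c w^{-c-1/2}$, I set $a=c-ik-t$ and collect the coefficient of $\psi_a$. This gives
\[
\sum_{i,t}\frac{\hbar^i w^{-a-\frac12-ik-t}}{i!\,t!\,(r'!)^i}\,(\hat a+t+ik)_i^{r'},
\]
which is exactly $\mathcal{P}_a(w)$. For $\psi^*(z)=\sum_c\psi^*_c z^{c-1/2}$ (using $c=-n-\tfrac12$), I set $b=c+pk+t$. Then $\hat c+jk=\hat b-t-(p-j)k$ for $j=1,\dots,p$, so the product becomes $(\hat b-t)_p^{r'}$. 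The power of $z$ is $z^{b-\frac12-pk-t}$ and the overall sign is $(-1)^{p+t}$, which reproduces $\mathcal{Q}_b(z)$.

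\textbf{Main obstacle.} The hard part is bookkeeping rather than conceptual. Two points need care. First, the order of the two conjugations: $\mathcal{H}$ is applied first and then $\alpha_1$ acts on the result, so the shift by $t$ enters the Pochhammer argument as $\hat a+t$. Second, the sign and index conventions of $\psi^*$ must be tracked correctly. Finally, I will check formal convergence: for fixed $a$, each power of $w$ (respectively $z$) receives only finitely many contributions, except for the $\hbar$-grading, which is legitimate in $\mathbb{C}[\![\hbar]\!]$.
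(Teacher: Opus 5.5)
Your proposal is correct and follows the paper's proof essentially step by step. Like the paper, you conjugate the single fermions first by $e^{\hbar\mathcal{H}_{k;r}}$ via \eqref{eq: comm E psi} and the BCH expansion, then by $e^{\alpha_1}$, and collect the coefficients of $\psi_a$ and $\psi^*_b$; your explicit reindexings $c=a+ik+t$ and $b=c+pk+t$ (giving $\hat c=\hat a+ik+t$ and $\prod_{j=1}^{p}(\hat c+jk)=(\hat b-t)_p$) correctly fill in the collection step that the paper leaves implicit.
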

\begin{proof}
Recall that the operator $\mathcal{H}_{k;r}$ is given by
\begin{align*}
    \mathcal{H}_{k;r}
    =\sum_{l\in \mathbb{Z}+\frac{1}{2}}\frac{\hat{l}^{r'}}{r'!}E_{l-k,l}.
\end{align*}
Using equation \eqref{eq: comm E psi} and the Baker--Campbell--Hausdorff formula, we have
\begin{align}
    e^{\hbar\mathcal{H}_{k;r}}\psi_l e^{-\hbar\mathcal{H}_{k;r}}
    =&\sum_{i\geq0}\frac{\hbar^{i}}{i!}\frac{\big(\prod_{j=0}^{i-1}(\hat l-jk)\big)^{r'}}{(r'!)^i}\,\psi_{l-ik},\\
    e^{\hbar\mathcal{H}_{k;r}}\psi^*_m e^{-\hbar\mathcal{H}_{k;r}}
    =&\sum_{p\geq0}\frac{(-\hbar)^{p}}{p!}\frac{\big(\prod_{j=1}^p(\hat m+jk)\big)^{r'}}{(r'!)^p}\,\psi^*_{m+pk}.
\end{align}
Similarly,
recall that $\alpha_1=\sum_{l\in \mathbb{Z}+\frac{1}{2}} E_{l-1,l}$,
we have
\begin{align*}
e^{\alpha_1}\psi_l e^{-\alpha_1}=\sum_{t\geq0}\frac{1}{t!}\psi_{l-t},
\text{\ \ \ and\ \ \ }e^{\alpha_1}\psi^*_m e^{-\alpha_1}=\sum_{t\geq0}\frac{(-1)^t}{t!}\psi^*_{m+t}.
\end{align*}
Composing the above two conjugations and collecting the coefficient of $\psi_a, \psi^*_b$ gives the functions $\mathcal{P}_a(w)$ and $\mathcal{Q}_b(z)$.
This proves this lemma.
\end{proof}

As a consequence,
we can use the formulas \eqref{eqn:conn n=1}, \eqref{eqn:conn n} for connected $N$-point functions of KP tau-functions to compute the connected leaky Hurwitz numbers.

\begin{theorem}\label{thm: conne N point func}
For a given partition $\mu=(\mu_1,\cdots,\mu_N)$, if $l(\mu)=N=1$, we have
\begin{align}\label{eqn:leaky H N1}
    h^{k,r,\circ}_{(n)}(\hbar)
    =\frac{1}{n}\sum_{\substack{a\in\mathbb{N}+\half, \ i,p,t,u\in\mathbb{N}\\ik+t\geq a+\frac{1}{2},\ (i+p)k+t+u=n}}\frac{(-1)^{p+u}\hbar^{i+p}}{i!p!t!u!(r'!)^{i+p}}(-a-\frac{k}{2}+t+ik)_i^{r'}(-a-\frac{k}{2}-u)_p^{r'}.
\end{align}
For $l(\mu)=N>1$, we have
\begin{align}\label{eqn:leaky H N}
    h^{k,r,\circ}_{\mu}(\hbar)
    =\frac{(-1)^{N-1}}{\mathfrak{z}(\mu)}[z_1^{-\mu_1-1}\cdots z_N^{-\mu_N-1}]\sum_{\text{$N$-cycles } \sigma}
	\prod_{i=1}^N \widehat{A}^{leaky}_{k;r} (z_{\sigma^{i}(1)},z_{\sigma^{i+1}(1)}),
\end{align}
where $\mathfrak{z}(\mu)$ is defined in equation \eqref{eq: def zmu} and
\begin{align*}
    \widehat{A}_{k;r}^{leaky}(z_i,z_j)=
    \begin{cases}
        \sum_{a\in\mathbb{N}+\half}\mathcal{P}_{-a}(z_j)\,\mathcal{Q}_{-a}(z_i), &\text{\ if\ }i<j,\\
        -\sum_{a\in\mathbb{N}+\half}\mathcal{P}_{a}(z_j)\,\mathcal{Q}_{a}(z_i), &\text{\ if\ }i>j.
    \end{cases}
\end{align*}
The functions $\mathcal{P}_{a}(z)$ and $\mathcal{Q}_{a}(w)$ are defined in equations \eqref{eqn:def P} and \eqref{eqn:def Q}.

\end{theorem}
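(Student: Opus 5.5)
The plan is to feed the fermionic two-point function of Proposition \ref{prop:Aleaky formula} into the general connected $N$-point formulas \eqref{eqn:conn n=1} and \eqref{eqn:conn n}, and then to read off coefficients.

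\textbf{Translating derivatives into Hurwitz numbers.} By \eqref{eq:conn disconn}, $\log\tau^{leaky}_{k,r}=H^{leaky}_{k,r}$. Differentiating the sum defining $H^{leaky}_{k,r}$ at $\mathbf{t}=0$ gives
\[
\partial_{t_{\mu_1}}\cdots\partial_{t_{\mu_N}}H^{leaky}_{k,r}\big|_{\mathbf{t}=0}=|\mathrm{Aut}(\mu)|\prod_i\mu_i\, h^{k,r,\circ}_\mu(\hbar)=\mathfrak{z}(\mu)\,h^{k,r,\circ}_\mu(\hbar),
\]
where $h^{k,r,\circ}_\mu(\hbar)=\sum_s\hbar^sh^{k,r,s,\circ}_\mu/s!$. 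The automorphism factor arises because the monomial $\prod\mu_it_{\mu_i}$, when differentiated, yields $\prod m_i!$. Hence $h^{k,r,\circ}_\mu(\hbar)$ is $\frac{1}{\mathfrak{z}(\mu)}$ times the coefficient of $\prod z_i^{-\mu_i-1}$ on the left side of \eqref{eqn:conn n}, respectively \eqref{eqn:conn n=1}.

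\textbf{The case $N>1$.} We compute $\widehat A$ from \eqref{eq-AandAhat} and \eqref{eqn: Aleaky as PQ}.
\begin{itemize}
\item For $i<j$, adding $\sum_h z_i^{-1-h}z_j^h$ cancels the subtracted term in \eqref{eqn: Aleaky as PQ}, which leaves exactly $\sum_{a\in\mathbb{N}+\half}\mathcal{P}_{-a}(z_j)\mathcal{Q}_{-a}(z_i)$.
\item For $i>j$, we have $\widehat A=A-\sum_h z_j^{-1-h}z_i^h-\sum_h z_i^{-1-h}z_j^h$, which is the formal delta function. We must show this equals $-\sum_{a\in\mathbb{N}+\half}\mathcal{P}_a(z_j)\mathcal{Q}_a(z_i)$. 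This follows from the identity
\[
\sum_{a\in\mathbb{Z}+\half}\mathcal{P}_a(w)\mathcal{Q}_a(z)=\sum_{m\in\mathbb{Z}}z^{-m-1}w^m.
\]
That identity in turn follows from Lemma \ref{lem:conj on psipsi*} together with the canonical anticommutator $\{\psi(w),\psi^*(z)\}=\sum_{m}w^{-m-1}z^{m}$, which is preserved under conjugation because $\{\psi_a,\psi^*_b\}=\delta_{ab}$. Equivalently, it says that the matrices of the conjugation acting on $\psi$ and on $\psi^*$ are inverse transposes of each other.
\end{itemize}
Substituting these into \eqref{eqn:conn n}, the $\delta_{N,2}$ correction term has nonnegative powers of $z_2$, so it does not contribute to the coefficient of $z_1^{-\mu_1-1}z_2^{-\mu_2-1}$ with $\mu_2\ge1$. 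This gives \eqref{eqn:leaky H N}.

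\textbf{The case $N=1$.} We use \eqref{eqn:conn n=1} with \eqref{eqn: Aleaky as tildePQ}, setting $w=z$ in $\sum_a\tilde{\mathcal P}_{-a}(z)\mathcal{Q}_{-a}(z)$ and extracting $[z^{-n-1}]$. The exponent of $z$ in $\tilde{\mathcal P}_{-a}(z)$ is $a-\half-ik-t$, and in $\mathcal{Q}_{-a}(z)$ it is $-a-\half-pk-u$. Their sum is $-1-(i+p)k-t-u$, so the coefficient condition is $(i+p)k+t+u=n$. The constraint $ik+t\ge a+\half$ comes from \eqref{eqn:def tildeP} with $a$ replaced by $-a$. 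The Pochhammer arguments become $\widehat{-a}+t+ik=-a-\frac k2+t+ik$ and $\widehat{-a}-u=-a-\frac k2-u$, and the signs combine to $(-1)^{p+u}$. Dividing by $\mathfrak{z}((n))=n$ gives \eqref{eqn:leaky H N1}.

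\textbf{Expected obstacle.} The main point needing care is the $i>j$ identity: one must check that the infinite sums over $a$ make sense as formal series in the relevant domain. This holds because each $\hbar$-coefficient involves only finitely many terms. One must also verify that the anticommutator is exactly preserved, i.e. that no central term is generated by the normal ordering. Everything else is bookkeeping of exponents.
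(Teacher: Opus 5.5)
Your proposal is correct and follows the paper's proof. Both read off coefficients from \eqref{eqn:conn n=1} and \eqref{eqn:conn n} using \eqref{eqn: Aleaky as PQ}. Your completeness identity $\sum_{a\in\mathbb{Z}+\half}\mathcal{P}_a(w)\mathcal{Q}_a(z)=\sum_{m\in\mathbb{Z}}z^{-m-1}w^m$ for the $i>j$ case uses the same anticommutator fact as the paper, just applied after conjugation rather than inside the vacuum expectation (where the paper writes $\widehat{A}^{leaky}_{k;r}(z_i,z_j)=-\langle0|e^{\alpha_1}e^{\hbar\mathcal{H}_{k;r}}\psi^*(z_i)\psi(z_j)|0\rangle$ and applies $\langle0|\psi^*_a\psi_b|0\rangle=\delta_{a,b}\delta_{b>0}$).

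One small slip: in your $i>j$ line the leading term should be $\sum_{a\in\mathbb{N}+\half}\mathcal{P}_{-a}(z_j)\mathcal{Q}_{-a}(z_i)$ rather than $A$. As written you subtract $\sum_{h}z_i^{-1-h}z_j^h$ twice, because $A$ already contains that series.
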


\noindent \emph{Proof}.
For the case of $N=1$,
by applying \eqref{eqn:conn n=1} and \eqref{eqn: Aleaky as tildePQ},
we have
\begin{align*}
    \sum_{n\geq1}n h^{k,r,\circ}_{(n)}&(\hbar)z^{-n-1}
    =\sum_{a\in\mathbb{N}+\half}\tilde{\mathcal{P}}_{-a}(z)\,\mathcal{Q}_{-a}(z)\\
    =&\sum_{\substack{a\in\mathbb{N}+\half, \ i,p,t,u\in\mathbb{N}\\ik+t\geq a+\frac{1}{2}}}(-1)^{p+u}\hbar^{i+p}\frac{z^{-(i+p)k-(t+u)-1}}{i!p!t!u!(r'!)^{i+p}}(-a-\frac{k}{2}+t+ik)_i^{r'}(-a-\frac{k}{2}-u)_p^{r'},
\end{align*}
which proves equation \eqref{eqn:leaky H N1}.

For $N\geq2$,
note that when $i<j$,
we have
$$\widehat{A}_{k;r}^{leaky}(z_i,z_j)
=\sum_{a\in\mathbb{N}+\half}\mathcal{P}_{-a}(z_j)\,\mathcal{Q}_{-a}(z_i)$$
where the equality comes from \eqref{eqn: Aleaky as PQ}.
When $i>j$,
we have
\begin{align*}
    \widehat{A}_{k;r}^{leaky}(z_i,z_j)&=\sum_{a\in\mathbb{N}+\half}\mathcal{P}_{-a}(z_j)\,\mathcal{Q}_{-a}(z_i)
    -\sum_{h\geq0}z_i^{-1-h}z_j^{h}-\sum_{h\geq 0} z_j^{-1-h}z_i^h\\
    &=-\langle0| e^{\alpha_1} e^{\hbar \mathcal{H}_{k;r}}  \psi^*(z_i)\psi(z_j) |0\rangle
    =-\sum_{a\in\mathbb{N}+\half}\mathcal{P}_{a}(z_j)\,\mathcal{Q}_{a}(z_i),
\end{align*}
where the last equality follows from Lemma \ref{lem:conj on psipsi*},
and we just need to note that $\langle0|\psi^*_a\psi_b|0\rangle=\delta_{a,b}\delta_{b>0}$.
As a consequence,
by directly applying the formula \eqref{eqn:conn n},
we have
\begin{align*}
    &\sum_{\mu_1,\cdots,\mu_N\geq1}^\infty \mathfrak{z}(\mu)h^{k,r,\circ}_{\mu}(\hbar)z_1^{-\mu_1-1}\cdots z_N^{-\mu_N-1}\\
    =&(-1)^{N-1}\cdot \sum_{\text{$N$-cycles } \sigma}
	\Big( \prod_{i=1}^N \widehat{A}^{leaky}_{k;r} (z_{\sigma^{i}(1)},z_{\sigma^{i+1}(1)}) \Big)
	-\delta_{N,2}\sum_{h\geq0}(h+1)z_1^{-2-h}z_2^h.
\end{align*}
Then equation \eqref{eqn:leaky H N} is obtained by taking the coefficient of $z_1^{-\mu_1-1}\cdots z_N^{-\mu_N-1}$ in the above formula.
Note that the term $\delta_{N,2}\sum_{h\geq0}(h+1)z_1^{-2-h}z_2^h$ does not contribute.\hfill$\square$

\subsection{The genus dependence of leaky Hurwitz numbers}
The $N$-point function derived in the last subsection is convenient to study the genus dependence of leaky Hurwitz numbers (see \cite{Y25}).
In this subsection,
we use it to study the maximal genus of leaky Hurwitz numbers with a given partition $\mu$.
To keep notation concise,
we only consider the case of $k=r=1$ in this subsection.

\begin{lemma}\label{lem:kr1 mach c}
For $k=r=1$,
the coefficient $c^{leaky}_{(m|n)}$ of hook type in the Schur polynomial expansion \eqref{eqn:leaky tau as schur} is a polynomial in $\hbar$ of degree $n+m$.
Moreover,
the top degree term is given by
\begin{align}
    [\hbar^{n+m}] c_{(m|n)}
    =\frac{n! m!}{2^{n+m}}.
\end{align}
\end{lemma}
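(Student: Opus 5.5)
The plan is to read off $c^{leaky}_{(m|n)}$ directly from the explicit fermionic two-point function of Proposition \ref{prop:Aleaky formula}, specialised to $k=r=1$, so that $r'=2$ and $\hat{b}=b-\frac12$. By \eqref{eqn:def Azw} we have $(-1)^n c^{leaky}_{(m|n)}=[z^{-n-1}w^{-m-1}]A^{leaky}_{1;1}(z,w)$. Expanding $\sum_{a\in\mathbb{N}+\half}\tilde{\mathcal{P}}_{-a}(w)\mathcal{Q}_{-a}(z)$ using \eqref{eqn:def tildeP} and \eqref{eqn:def Q}, the $w$-exponent $a-\frac12-i-t$ and the $z$-exponent $-a-\frac12-p-u$ force
\[
i+t=m+a+\tfrac12,\qquad p+u=n-a+\tfrac12 .
\]
Since $a\geq\frac12$ and $p,u\geq0$, this gives $a\leq n+\frac12$. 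So only finitely many tuples $(a,i,t,p,u)$ contribute, and $c^{leaky}_{(m|n)}$ is a polynomial in $\hbar$, whose $\hbar$-power is $i+p$.

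The key step, and the only real obstacle, is the degree bound: naively $i+p$ can reach $n+m+1$ (take $t=u=0$), so we must show the top terms vanish. For this we look at the $\tilde{\mathcal{P}}$ factor $\big(-a-\frac12+t+i\big)_i^{2}$ with the $1$-Pochhammer symbol. Its factors are the consecutive integers from $-a+\frac12+t$ up to $-a-\frac12+t+i$. The summation constraint $i+t\geq a+\frac12$ says the top of this range is $\geq0$. Hence the product vanishes unless the bottom is positive, i.e.\ unless $t\geq a+\frac12$. On the surviving terms we therefore get $i=m+a+\frac12-t\leq m$. For the $\mathcal{Q}$ factor $(-a-\frac12-u)_p^2$, all factors are strictly negative integers, so it never vanishes. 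Combining these, $i+p\leq m+(n-a+\frac12)\leq m+n$, with equality forcing $a=\frac12$, $t=1$, $i=m$, $u=0$ and $p=n$. This proves the degree statement, provided the top coefficient below is nonzero.

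Finally, we evaluate that unique top-degree term. The $\tilde{\mathcal{P}}$ part is $\frac{1}{m!\,1!\,2^m}\big((m)_m\big)^2=\frac{(m!)^2}{m!\,2^m}=\frac{m!}{2^m}$. The $\mathcal{Q}$ part is $\frac{(-1)^n}{n!\,2^n}\big((-1)_n\big)^2=\frac{(-1)^n (n!)^2}{n!\,2^n}=\frac{(-1)^n\,n!}{2^n}$. Their product is $(-1)^n\frac{n!\,m!}{2^{n+m}}$. Comparing with $(-1)^n c^{leaky}_{(m|n)}$ gives $[\hbar^{n+m}]c^{leaky}_{(m|n)}=\frac{n!\,m!}{2^{n+m}}$, which is nonzero, so the degree is exactly $n+m$.
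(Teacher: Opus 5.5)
Your proposal is correct and is essentially the paper's argument: you read off $(-1)^n c^{leaky}_{(m|n)}$ from the same conjugation formulas of Lemma \ref{lem:conj on psipsi*}. The only difference is packaging: you use $\sum_{a}\tilde{\mathcal{P}}_{-a}(w)\mathcal{Q}_{-a}(z)$, whereas the paper conjugates $\psi_{m+\frac12}$ and $\psi^*_{-n-\frac12}$ separately. Under your constraint $i+t=m+a+\frac12$, the factor $(-a-\frac12+t+i)_i$ is exactly the paper's $\prod_{j=0}^{i-1}(m-j)$. Hence your bound $i\le m$, the unique top term ($a=\frac12$, $t=1$, $u=0$, i.e.\ the paper's $t_1=1$, $t_2=0$) and its value $(-1)^n\, n!\,m!/2^{n+m}$ all coincide with the paper's.
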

\begin{proof}
By Proposition \ref{prop:Aleaky formula} and Lemma \ref{lem:conj on psipsi*} with $k=r=1$,
we have
\begin{align*}
    (-1)^n &c_{(m|n)}
    =\langle0|e^{\alpha_1} e^{\hbar \mathcal{H}_{k;r}}  \psi_{m+\half}\psi^*_{-n-\half}|0\rangle\\
    =&\sum_{i,p,t_1,t_2\geq0}
    \frac{\hbar^i\prod_{j=0}^{i-1}(m-j)^2}{i!t_1!2^i}
    \frac{(-\hbar)^p(-1)^{t_2}\prod_{j=0}^{p-1}(n-j)^2}{p!t_2!2^p}\langle0|\psi_{m+\half-i-t_1}\psi^*_{-n-\half+p+t_2}|0\rangle.
\end{align*}
We now analyse the non-zero terms in the above sum.
First,
we have $i\leq m, p\leq n$.
Second,
the formula $\langle0|\psi_{-b-\half}\psi^*_{-a-\half}|0\rangle=\delta_{a,b}\delta_{a\geq0}$
forces $m+\half-i-t_1=-n-\half+p+t_2<0$ in the sum above.
That gives $i+t_1+p+t_2=m+n+1, t_1\geq1$.
As a consequence,
the power of $\hbar$ is $i+p\leq n+m$,
with equality only when $i=m$ and $p=n$,
which implies $t_1=1, t_2=0$.
The corresponding coefficient is
\begin{align*}
    (-1)^n [\hbar^{n+m}] c_{(m|n)}
    =\frac{(m!)^2}{m!2^m}
    \cdot\frac{(-1)^{n}(n!)^2}{n!2^n}
    =(-1)^n\frac{n! m!}{2^{n+m}}.
\end{align*}
\end{proof}

\begin{theorem}\label{thm:genus dependence}
For $k=r=1$ and $\mu=(\mu_1,\dots,\mu_N)$, the connected series $h^{1,1,\circ}_\mu(\hbar)$ is non-zero only if $|\mu|\equiv N\pmod 2$,
then it is a polynomial in $\hbar$ of degree at most $|\mu|-1$.
We denote $g_{\max}:=\frac{|\mu|-N}{2}$,
then $h^{1,1,\circ}_{g,\mu}=0$ if $g>g_{\max}$
and
\begin{align}\label{eqn:top N}
    h^{1,1,\circ}_{g_{\max},\mu}
    =\frac{(|\mu|-1)!}{\mathfrak{z}(\mu)}
    \cdot \frac{\sum_{S\subseteq[N]}(-1)^{|S|+\sum_{j\in S}\mu_j} \big(\sum_{j\in S}\mu_j\big)! \big(\sum_{j\in S^c}\mu_j\big)!}{2^{|\mu|-1}(|\mu|+1)}.
\end{align}
In particular, for $l(\mu)=N=1$,
\begin{align*}
    h^{1,1,\circ}_{g_{\max},(2n+1)}
    =h^{1,1,\circ}_{n,(2n+1)}=\frac{((2n)!)^2}{4^{n}(n+1)},
\end{align*}
and for $l(\mu)=N=2, 3$,
we have
\begin{align*}
    h^{1,1,\circ}_{g_{\max},(\mu_1,\mu_2)}
    =h^{1,1,\circ}_{(|\mu|-2)/2,(\mu_1,\mu_2)}
    =&\frac{(|\mu|-1)!}{\mathfrak{z}(\mu)}
    \cdot \dfrac{|\mu|!-(-1)^{\mu_1}\mu_1! \mu_2!}{(|\mu|+1) 2^{|\mu|-2}},\\
    h^{1,1,\circ}_{g_{\max},(\mu_1,\mu_2,\mu_3)}
    =h^{1,1,\circ}_{(|\mu|-3)/2,(\mu_1,\mu_2,\mu_3)}
    =&\frac{(|\mu|-1)!}{\mathfrak{z}(\mu)}
    \cdot \frac{|\mu|!-\sum_{i=1}^3(-1)^{\mu_i}\mu_i!(|\mu|-\mu_i)!}
    {(|\mu|+1)2^{|\mu|-2}}.
\end{align*}
\end{theorem}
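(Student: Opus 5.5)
The plan is to work with the KP structure rather than the connected $N$-point formula directly. By Lemma \ref{lem:kr1 mach c} and the determinantal formula \eqref{eqn:clambda by c}, every Schur coefficient $c^{leaky}_\lambda$ is a polynomial in $\hbar$. For $\lambda=(m_1,\dots,m_l|n_1,\dots,n_l)$ its degree is at most $\sum_i(m_i+n_i)=|\lambda|-l\le|\lambda|-1$.

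\textbf{Step 1: parity and degree bound.} Energy considerations show that $\langle0|e^{\alpha_1}e^{\hbar\mathcal{H}_{1,1}}|\lambda\rangle$ has $\hbar$-degree $s\le|\lambda|$. Moreover, by \eqref{eq: relation of g and s} with $k=r=1$ we have $2s=2g-2+N+|\mu|$. Hence $|\mu|\equiv N\pmod 2$ is forced, and $h^{1,1,\circ}_\mu(\hbar)$ is a polynomial in $\hbar$.

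To bound its degree, I would take logarithms. In $\tau=\exp(H)$, the coefficient of $\prod t_{\mu_i}$ in $\log\tau$ is a signed sum of products of disconnected coefficients over set partitions. The disconnected coefficients have degree at most $|\mu|-1$, from the Schur expansion and the bound $|\lambda|-l$. A product over $p$ blocks therefore has degree at most $|\mu|-p$. I therefore expect the top degree $|\mu|-1$ to come only from single hooks and one block, though products of blocks can tie and need care. This degree bound gives $s\le|\mu|-1$, i.e.\ $g\le g_{\max}=\frac{|\mu|-N}{2}$.

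\textbf{Step 2: the top coefficient.} It is cleaner to use Theorem \ref{thm: conne N point func}. Write $\widehat A=\widehat A^{(0)}+A$, where $A(z,w)=\sum(-1)^nc_{(m|n)}z^{-n-1}w^{-m-1}$. I would extract the top $\hbar$-part
\[
A^{top}(z,w)=\sum_{m,n}(-1)^n\frac{n!m!}{2^{n+m}}\hbar^{n+m}z^{-n-1}w^{-m-1}
\]
using Lemma \ref{lem:kr1 mach c}. In the cyclic product formula \eqref{eqn:conn n}, degree counting suggests that the coefficient of $\hbar^{|\mu|-1}\prod z_i^{-\mu_i-1}$ comes from terms where exactly one factor along the cycle is a top-degree $A$. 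All the other factors are then the $\hbar$-free propagators $\widehat A^{(0)}$, so each term loses exactly one power relative to $|\mu|$.

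The propagators $\sum_h z_i^{-1-h}z_j^h$ act as shift operators: composing them around the cycle transfers the exponent, so the remaining $A^{top}$ factor carries the total $|\mu|$. The sign pattern ($i<j$ versus $i>j$) and the sum over $N$-cycles should then produce inclusion–exclusion over the subset $S$ of variables routed through the $z$-side versus the $w$-side. This yields terms of the shape $(-1)^{|S|+\sum_S\mu_j}(\sum_S\mu_j)!(\sum_{S^c}\mu_j)!/2^{|\mu|-1}$.

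The factor $(|\mu|-1)!/(|\mu|+1)$ presumably arises because the Hurwitz number is $s!$ times the $\hbar^s$-coefficient, with $s=|\mu|-1$. The $1/(|\mu|+1)$ should come from summing over the splitting position $n+m+1=|\mu|$ weighted by a ratio of factorials, via a beta-integral identity of the form $\sum_{m}\frac{1}{\binom{M}{m}}$-type sums.

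\textbf{Main obstacle.} The hardest step is the combinatorial bookkeeping of the cycle sum with the normal-ordering signs, which must collapse to the clean subset sum. I would first verify the cases $N=1$ and $N=2$ by hand against the stated special cases, then attempt induction on $N$ by tracking how inserting a new propagator splits $S$.
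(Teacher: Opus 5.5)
Your Step 1 is sound, and slightly stronger than you think. Every disconnected series $h_{\mu_J}(\hbar)$ has degree at most $|\mu_J|-1$, so a product over $p\ge 2$ blocks has degree at most $|\mu|-p\le|\mu|-2$. Hence nothing can tie at degree $|\mu|-1$, and the bound $g\le g_{\max}$ follows; the paper gets the same bound by counting $|I|$ in the cycle formula.

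The genuine gap is Step 2, which carries the whole content of the closed formula. You assert that the cycle sum, with its $i<j$ versus $i>j$ propagators, ``should'' organize into inclusion--exclusion over $S$, but you give no mechanism. The proposed induction on $N$, tracking how a new propagator ``splits $S$'', has no evident reason to work, because $S$ does not come from the cycle structure at all. The paper's argument rests on three ideas missing from your plan:
(i) In the region $|z_1|>\cdots>|z_N|$, both propagator cases are expansions of the same rational function $1/(z_{w(i)}-z_{w(i+1)})$. So the pairs $(\sigma,i_0)$ become permutations $w\in S_N$, and the signs disappear.
(ii) The top part of $A$ factorizes as $U(z)V(w)$, with $U(z)=\int_0^\infty\frac{2e^{-2u}\,du}{z+\hbar u}$ and $V(w)=\int_0^\infty\frac{2e^{-2v}\,dv}{w-\hbar v}$, so every factor becomes a simple pole.
(iii) The partial-fraction identity of Lemma \ref{lem:sum w} collapses the sum over $w\in S_N$ to $\prod_j\frac{a-b}{(a-z_j)(z_j-b)}$, with $a=\hbar v$ and $b=-\hbar u$.
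Coefficient extraction then gives the integrand $\prod_j\big(v^{\mu_j}-(-u)^{\mu_j}\big)/(u+v)$. The subset sum over $S$ is just the expansion of this product. The factor $1/(|\mu|+1)$ comes from writing $\frac{1}{u+v}=\int_0^\infty e^{-(u+v)t}\,dt$ and integrating $4(2+t)^{-|\mu|-2}$; it does not come from an identity you have specified.

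Your own Step 1 already contains a route that avoids the $N$-point function altogether, though you did not use it. Since products over $p\ge 2$ blocks cannot reach degree $|\mu|-1$, the top coefficient of $h^{1,1,\circ}_\mu(\hbar)$ equals that of the disconnected series $h_\mu(\hbar)=\frac{1}{\mathfrak{z}(\mu)}\sum_{|\lambda|=|\mu|}c_\lambda\chi^\lambda(\mu)$. Only hooks $(m|n)$ with $m+n=|\mu|-1$ contribute there, each with top coefficient $m!n!/2^{|\mu|-1}$. Combine two facts:
\begin{itemize}
\item the classical hook-character generating function $\sum_{n}\chi^{(d-1-n|n)}(\mu)(-q)^n=\prod_j(1-q^{\mu_j})/(1-q)$, with $d=|\mu|$;
\item the integral representation $m!n!=\int_0^\infty\!\int_0^\infty u^nv^me^{-u-v}\,du\,dv$.
\end{itemize}
Together they give the same integrand $\prod_j(v^{\mu_j}-(-u)^{\mu_j})/(u+v)$, up to rescaling $u,v$, and the computation then finishes as in the paper. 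As written, however, your proposal establishes the parity statement and the degree bound, but not the formula for $h^{1,1,\circ}_{g_{\max},\mu}$ or its special cases.
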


\noindent \emph{Proof.}
First note that in the case of $k=r=1$,
we have $s=g-1+\frac{N+|\mu|}{2}$.
Thus Hurwitz numbers are non-zero only if $N+|\mu|\equiv 0 (\mod 2)$.
When $l(\mu)=N=1$ and $\mu=(\mu_1)$,
the integer $\mu_1$ should be odd.
We directly apply the formula \eqref{eqn:def Azw} and Lemma \ref{lem:kr1 mach c} to obtain that
$h^{1;1;\circ}_{(\mu_1)}(\hbar)$ is a polynomial in $\hbar$ with top term $\hbar^{\mu_1-1}$,
which means the maximal $s_{max}=\mu_1-1$.
Equivalently,
the maximal genus should be
$g_{max}=s_{max}+1-\frac{1+\mu_1}{2}
=\frac{\mu_1-1}{2}$.
We denote $\mu_1=2n+1$,
then
\begin{align*}
    h^{1,1}_{g_{max},(2n+1)}
    =&h^{1,1}_{n,(2n+1)}
    =[\hbar^{2n}] \frac{(2n)!}{2n+1} \sum_{m=0}^{2n} (-1)^{2n-m} c_{(m|2n-m)}\\
    =&\frac{(2n)!}{(2n+1)2^{2n}} \sum_{m=0}^{2n} (-1)^{m} (2n-m)!m!
    =\frac{((2n)!)^2}{4^n(n+1)},
\end{align*}
where we have used the identity $\sum_{m=0}^{2n} (-1)^{m}(2n-m)!m! = \frac{(2n+1)!}{n+1}$. This identity is proved using the Beta function
\begin{equation}\label{eq:betafunction}
    B(m,n)=\int_{0}^1x^{m-1}(1-x)^{n-1}dx=\frac{(m-1)!(n-1)!}{(m+n-1)!}.
\end{equation}

Next,
we deal with the case of $l(\mu)=N\geq2$.
We need to use the formula \eqref{eqn:leaky H N}.
Then each term contributing to the genus-generating function of leaky Hurwitz numbers $h^{1;1;\circ}_{\mu}(\hbar)$ is obtained by choosing an $N$-cycle and computing products of
\begin{align*}
    \widehat{A}_{1,1}^{leaky}(z_{\sigma^i(1)},z_{\sigma^{i+1}(1)})=&A_{1,1}^{leaky}(z_{\sigma^i(1)},z_{\sigma^{i+1}(1)})+\begin{cases}
        \sum\limits_{h\geq 0} z_{\sigma^{i}(1)}^{-1-h}z_{\sigma^{i+1}(1)}^h, &\text{if}\ \sigma^{i}(1)<\sigma^{i+1}(1),\\
        -\sum\limits_{h\geq 0} z_{\sigma^{i+1}(1)}^{-1-h}z_{\sigma^{i}(1)}^h, &\text{if}\ \sigma^{i}(1)>\sigma^{i+1}(1),
    \end{cases}
\end{align*}
for $i=1,\dots,N$.
For each given $N$-cycle $\sigma$ and a subset $I\subseteq[N]$,
define the corresponding contribution by
\begin{align}\label{eqn:Cont_sigmaI}
\begin{split}
    \text{Cont}_{\sigma,I}&
    :=\frac{(-1)^{N-1}s!}{\mathfrak{z}(\mu)}
	\cdot [\prod_{i=1}^N z_i^{-\mu_i-1}]
	\prod_{i\in I} \bigg(\sum_{n_i,m_i\geq0}(-1)^{n_i}c_{(m_i|n_i)}z_{\sigma^i(1)}^{-n_i-1}z_{\sigma^{i+1}(1)}^{-m_i-1}\bigg)\\
	&\ \ \ \ \cdot\prod_{i\notin I}
	\Big(\delta_{\sigma^{i}(1)<\sigma^{i+1}(1)}
	\cdot\sum\limits_{h_i\geq 0} z_{\sigma^{i}(1)}^{-1-h_i}z_{\sigma^{i+1}(1)}^{h_i}
	-\delta_{\sigma^{i}(1)>\sigma^{i+1}(1)}
	\cdot\sum\limits_{h_i\geq 0} z_{\sigma^{i+1}(1)}^{-1-h_i}z_{\sigma^{i}(1)}^{h_i}\Big).
\end{split}
\end{align}
Then we have
$h^{1,1,\circ}_{\mu}(\hbar)=\sum_{\sigma,I}\text{Cont}_{\sigma,I}$.
Note that each term in the second line of equation \eqref{eqn:Cont_sigmaI} always has $z$-exponent $-1$,
and in the contribution $\text{Cont}_{\sigma,I}$,
we only need the term whose total $z$-exponent is
$$\sum_{i=1}^N(-\mu_i-1)=-(|\mu|+N)=-\sum_{i\in I}(n_i+m_i+2)-(N-|I|).$$
Hence we have $\sum_{i\in I}(n_i+m_i)=|\mu|-|I|$,
which is the maximal power of $\hbar$ of that term by Lemma \ref{lem:kr1 mach c}.
The contribution with $I=\emptyset$ always has $z$-exponent $-N\neq -(|\mu|+N)$,
so it must be zero.
Then the maximal power of $\hbar$ cannot be greater than $|\mu|-1$ and the term whose power is $|\mu|-1$ comes from the contributions $\text{Cont}_{\sigma,I}$ satisfying $|I|=1$.
In this case,
the corresponding maximal genus is $g_{max}=|\mu|-\frac{N+|\mu|}{2}=\frac{|\mu|-N}{2}$.

Below,
we explicitly compute the contributions.
We denote $I=\{i_0\}$.
Then in this case the leaky Hurwitz numbers can be represented by
\begin{align}\label{eqn:h11 as i0}
    h^{1,1,\circ}_{g_{\max},\mu}
    =[\hbar^{|\mu|-1}]\sum_{N-cycles\ \sigma} \sum_{i_0=1}^N \text{Cont}_{\sigma,\{i_0\}}.
\end{align}
For each pair $(\sigma,i_0)$,
we associate a permutation $w\in S_N$ such that $w(i)=\sigma^{i+i_0}(1)$.
Actually, $w$ is the unique permutation satisfying $w(N)=\sigma^{i_0}(1)$ and 
$$w(1)\rightarrow w(2) \rightarrow \dots \rightarrow w(N)\rightarrow w(1)
$$
exactly represents the $N$-cycle $\sigma$.
Moreover,
note that from Lemma \ref{lem:kr1 mach c},
the $\hbar$-leading part of $A_{1,1}^{leaky}(z,w)$ factorizes as $U(z)V(w)$ with
\begin{align}
    U(z)=&\sum_{n\geq0}\frac{(-1)^n n!}{2^n}\hbar^n z^{-n-1}
    =\int_0^\infty \frac{2e^{-2u}du}{z+\hbar u},\label{eq:Uz}\\
    V(w)=&\sum_{m\geq0}\frac{m!}{2^m}\hbar^m w^{-m-1}
    =\int_0^\infty \frac{2e^{-2v}dv}{w-\hbar v}.\label{eq:Vz}
\end{align}
These can be seen by replacing the factorials on the left hand side by
$n!=\int_0^\infty t^ne^{-t}dt$. Note that whenever $\hbar$ appears in the denominator, we always regard it as expanded as a power series in $\hbar$, in other words, the expansion is such that all powers of $\hbar$ are nonnegative.
Thus, from equation \eqref{eqn:h11 as i0} and the correspondence between $(\sigma,i_0)$ and $w\in S_N$, and assuming that $|z_1|>|z_2|>\cdots>|z_N|$,
we have
\begin{align*}
    h^{1,1,\circ}_{g_{\max},\mu}
    =&\frac{(-1)^{N-1}s!}{\mathfrak{z}(\mu)}
	\cdot [\hbar^{|\mu|-1} \prod_{j=1}^N z_j^{-\mu_j-1}]
	\sum_{w\in S_N} U(z_{w(N)})V(z_{w(1)})
    \cdot \prod_{i=1}^{N-1} \frac{1}{z_{w(i)}-z_{w(i+1)}}.\\
    =&\frac{(-1)^{N-1}s!}{\mathfrak{z}(\mu)}
	\cdot [\hbar^{|\mu|-1} \prod_{j=1}^N z_j^{-\mu_j-1}]\\
    &\cdot\int_0^\infty\int_0^\infty\sum_{w\in S_N}\frac{-4e^{-2(u+v)}}{(\hbar v-z_{w(1)})
    \cdot \prod_{i=1}^{N-1}(z_{w(i)}-z_{w(i+1)})\cdot (z_{w(N)}+\hbar u)}dudv.
\end{align*}
By Lemma \ref{lem:sum w}, this equals
\begin{align*}
    &\frac{s!}{\mathfrak{z}(\mu)}
	\cdot [\hbar^{|\mu|-1} \prod_{j=1}^N z_j^{-\mu_j-1}]
    \int_{0}^\infty \int_{0}^{\infty} \frac{\hbar^{N-1}(u+v)^{N-1}}{\prod_{j=1}^N (z_j-\hbar v)(z_j+\hbar u)}
    4e^{-2(u+v)} dudv\\
    =&\frac{s!}{\mathfrak{z}(\mu)}
	\cdot [\hbar^{|\mu|-1} ]\int_{0}^\infty \int_{0}^{\infty}\hbar^{N-1}(u+v)^{N-1}4e^{-2(u+v)}\prod_{j=1}^N\frac{(\hbar v)^{\mu_j}-(-\hbar u)^{\mu_j}}{\hbar(u+v)}dudv\\
    =&\frac{s!}{\mathfrak{z}(\mu)}
    \int_{0}^\infty \int_{0}^{\infty} \frac{\prod_{j=1}^N(v^{\mu_j}-(-u)^{\mu_j})}{u+v}
    4e^{-2(u+v)} dudv.
\end{align*}
We expand the term
\begin{equation}\label{eq:expansion}
    \prod_{j=1}^N(v^{\mu_j}-(-u)^{\mu_j})
=\sum_{S\subseteq[N]} v^{\sum_{j\in S^c} \mu_j} (-1)^{\sum_{j\in S}(\mu_j+1)} u^{\sum_{j\in S} \mu_j}
\end{equation}
and use $\frac{1}{u+v}=\int_0^\infty e^{-(u+v)t}dt$ to obtain
\begin{align*}
    h^{1,1,\circ}_{g_{\max},\mu}
    =&\frac{(|\mu|-1)!}{\mathfrak{z}(\mu)}
    \cdot \int_{0}^\infty \int_{0}^{\infty} \int_{0}^{\infty}\sum\limits_{S\subseteq[N]} v^{\sum\limits_{j\in S^c} \mu_j} (-1)^{\sum\limits_{j\in S}(\mu_j+1)} u^{\sum\limits_{j\in S} \mu_j}
    \cdot4e^{-2(u+v)} e^{-(u+v)t}dtdudv\\
    =&\frac{(|\mu|-1)!}{\mathfrak{z}(\mu)}
    \cdot \sum_{S\subseteq[N]} (-1)^{\sum_{j\in S}(\mu_j+1)}\big(\sum_{j\in S}\mu_j\big)! \big(\sum_{j\in S^c}\mu_j\big)!
    \int_0^\infty 4(2+t)^{-|\mu|-2}dt\\
    =&\frac{(|\mu|-1)!}{\mathfrak{z}(\mu)}
    \cdot \frac{\sum_{S\subseteq[N]} (-1)^{\sum_{j\in S}(\mu_j+1)}\big(\sum_{j\in S}\mu_j\big)! \big(\sum_{j\in S^c}\mu_j\big)!}
    {2^{|\mu|-1}(|\mu|+1)},
\end{align*}
where we have used $\int_0^\infty u^a e^{-bu}du=a!b^{-a-1}$.\hfill$\square$

\begin{lemma}\label{lem:sum w}
For any positive integer $N$,
we have
\begin{align}\label{eqn: sum w}
    \sum_{w\in S_N}\frac{(a-b)}{(a-z_{w(1)})\cdot \prod\limits_{i=1}^{N-1}(z_{w(i)}-z_{w(i+1)})\cdot (z_{w(N)}-b)}
    =\prod_{j=1}^N\frac{(a-b)}{(a-z_j)(z_j-b)}.
\end{align}
\end{lemma}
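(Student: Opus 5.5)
We prove the identity by induction on $N$, treating both sides as rational functions of $z_1,\dots,z_N$ with $a,b$ as parameters. For $N=1$ the sum has a single term $w=\mathrm{id}$, and the left hand side is $\frac{a-b}{(a-z_1)(z_1-b)}$, which is exactly the right hand side.

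For the inductive step, the natural choice is to sum over $w$ according to the value $j=w(N)$, i.e. the last element of the chain. The permutations with $w(N)=j$ correspond to permutations $w'$ of $[N]\setminus\{j\}$ (indices $w(1),\dots,w(N-1)$). The corresponding partial sum is
\[
\frac{1}{z_j-b}\sum_{w'}\frac{1}{(a-z_{w'(1)})\prod_{i=1}^{N-2}(z_{w'(i)}-z_{w'(i+1)})\,(z_{w'(N-1)}-z_j)}.
\]
This is the inductive expression with $b$ replaced by $z_j$, divided by $(a-z_j)$. The induction hypothesis, applied with the pair $(a,z_j)$ and the variables $\{z_m\}_{m\neq j}$, then gives
\[
\sum_{w'}\cdots=\frac{1}{a-z_j}\prod_{m\neq j}\frac{a-z_j}{(a-z_m)(z_m-z_j)}.
\]
Hence, after multiplying by $(a-b)$, the left hand side becomes
\[
(a-b)\sum_{j=1}^N\frac{(a-z_j)^{N-2}}{(z_j-b)}\prod_{m\neq j}\frac{1}{(a-z_m)(z_m-z_j)}.
\]

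It then remains to show that this equals $\prod_j\frac{a-b}{(a-z_j)(z_j-b)}$. Equivalently, after multiplying by $\prod_m(a-z_m)$ and dividing by $(a-b)$, we need
\[
\sum_j\frac{(a-z_j)^{N-1}}{(z_j-b)\prod_{m\neq j}(z_m-z_j)}=\frac{(a-b)^{N-1}}{\prod_m(z_m-b)}.
\]
This is a partial-fraction / residue identity. Consider the rational function $f(x)=\frac{(a-x)^{N-1}}{(x-b)\prod_m(z_m-x)}$. It has degree $-2$ at infinity, so the sum of its residues vanishes. Its residues at $x=z_j$ give minus the $j$-th summand, and its residue at $x=b$ gives the right hand side. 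Both signs still have to be checked.

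The main obstacle is only the bookkeeping of signs in this residue computation and in the rewriting step of the inductive reduction. If the signs turn out to be off, I will check the identity directly for $N=2$ to fix them. The identity is a polynomial identity after clearing denominators, so proving it for generic distinct $z_j$ suffices.
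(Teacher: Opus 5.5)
Your proof is correct, and the deferred sign check works out with no adjustment. Since $\frac{1}{z_j-x}=-\frac{1}{x-z_j}$, the residue of $f$ at $x=z_j$ is exactly minus the $j$-th summand. The residue at $x=b$ is $\frac{(a-b)^{N-1}}{\prod_m(z_m-b)}$. Because $f(x)=O(x^{-2})$, the vanishing of the total residue gives precisely the identity you need. Equivalently, it is the partial fraction expansion of $\frac{(a-x)^{N-1}}{\prod_m(z_m-x)}$ evaluated at $x=b$. Your inductive reduction, grouping by $w(N)=j$ and substituting $b\mapsto z_j$ into the hypothesis for $N-1$, is also computed correctly.

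The route differs from the paper's mainly in where the work is done. The paper regards the left hand side as a rational function of $a$. Its residue at $a=z_k$ only sees permutations with $w(1)=k$, so the induction hypothesis (with $a$ replaced by $z_k$) shows that both sides have the same residues. The difference is then a polynomial in $a$ that stays bounded as $a\to\infty$, hence is constant, and it vanishes at $a=b$.

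You peel off the other end of the chain, $w(N)$, and apply the induction hypothesis exactly rather than only at the level of residues. This resums the left hand side into an explicit closed form, and you finish with a Lagrange-interpolation identity in the variable $b$. The paper's version needs no auxiliary identity and closes with a short Liouville-type step. Yours is a purely algebraic computation that yields an explicit intermediate formula. In effect, both arguments use partial fractions, the paper's in $a$ and yours in $b$.
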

\begin{proof}
We prove this lemma by induction on $N$.
The $N=1$ case is trivial.
We assume this lemma holds for $N-1$.
As a function of $a$,
the left hand side of equation \eqref{eqn: sum w}
\begin{align*}
    L_N(a,b;z_1,\dots,z_N)
    :=\sum_{w\in S_N}\frac{(a-b)}{(a-z_{w(1)})\cdot \prod_{i=1}^{N-1}(z_{w(i)}-z_{w(i+1)})\cdot (z_{w(N)}-b)}
\end{align*}
only has possible simple poles at $z_k, k=1,\dots, N$.
Note that when computing its residue at the point $a=z_k$,
only the permutations $w\in S_N$ satisfying $w(1)=k$ make non-zero contribution.
Thus,
\begin{align*}
    \text{Res}_{a=z_k} L_N(a,b;&z_1,\dots,z_N)
    =\sum_{w\in S_N \atop w(1)=k}\frac{(z_k-b)}{\prod_{i=1}^{N-1}(z_{w(i)}-z_{w(i+1)})\cdot (z_{w(N)}-b)}\\
    =&L_{N-1}(z_k,b;z_1,\dots,\widehat{z_k},\dots,z_N)
    =\prod_{j=1, j\neq k}^{N}\frac{(z_k-b)}{(z_k-z_j)(z_j-b)},
\end{align*}
where the last equality follows from the induction hypothesis.
For the right hand side of equation \eqref{eqn: sum w},
by direct computation,
we also have
\begin{align*}
    \text{Res}_{a=z_k} \prod_{j=1}^N\frac{(a-b)}{(a-z_j)(z_j-b)}
    =\prod_{j=1, j\neq k}^{N}\frac{(z_k-b)}{(z_k-z_j)(z_j-b)}.
\end{align*}
Therefore the difference between two sides of equation \eqref{eqn: sum w} has no finite poles in $a$,
hence is a polynomial $P(a)$ in $a$.
As $a\rightarrow\infty$, both sides tend to finite limits, so $P(a)$ is constant.
Setting $a=b$, both sides are equal to zero,
thus we have $P(a)\equiv0$.
This lemma is then proved.
\end{proof}

The following proposition completes the proof of Theorem \ref{thm:main thm 3}.

\begin{proposition}
The leaky Hurwitz number $h^{1,1,\circ}_{g_{\max},\mu}$ with $g_{\max}=\frac{|\mu|-N}{2}$ is
strictly positive.
In particular, the genus-generating series $h^{1,1,\circ}_{\mu}(\hbar)$ has degree exactly $|\mu|-1$ in $\hbar$.
\end{proposition}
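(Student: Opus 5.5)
Since the sum over $S$ in \eqref{eqn:top N} has alternating signs, I will not work with it directly. Instead I will go back to the integral representation obtained in the proof of Theorem \ref{thm:genus dependence}:
\[
h^{1,1,\circ}_{g_{\max},\mu}=\frac{(|\mu|-1)!}{\mathfrak{z}(\mu)}\int_0^\infty\!\!\int_0^\infty \frac{\prod_{j=1}^N\big(v^{\mu_j}-(-u)^{\mu_j}\big)}{u+v}\,4e^{-2(u+v)}\,du\,dv .
\]
The plan is to show that this integral is strictly positive. The parity condition $|\mu|\equiv N \pmod 2$ is assumed throughout.

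First, I symmetrize under the swap $u\leftrightarrow v$. Each factor transforms as
\[
v^{\mu_j}-(-u)^{\mu_j}\;\mapsto\; u^{\mu_j}-(-v)^{\mu_j}=(-1)^{\mu_j+1}\big(v^{\mu_j}-(-u)^{\mu_j}\big),
\]
so the whole product picks up the sign $(-1)^{|\mu|+N}=1$. Hence the integrand is symmetric, which is consistent but does not yet give a sign.

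Second, I substitute $u=\rho x$, $v=\rho(1-x)$ with $\rho>0$ and $x\in[0,1]$. The Jacobian is $\rho$, and the exponential becomes $e^{-2\rho}$. Each factor becomes
\[
\rho^{\mu_j}\big((1-x)^{\mu_j}-(-x)^{\mu_j}\big).
\]
The integral therefore factors as
\[
\Big(\int_0^\infty \rho^{|\mu|}\,4e^{-2\rho}\,d\rho\Big)\int_0^1\prod_{j=1}^N f_{\mu_j}(x)\,dx,\qquad f_m(x):=(1-x)^m-(-x)^m .
\]
The radial factor is positive, so everything reduces to the sign of $\int_0^1\prod_j f_{\mu_j}(x)\,dx$.

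Third, I determine the sign of each $f_m$ on $(0,1)$:
\begin{itemize}
\item If $m$ is odd, then $f_m(x)=(1-x)^m+x^m>0$.
\item If $m$ is even, then $f_m(x)=(1-x)^m-x^m$, which is positive on $[0,\tfrac12)$, negative on $(\tfrac12,1]$, and satisfies $f_m(1-x)=-f_m(x)$.
\end{itemize}
Let $e$ be the number of even parts. Then $|\mu|\equiv N \pmod 2$ forces $e$ to be even. The product $\prod_j f_{\mu_j}$ is then a product of positive factors and an even number of factors that all change sign together at $x=\tfrac12$. So the product is nonnegative on $[0,1]$, vanishes only at $x=\tfrac12$ (and only when $e>0$), and is continuous and nonzero elsewhere. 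Its integral is therefore strictly positive.

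This gives $h^{1,1,\circ}_{g_{\max},\mu}>0$. Combined with Theorem \ref{thm:genus dependence}, which says the $\hbar$-degree is at most $|\mu|-1$ and that the coefficient of $\hbar^{|\mu|-1}$ equals $h^{1,1,\circ}_{g_{\max},\mu}/s!$, it follows that the degree is exactly $|\mu|-1$.

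The main obstacle is justifying that the integral formula really equals the combinatorial expression; that interchange was already carried out in the theorem's proof, so it can be cited. The remaining computation of the signs of $f_m$ is routine.
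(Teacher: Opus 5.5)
Your proof is correct and is essentially the paper's argument. Both reduce positivity to the sign of $\int_0^1\prod_{j}\big((1-x)^{\mu_j}-(-x)^{\mu_j}\big)\,dx$, then use the same observation: odd parts give positive factors, and the even number of even parts all change sign together at $x=\tfrac12$.

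The only difference is how you reach that integral. The paper applies the Beta function to the finite sum $\Sigma(\mu)$ in \eqref{eqn:top N}, while you substitute $u=\rho x$, $v=\rho(1-x)$ in the double integral from the proof of Theorem~\ref{thm:genus dependence}; this is the same Beta--Gamma identity in another form. One small caveat: for $N=1$ that double integral is not literally derived in the paper, but positivity there is immediate from $h^{1,1,\circ}_{n,(2n+1)}=((2n)!)^2/(4^n(n+1))$.
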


\begin{proof}
We will use equation \eqref{eqn:top N} to compute $h^{1,1,\circ}_{g_{\max},\mu}$.
Put
\begin{align*}
\Sigma(\mu):=
\sum_{S\subseteq[N]} (-1)^{|S|+\sum_{j\in S}\mu_j}
(\sum_{j\in S}\mu_j)! (\sum_{j\in S^c}\mu_j)!.
\end{align*}
Since other factors in \eqref{eqn:top N} is positive, 
we need only show that $\Sigma(\mu)>0$. 
Using equation \eqref{eq:betafunction}, we obtain
\begin{align*}
\Sigma(\mu)&=(d+1)!\int_0^1
\sum_{S\subseteq[N]}(-1)^{|S|+\sum_{j\in S}\mu_j}x^{\sum_{j\in S}\mu_j}
(1-x)^{\sum_{j\in S^c}\mu_j}\,dx \\
&=(d+1)!\int_0^1\prod_{j=1}^N\big((1-x)^{\mu_j}-(-x)^{\mu_j}\big)\,dx,
\end{align*}
where the second equality is obtained by letting $v=1-x$ and $u=x$ in equation \eqref{eq:expansion}.

Since $|\mu|-N=\sum_{j=1}^N(\mu_j-1)$ is even, there are even number of even $\mu_j$. If $\mu_j$ is odd, then
\begin{equation}\label{eq:component}
(1-x)^{\mu_j} + (-1)^{\mu_j+1}x^{\mu_j}
\end{equation}
is positive for every $x\in(0,1)$. If $\mu_j$ is even, then \eqref{eq:component} is positive on
$(0,\frac{1}{2})$, and is negative on
$(\frac{1}{2},1)$. So the product
\[\prod_{j=1}^N
\big((1-x)^{\mu_j} - (-x)^{\mu_j} \big)\]
is positive away from $1/2$. Therefore, we have $\Sigma(\mu)>0$.
\end{proof}

\bibliographystyle{plain}
\bibliography{reference}

\vspace{30pt} \noindent
Chongyu Wang \\
Academy of Mathematics and Systems Science, \\
Chinese Academy of Sciences, Beijing, China. \\
Email: {\it wangcyu@pku.edu.cn}

\vspace{30pt} 
\noindent Chenglang Yang \\
Institute for Math and AI, Wuhan University, Wuhan, China.\\
Email: {\it yangcl@whu.edu.cn}

\end{document}